\let\vec\mathbf
\def\maketag@@@#1{\hbox{\m@th\normalfont\normalsize#1}}
\crefname{appendix}{Appendix}{Appendices}
\crefname{equation}{Eq.}{Eqs.}
\crefname{figure}{Fig.}{Figs.}
\crefname{table}{Table}{Tables}
\crefname{section}{Section}{Sections}
\crefname{enumi}{Point}{Points}
\renewcommand\onecolumngrid{
\do@columngrid{one}{\@ne}%
\def\set@footnotewidth{\onecolumngrid}
\def\footnoterule{\kern-6pt\hrule width 1.5in\kern6pt}%
}
\newtheorem*{lemma*}{Lemma}
\crefname{appendix}{Appendix}{Appendices}
\crefname{equation}{Eq.}{Eqs.}
\crefname{figure}{Fig.}{Figs.}
\crefname{table}{Table}{Tables}
\crefname{section}{Section}{Sections}
\renewcommand\onecolumngrid{\do@columngrid{one}{\@ne}\def\set@footnotewidth{\onecolumngrid}\def\footnoterule{\kern-6pt\hrule width 1.5in\kern6pt}}
\newcommand{\siSection}{appendix}
\begin{document}
\title{Probing the Quantized Berry Phases in 1H-\ch{NbSe2} Using Scanning Tunneling Microscopy}
\author{Dumitru C\u{a}lug\u{a}ru}
	\thanks{These authors contributed equally to this work.}
	\affiliation{Department of Physics, Princeton University, Princeton, New Jersey 08544, USA}
	\affiliation{Rudolf Peierls Centre for Theoretical Physics, University of Oxford, Oxford OX1 3PU, United Kingdom}
	\author{Yi Jiang}
	\thanks{These authors contributed equally to this work.}
	\affiliation{Donostia International Physics Center (DIPC), Paseo Manuel de Lardizábal. 20018, San Sebastián, Spain}
	\author{Haojie Guo}
	\thanks{These authors contributed equally to this work.}
	\affiliation{Donostia International Physics Center (DIPC), Paseo Manuel de Lardizábal. 20018, San Sebastián, Spain}
	\author{Sandra Sajan}
	\affiliation{Donostia International Physics Center (DIPC), Paseo Manuel de Lardizábal. 20018, San Sebastián, Spain}
	\author{Yongsong Wang}
	\affiliation{Donostia International Physics Center (DIPC), Paseo Manuel de Lardizábal. 20018, San Sebastián, Spain}
	\author{Haoyu Hu}
	\affiliation{Department of Physics, Princeton University, Princeton, New Jersey 08544, USA}
	\affiliation{Donostia International Physics Center (DIPC), Paseo Manuel de Lardizábal. 20018, San Sebastián, Spain}
	\author{Jiabin Yu}
	\affiliation{Department of Physics, University of Florida, Gainesville, FL, USA}
	\affiliation{Department of Physics, Princeton University, Princeton, New Jersey 08544, USA}
	\author{B.~Andrei Bernevig}
	\email{bernevig@princeton.edu}
	\affiliation{Department of Physics, Princeton University, Princeton, New Jersey 08544, USA}
	\affiliation{Donostia International Physics Center, P. Manuel de Lardizabal 4, 20018 Donostia-San Sebastian, Spain}
	\affiliation{IKERBASQUE, Basque Foundation for Science, Maria Diaz de Haro 3, 48013 Bilbao, Spain}
	\author{Fernando de Juan}
	\email{fernando.dejuan@dipc.org}
	\affiliation{Donostia International Physics Center, P. Manuel de Lardizabal 4, 20018 Donostia-San Sebastian, Spain}
	\affiliation{Departamento de Fisica de Materiales, Facultad de Ciencias Quimicas, Universidad del Pais Vasco (UPV-EHU), P. Manuel de Lardizabal 3, 20018 Donostia-San Sebastian, Spain}
	\affiliation{IKERBASQUE, Basque Foundation for Science, Maria Diaz de Haro 3, 48013 Bilbao, Spain}
	\author{Miguel M.~Ugeda}
	\email{mmugeda@dipc.org}
	\affiliation{Donostia International Physics Center, P. Manuel de Lardizabal 4, 20018 Donostia-San Sebastian, Spain}
	\affiliation{Centro de Física de Materiales, Paseo Manuel de Lardizábal 5, 20018 San Sebastián, Spain.}
	\affiliation{IKERBASQUE, Basque Foundation for Science, Maria Diaz de Haro 3, 48013 Bilbao, Spain}

\let\oldaddcontentsline\addcontentsline
\renewcommand{\addcontentsline}[3]{}

\begin{abstract}
	Topologically trivial insulators are classified into two primary categories: unobstructed and obstructed atomic insulators. While both types can be described by exponentially localized Wannier orbitals, a defining feature of obstructed atomic insulators is that the centers of charge of these orbitals are positioned at empty sites within the unit cell, rather than on atoms. Despite extensive theoretical predictions, the unambiguous quantitative experimental identification of an obstructed atomic phase has remained elusive. In this work, we present the first direct quantitative experimental evidence of such a phase in 1H-\ch{NbSe2}. We develop a novel method to extract the inter-orbital correlation functions from the local spectral function probed by scanning tunneling microscopy (STM), leveraging the orbital wave functions obtained from \textit{ab initio}{} calculations. Applying this technique to STM images, we determine the inter-orbital correlation functions for the atomic band of 1H-\ch{NbSe2} that crosses the Fermi level. Our results show that this band realizes an optimally compact obstructed atomic phase, providing the unambiguous experimental identification of such a phase. Our approach of deconvolving the STM signal using \textit{ab initio}{} orbital wave functions is broadly applicable to other material platforms, offering a powerful tool for exploring other electronic phases.
\end{abstract}

\maketitle

\textit{Introduction}.~The topology of electron wave functions has become a central theme in modern condensed matter physics. In broad terms, the momentum-space Bloch wave functions in a material can be Fourier transformed to provide a real-space description of the system. For topologically trivial bands, this transformation yields exponentially localized symmetric Wannier orbitals, while for topological bands, the resulting real-space wave functions are highly extended, leading to markedly different physics. Using Topological Quantum Chemistry~\cite{BRA17,ELC21} and related theories~\cite{KRU17,PO17,WAT18}, various types of nontrivial topology, such as stable and fragile~\cite{CAN18a,PO18c}, have been identified. 

Recently, aided by the advent of real-space invariants~\cite{XU24,XU21a,SON20a}, the broad class of topologically trivial bands has been further divided into unobstructed atomic (UA) and obstructed atomic (OA) insulators~\cite{SON20a,XU21a,LI22b,XU24}. While both cases admit exponentially localized real-space orbitals, in an UA insulator, the corresponding Wannier centers of charge are located at the occupied sites in the unit cell. In contrast, in an OA insulator, the center of charge is located \emph{away} from the atomic nuclei. This generically renders the Wannier orbitals of OA phases more extended than their UA counterparts and endows them with a larger quantum geometric tensor or Fubini-Study metric. The quantum geometric tensor has been shown to lower bound the superfluid weight in flat-band superconductors~\cite{PEO15,TOR22}, including those featuring an OA phase~\cite{HER22}, or even to appear in the electron-phonon coupling~\cite{YU24e}.

Despite its purported prevalence~\cite{XU21a}, the unambiguous \emph{quantitative} experimental identification of an OA phase has remained elusive, with only a handful of qualitative experimental signatures reported thus far~\cite{LIU23b,LIU24a}. In this work, we employ scanning tunneling microscopy (STM) to firmly establish that the Fermi-energy band of 1H-\ch{NbSe2} constitutes an OA band. Experimentally, the monolayer 1H-\ch{NbSe2} (henceforth referred to as \ch{NbSe2}) serves as the canonical example of a metallic transition metal dichalcogenide (TMD), extensively studied due to the emergence of correlated electronic phases at low temperatures, such as charge-density wave (CDW) order ($T_{\text{CDW}} \approx \SI{33}{\kelvin}$)~\cite{WIL75,CHA15,UGE16,LIA18,LIN20,DRE21} and superconductivity ($T_{c} \approx \SI{2}{\kelvin}$)~\cite{REV65,CAO15,UGE16,XI16,LIA18,ZHA19a,DRE21,WAN22a}. This rich experimental landscape has, in turn, inspired a wealth of theoretical investigations into the electron-phonon interaction in this material and its bulk counterpart~\cite{CAL09,LER15,FLI15,FLI16}. By measuring constant-height tunneling current maps in STM, we obtain the real-space charge density distribution (CDD) associated with the Fermi-level quasi-flat band of \ch{NbSe2}. We then devise a general method to extract the inter-orbital correlation functions from the experimentally measured CDD by effectively deconvolving the measured STM signal with the atomic orbital wave functions obtained from \textit{ab initio}{} simulations. The extracted correlators quantitatively show that the atomic orbitals of \ch{NbSe2} strongly hybridize within disjoint triangular plaquettes to realize the Fermi-level OA band, whose Wannier center is located at the empty triangular lattice site of \ch{NbSe2}, as predicted by comprehensive \textit{ab initio}{} simulations~\cite{YU24}. Remarkably, the wave function of the OA band has more than $94\%$ overlap with a prototypical theoretical model of an optimally compact OA band defined on the triangular lattice~\cite{SCH21,HER22,YU24}.

\begin{figure}[t]
	\centering
	\includegraphics[width=\columnwidth]{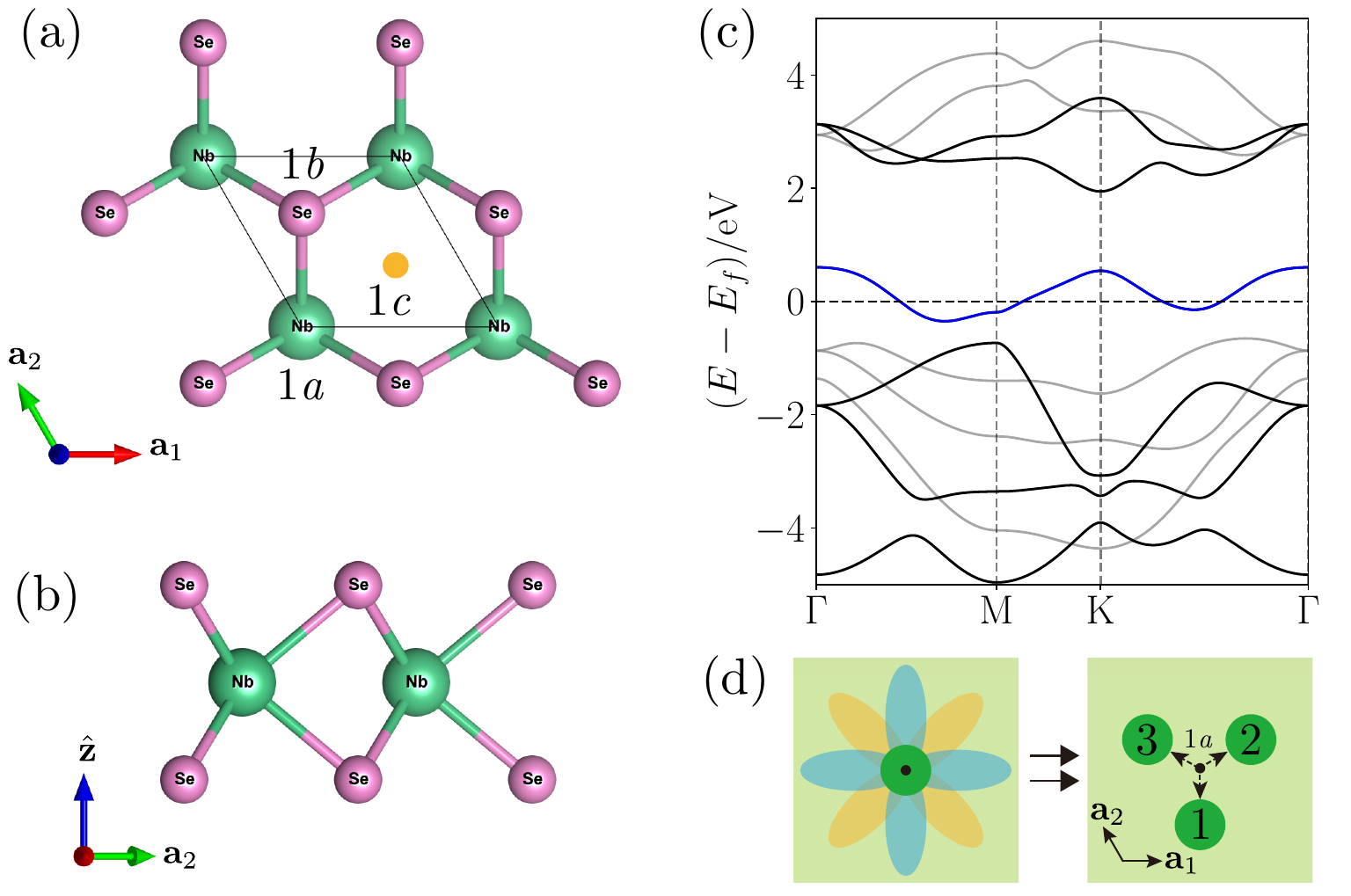}\subfloat{\label{fig:crystal_band_structure:a}}\subfloat{\label{fig:crystal_band_structure:b}}\subfloat{\label{fig:crystal_band_structure:c}}\subfloat{\label{fig:crystal_band_structure:d}}\caption{Crystal structure and electronic band structure of monolayer \ch{NbSe2}. The top and side views of the crystal structure are shown in (a) and (b), respectively. The $C_{3z}$-symmetric sites of the unit cell have been indicated in (a). (c) illustrates the \textit{ab initio}{} electronic band structure of \ch{NbSe2}, with the mirror-even (mirror-odd) bands shown in black or blue (grey). The blue quasi-flat band crossing the Fermi level is obstructed, with its Wannier center located at the empty $1c$ position. (d) highlights the effective $d_{z^2}$ (green), $d_{x^2 - y^2}$ (blue), and $d_{xy}$ (orange) orbitals that span the topmost (mirror-even) conduction bands (including the quasi-flat band). The three $d$-orbitals can be hybridized into three $s$-like orbitals disposed symmetrically around the $1a$ site.}
	\label{fig:crystal_band_structure}
\end{figure}

\textit{Monolayer \ch{NbSe2}}.~The crystalline structure of monolayer $\ch{NbSe2}$ in the pristine ({\it i.e.}{}, non-CDW) phase is shown in \cref{fig:crystal_band_structure:a,fig:crystal_band_structure:b}. It belongs to the $p\bar{6}m2$ layer group, generated by two-dimensional translations, time-reversal symmetry $\mathcal{T}$, three-fold $C_{3z}$ rotations, $m_z$ mirror reflections, and in-plane two-fold $C_{2y}$ rotations. Within the $m_z$-symmetric plane, the system includes three $C_{3z}$-invariant positions. The $1a$ position is occupied by Nb atoms, the Se atoms are situated above and below the $1b$ position, and the $1c$ position remains unoccupied. 

The \textit{ab initio}{} electronic band structure shown in \cref{fig:crystal_band_structure:c} reveals 11 bands around the Fermi level, originating from the five $d$ orbitals of Nb and the three $p$ orbitals of each of the two Se atoms. The energy spectrum can be decomposed into $m_z$ sectors. Our focus is on the topmost three mirror-even conduction bands, which include the quasi-flat band crossing the Fermi energy. These three bands are spanned by molecular $d$-like orbitals located at the $1a$ (Nb) site, which are adiabatically connected to and derive more than two-thirds of their weight from the atomic $d_{z^2}$, $d_{x^2 - y^2}$, and $d_{xy}$ orbitals of Nb. As indicated by symmetry analysis within Topological Quantum Chemistry~\cite{BRA17,ELC21}, the Fermi-level quasi-flat band constitutes an OA band induced by an $s$-like molecular orbital located at the $1c$ (unoccupied) position. 

\begin{figure}[t]
	\centering
	\includegraphics[width=\columnwidth]{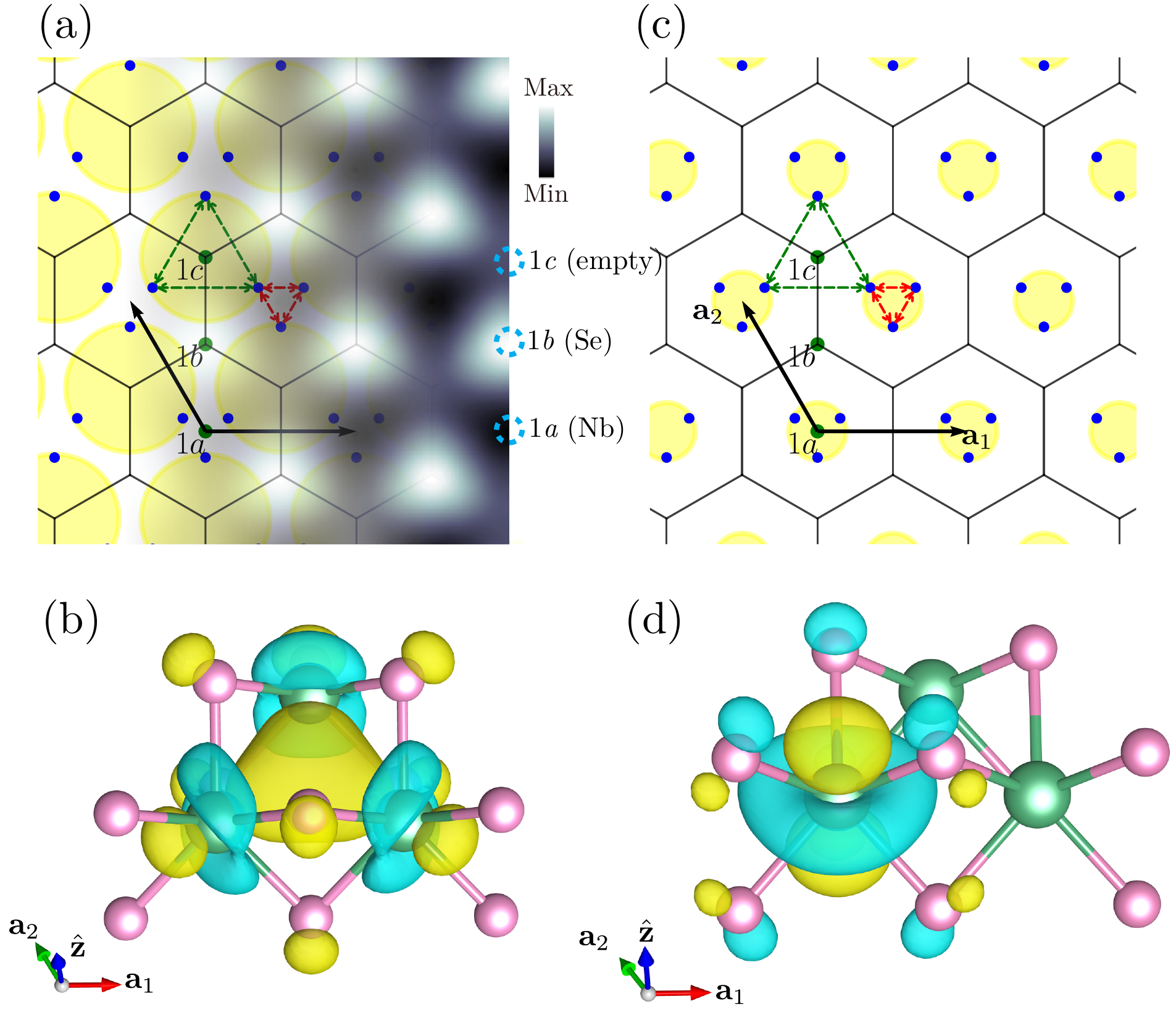}\subfloat{\label{fig:OA_and_AI_limits:a}}\subfloat{\label{fig:OA_and_AI_limits:b}}\subfloat{\label{fig:OA_and_AI_limits:c}}\subfloat{\label{fig:OA_and_AI_limits:d}}\caption{Faithful OA and fictitious UA limits of the quasi-flat band of \ch{NbSe2}. In the correct OA limit depicted in (a), the $s$-like orbitals (dots) hybridize strongly around the $1c$ position, as depicted by the yellow circles. The corresponding CDD computed from \textit{ab initio}{} (as measured by a probe located at height $z=\SI{4}{\angstrom}$ above the topmost Se atoms) is overlaid on the right half of the panel. (b) shows the Wannier orbital corresponding to the obstructed quasi-flat band, with the yellow and blue regions representing positive and negative wavefunction amplitudes, respectively. (c) and (d) are the same as (a) and (b), but for a fictitious UA limit, in which the $s$-like orbitals hybridize strongly around the $1a$ position.}
	\label{fig:OA_and_AI_limits}
\end{figure}

\textit{The OA and UA limits of \ch{NbSe2}}.~A particularly intuitive toy model for the topmost three mirror-even bands of \ch{NbSe2} shown in \cref{fig:crystal_band_structure:c} was formulated in Ref.~\cite{YU24}. In this model, the three effective $d$-like orbitals spanning the these bands are first hybridized into $s$-like orbitals located away from the $1a$ (Nb) position at three $C_{3z}$-related sites, as shown in \cref{fig:crystal_band_structure:d}. We denote the $s$-like orbitals by $\hat{c}^\dagger_{\vec{R},i}$ (for $1 \leq i \leq 3$) within unit cell $\vec{R}$. Next, the $s$-orbitals surrounding each empty $1c$ site (displaced by $\vec{r}_{1c} = \frac{2}{3} \vec{a}_1 + \frac13 \vec{a}_2$ from the unit cell origin) are hybridized by a real hopping amplitude $t$. The corresponding tight-binding Hamiltonian is approximately a sum of commuting terms associated with each unit cell, $\hat{H} = \sum_{\vec{R}} \hat{H} \left( \vec{R} \right) + \dots$, where
\begin{equation}
	\label{eqn:plaquete_ham}
	\hat{H} \left( \vec{R} \right) = \sum^{3}_{i,j=1} \hat{c}^\dagger_{\vec{R}+\delta\vec{R}_i,i} \left[ t + \left(E_0 - t\right) \delta_{ij} \right] \hat{c}_{\vec{R}+\delta\vec{R}_j,j}.
\end{equation}
with the displacements $\delta\vec{R}_i = \vec{r}_{1c} - C^{i}_{3z} \vec{r}_{1c}$ (for $i = 1, 2, 3$), and the dots denoting other terms that we ignore for now. From \textit{ab initio}{} simulations, we find that $t = \SI{-0.784}{\electronvolt}$ and $E_0 = \SI{1.733}{\electronvolt}$~\cite{YU24}. This simplified model features three flat bands: one at $\epsilon_1 = E_0 + 2t$ and two at $\epsilon_{2,3} = E_0 - t$. The Wannier orbitals associated with the lowest flat band at $\epsilon_1 = E_0 + 2t$ are trivially given by $\hat{\gamma}^\dagger_{\vec{R},1} = \frac{1}{\sqrt{3}} \sum^{3}_{i=1} \hat{c}^\dagger_{\vec{R}+\delta\vec{R}_i,i}$ and, from a symmetry standpoint, are akin to $s$-orbitals located at the empty $1c$ position, making this flat band an OA phase. The introduction of the previously ignored terms (with hopping amplitudes smaller than $\SI{0.3}{\electronvolt}$) renders the flat bands dispersive without closing any gaps, reproducing the spectra of the top three mirror-even bands shown in \cref{fig:crystal_band_structure:c}. Importantly, the overlap between the $\hat{\gamma}^\dagger_{\vec{R},1}$ orbitals and the \textit{ab initio}{} Fermi-level band of \ch{NbSe2} is approximately $94\%$~\cite{YU24}.

\begin{figure*}[t]
	\centering
	\includegraphics[width=\textwidth]{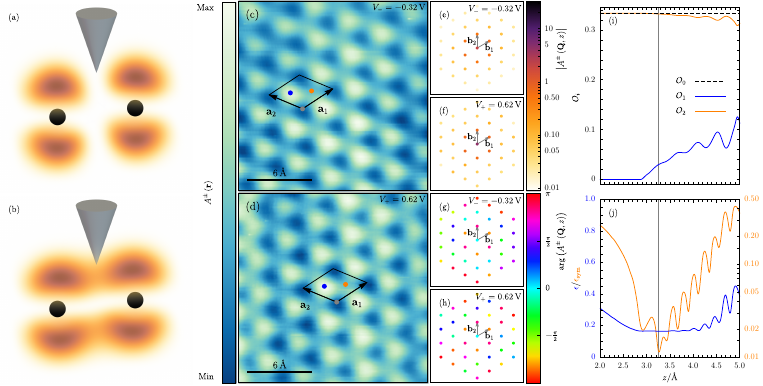}\subfloat{\label{fig:extraction_method:a}}\subfloat{\label{fig:extraction_method:b}}\subfloat{\label{fig:extraction_method:c}}\subfloat{\label{fig:extraction_method:d}}\subfloat{\label{fig:extraction_method:e}}\subfloat{\label{fig:extraction_method:f}}\subfloat{\label{fig:extraction_method:g}}\subfloat{\label{fig:extraction_method:h}}\subfloat{\label{fig:extraction_method:i}}\subfloat{\label{fig:extraction_method:j}}\caption{Extracting the orbital correlators from STM. By probing the electron density in the middle of a bond, an STM tip can distinguish between an antibonding and a bonding $\pi$-bond between two atoms, as depicted in (a) and (b), respectively. The constant-height tunneling current maps of \ch{NbSe2} at positive and negative biases $V_{\pm}$ are shown in (c) and (d). The conventional unit cell from \cref{fig:crystal_band_structure:a}, with the $1a$ (gray), $1b$ (blue), and $1c$ (orange) positions marked with dots, is overlaid on the maps. (e)-(h) display the Fourier transform amplitude of the tunneling current maps at the reciprocal lattice vectors, where the first two panels represent the magnitude and the last two illustrate the phase. (i) depicts the extracted correlators assuming different tip distances $z$ from the uppermost Se atoms. The fitting error relative to the directly extracted ($\epsilon$) and additionally symmetrized ($\epsilon_{\text{sim}}$) CDD is plotted in (j), with the minimal error indicated by the gray hairline.}
	\label{fig:extraction_method}
\end{figure*}

\Cref{fig:OA_and_AI_limits:a} schematically illustrates the hybridization effected by the terms $\hat{H} \left( \vec{R} \right)$ in \ch{NbSe2} and the resulting OA phase. The overlaid CDD of the quasi-flat band (as measured by an STM probe) reflects this strong hybridization: while the brightest $C_{3z}$-symmetric site is at the $1b$ (Se) site due to the nonzero Se weight in the $s$-like orbitals and its relative proximity to the STM probe, the second-brightest site is not at $1a$ (Nb) site, despite Nb contributing the largest orbital weight. Instead, it is located at the empty $1c$ site. This phenomenon arises precisely due to the obstructed nature of the quasi-flat \ch{NbSe2} band, whose Wannier orbital, shown in \cref{fig:OA_and_AI_limits:b}, is centered at the $1c$ position.

To better understand the OA nature of the Fermi-level \ch{NbSe2} band, we can consider a fictitious scenario in which the three $s$-like orbitals nearest to each $1a$ (empty) position strongly hybridize with one another. In such a case, depicted schematically in \cref{fig:OA_and_AI_limits:c}, a simplified Hamiltonian for the system could be written as $\hat{H}' = \sum_{\vec{R}} \hat{H}' \left( \vec{R} \right)$, with $\hat{H}' \left( \vec{R} \right)$ taking the same form as \cref{eqn:plaquete_ham}, but with $\delta \vec{R}_i = 0$ (for $i = 1, 2, 3$). The energy spectrum of this limit would be identical to that of $\hat{H}$, but the lowest flat band would now represent a UA limit. Its Wannier wave function, $\hat{\gamma'}^\dagger_{\vec{R},1} = \frac{1}{\sqrt{3}} \sum^{3}_{i=1} \hat{c}^\dagger_{\vec{R},i}$, corresponds to the original $d_{z^2}$-like effective orbital, as shown in \cref{fig:OA_and_AI_limits:d}.

The strong hybridization between $s$-orbitals within a triangular plaquette, in either the correct OA or the fictitious UA limit, is reflected in the enhancement of certain off-diagonal orbital correlators. Let $\ket{\Phi}$ and $\ket{\Phi'}$ represent the Slater determinant states where only the lowest flat band of either $\hat{H}$ or $\hat{H}'$ is fully filled. We define the orbital correlator matrix elements as $\rho^{(\prime)}_{ij} \left( \Delta \vec{R} \right) = \mel**{\Phi^{(\prime)}}{\hat{c}^\dagger_{\vec{R} + \Delta \vec{R},i} \hat{c}_{\vec{R},j}}{\Phi^{(\prime)}}$, which are real due to time-reversal symmetry. We focus on three key matrix elements (and their symmetry-related counterparts), denoted as $O_i$ (for $0 \leq i \leq 2$): $O_0 = \rho^{(\prime)}_{11} \left( \vec{0} \right)$, $O_1 = \rho^{(\prime)}_{12} \left( \vec{0} \right)$, and $O_2 = \rho^{(\prime)}_{23} \left( \vec{a}_1 \right)$. For any ground state, $O_0 = \frac{1}{3}$. However, $O_1$ and $O_2$, represented by the red and green dashed lines in \cref{fig:OA_and_AI_limits:a,fig:OA_and_AI_limits:c}, respectively, serve as effective \emph{order parameters} for the OA and UA limits. In the OA phase, $O_1 = 0$ and $O_2 = \frac{1}{3}$, while in the UA phase, $O_1 = \frac{1}{3}$ and $O_2 = 0$. In both the OA and UA phases, the inter-$s$-orbital hybridization is sufficiently strong to render the corresponding off-diagonal matrix element as large as the diagonal one. For the full \textit{ab initio}{} model of \ch{NbSe2}, we find $O_1 = 0.064$, $O_2 = 0.306$, $\rho_{11} \left( \vec{a}_1 \right) = -0.075$, $\rho_{22} \left( \vec{a}_1 \right) = 0.038$, with all other elements being no larger in absolute value than $0.014$.

\textit{Correlator extraction from STM}.~In the case of \ch{NbSe2}, the experimental identification of the OA or UA phase can be achieved by measuring the orbital correlator matrix elements. According to the Tersoff-Hamann approximation~\cite{TER83,TER85}, the tunneling current measured at a bias voltage $V$ is proportional to the local spectral function integrated from zero energy to $\abs{e}V$ (both measured relative to the Fermi energy, with $e$ as the negative electron charge). Consequently, the CDD of the \ch{NbSe2} quasi-flat band can be inferred from two tunneling current measurements at bias voltages that bracket the band in energy. Up to multiplicative factors, this enables the determination of the CDD $A \left( \vec{r} \right) = \mel**{\Phi}{\hat{\Psi}^\dagger_{} \left( \vec{r} \right) \hat{\Psi}_{} \left( \vec{r} \right)}{\Phi}$, where $\hat{\Psi}_{} \left( \vec{r} \right)$ is the annihilation fermionic field operator, at a constant height above the sample. Extracting \emph{orbital-off-diagonal} correlators from the \emph{spatially-diagonal} expectation values of the density operators $ \hat{\Psi}^\dagger_{} \left( \vec{r} \right) \hat{\Psi}_{} \left( \vec{r} \right)$ relies on the finite spatial extent of the orbital wave functions. This principle is illustrated schematically in \cref{fig:extraction_method:a,fig:extraction_method:b}, where an STM tip probing the electronic density between two hypothetical atoms can distinguish whether their corresponding $p_z$-orbitals form a bonding or antibonding molecular $\pi$-bond.

On the \ch{NbSe2} lattice, the fermionic field operator can be expressed as a convolution of the orbital fermionic operators $\hat{c}^\dagger_{\vec{R},i}$ with their Wannier wave functions $W_i \left( \vec{r}-\vec{R} \right)$, $\hat{\Psi}_{} \left( \vec{r} \right) = \sum_{\vec{R},i} W_i \left( \vec{r}-\vec{R} \right) \hat{c}_{\vec{R},i}$. As a result, the CDD directly depends on the orbital-off-diagonal correlators. By defining the Fourier transform of the quasi-flat band CDD at reciprocal lattice vectors $\vec{Q}$ as $A \left( \vec{Q}, z \right) = \frac{1}{\Omega} \int \dd[2]{r_{\parallel}} A \left( \vec{r} \right) e^{-i \vec{Q} \cdot \vec{r}}$, where $\vec{r} = \vec{r}_{\parallel} + z \hat{\vec{z}}$ and $\Omega$ is the surface area of the sample, one can show that $A \left( \vec{Q}, z \right)$ depends linearly on the orbital correlator matrix through the spatial factor $B_{ij} \left( \vec{Q}, z, \Delta \vec{R} \right)$ according to
\begin{equation}
	\label{app:sp_func_of_correlators}
	A \left( \vec{Q}, z \right) = \sum_{\Delta \vec{R}, i, j} B_{ij} \left( \vec{Q}, z, \Delta \vec{R} \right) \rho_{ij} \left( \Delta \vec{R} \right).
\end{equation}
The spatial factor $B_{ij} \left( \vec{Q}, z, \Delta \vec{R} \right)$ is computed from the overlap of Wannier wave functions corresponding to different orbitals, which are obtained from \textit{ab initio}{} calculations. The orbital correlator matrix $\rho_{ij} \left( \Delta \vec{R} \right)$ can then be determined by fitting \cref{app:sp_func_of_correlators} to the experimentally measured $A \left( \vec{Q}, z \right)$.

\Cref{fig:extraction_method:c,fig:extraction_method:d} show the experimentally measured constant-height tunneling current maps at two bias voltages $V_{\pm}$ that bracket the \ch{NbSe2} quasi-flat band. The tunneling current is proportional to the electron-like and hole-like contributions to the quasi-flat band CDD, which we denote by $A^{\pm} \left( \vec{r} \right)$. The amplitude and phases of the corresponding Fourier transforms $A^{\pm} \left( \vec{Q}, z \right)$ are shown in \crefrange{fig:extraction_method:e}{fig:extraction_method:h}. We note that the phase of the extracted $A^{\pm} \left( \vec{Q}, z \right)$ depends directly on the choice of origin for the Fourier transform, which therefore must be fixed to the convention of \cref{fig:crystal_band_structure:a}. To do so, we first note that the conventional origin corresponds to one of the three $C_{3z}$-invariant points of the \ch{NbSe2} unit cell. By inspecting the quasi-flat band CDD in \cref{fig:OA_and_AI_limits:a}, we note that the ``brightest'' $C_{3z}$-symmetric site ({\it i.e.}{} the one with the largest tunneling current) corresponds to the $1b$ (Se) position, due to the proximity of the Se atoms to the plane of the STM tip (since the tunneling current decays exponentially with the sample-tip distance). This observation was also recently confirmed experimentally by substitutional doping~\cite{HOL24}. To distinguish between the two unit cell choices in which $1b$ (Se) is the brightest site, we note that \textit{ab initio}{} simulations reveal the $1a$ (Nb) position to have the smallest tunneling current. This hierarchy of intensity is valid across any experimentally relevant sample-tip distances and can also be confirmed independently through conductance measurements at larger bias voltages, as we show later. Fixing the $1a$ (Nb) and $1b$ (Se) positions to be, respectively, the darkest and brightest $C_{3z}$-symmetric sites in the tunneling current map uniquely fixes the unit cell in the STM images, as shown in \cref{fig:extraction_method:c,fig:extraction_method:d}. After Fourier transformation, the quasi-flat band's CDD can be obtained by summing the electron- and hole-like tunneling current maps in Fourier space $A \left( \vec{Q}, z \right)  = A^+ \left( \vec{Q}, z \right) + A^- \left( \vec{Q}, z \right)$.

\begin{figure}[t]
	\centering
	\includegraphics[width=\columnwidth]{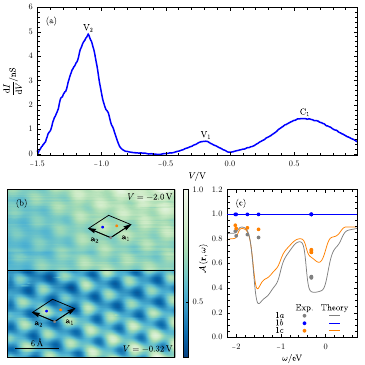}\subfloat{\label{fig:validate_pos:a}}\subfloat{\label{fig:validate_pos:b}}\subfloat{\label{fig:validate_pos:c}}\caption{Bias-dependent contrast maps in \ch{NbSe2}. (a) shows a large-bias-range conductance ($\dv{I}{V}$) curve for monolayer \ch{NbSe2}. The three prominent peaks are labeled by $\mathrm{V}_1$, $\mathrm{V}_2$, and $\mathrm{C}_1$~\cite{UGE16}. The spatially-resolved constant-height conductance maps at two bias voltages are illustrated in (b). (c) plots the spatially-averaged conductance at the three $C_{3z}$ symmetric sites (colored dots) for different bias voltages and for different regions of the sample. The conductance is compared with the \textit{ab initio}{} spectral function $\mathcal{A} \left( \vec{r}, \omega\right)$ computed at the same $C_{3z}$ symmetric positions for a tip distance of $z = \SI{4.0}{\angstrom}$. The conductance (spectral function) is normalized to one at the $1a$ site.}
	\label{fig:validate_pos}
\end{figure}

Armed with the experimentally measured quasi-flat band CDD and the spatial factor calculated using \textit{ab initio}{} methods, we can apply \cref{app:sp_func_of_correlators} to fit the orbital correlator matrix. To distinguish between the OA and UA phases, we make an approximation by including only three nonzero terms in $\rho_{ij} \left( \Delta \vec{R} \right)$, corresponding to $O_i$ (for $0 \leq i \leq 2$). This approximation is justified by \textit{ab initio}{} calculations, which show that in the correct OA limit, $O_2 \approx O_0 = \frac{1}{3}$, with other elements being much smaller. A nonzero $O_1$ is allowed as the order parameter of the UA phase. As the correlator matrix corresponds to the fully-filled quasi-flat band Slater determinant state $\ket{\Phi}$, it can be shown that $\rho_{ij} \left( \vec{R} \right)$ satisfies the following normalization conditions: $\sum_{i} \rho_{ii} \left( \vec{0} \right) = \sum_{ij} \abs{\rho_{ij} \left( \vec{R} \right)}^2 = 1$, which fix $O_0 = \frac{1}{3}$ and $O_1^2 + O_2^2 = \frac{1}{9}$. \Cref{fig:extraction_method:i} shows the fitted values of $O_i$ (for $0 \leq i \leq 2$) as a function of the \emph{assumed} tip height $z$. Since the tip height is not experimentally accessible, we perform the analysis by computing $B_{ij} \left( \vec{Q}, z, \Delta \vec{R} \right)$ at different values of $z$, effectively fitting the tip height as well. To assess the fitting error, we use two metrics: the difference between the experimentally measured $A \left( \vec{Q}, z \right)$ and the fitted one ($\epsilon$), and the difference between the experimentally measured \emph{and symmetrized} $A \left( \vec{Q}, z \right)$ and the fitted one ($\epsilon_{\text{sym}}$). As shown in \cref{fig:extraction_method:j}, the minimal error is achieved at $z = \SI{3.25}{\angstrom}$, corresponding to $O_1 = 0.030$ and $O_2 = 0.332$. This \emph{quantitatively} demonstrates that the quasi-flat band of \ch{NbSe2} forms an OA limit.

\textit{Bias-dependent contrast maps}.~The enhanced spectral weight at the empty $1c$ site is a direct consequence of the OA nature of the quasi-flat band of \ch{NbSe2}. To further validate this, we performed conductance measurements at other bias voltages corresponding to the lower-energy valence bands. \Cref{fig:validate_pos:a} shows a large-bias-range scanning tunneling spectroscopy (STS) conductance curve for \ch{NbSe2}. The peaks labeled $\mathrm{V}_1$ and $\mathrm{C}_1$, first observed in Ref.~\cite{UGE16} were assigned in Ref.~\cite{SIL16} to the band edges of the Fermi-level band in agreement with our \textit{ab initio}{} simulations shown in \cref{fig:crystal_band_structure:c}. The \textit{ab initio}{} results indicate that the uppermost mirror-even (mirror-odd) valence bands below the quasi-flat band form an atomic limit induced by Wannier orbitals located at the $1a$ ($1c$) position. Consequently, while the quasi-flat band's obstructed nature leads to a spectral weight hierarchy of $1c > 1a$, for bias voltages $V \lesssim \SI{-1.5}{\volt}$, the spectral weights for the $1a$ and $1c$ sites are expected to become comparable. This bias-dependent contrast is observed experimentally in \cref{fig:validate_pos:b}: at large negative biases ($V = \SI{-2.0}{\volt}$), the $1a$ and $1c$ sites exhibit comparable conductance, in contrast to the marked difference ($1c > 1a$) at $V = \SI{-0.32}{\volt}$. The comparison of the conductance (proportional to the spectral function $\mathcal{A} \left( \vec{r}, \omega \right)$ in the Tersoff-Hamann approximation~\cite{TER83,TER85}) with the \textit{ab initio}{} results from \cref{fig:validate_pos:c} shows excellent agreement. This provides further confirmation that the minimal spectral weight of the OA flat band indeed corresponds to the $1a$ site.

\textit{Discussion}.~We have demonstrated that the finite spatial extent of orbital wave functions allows the measurement of the spatially diagonal density operator $\hat{\Psi}^\dagger_{} \left( \vec{r} \right) \hat{\Psi}_{} \left( \vec{r} \right)$ to provide access to the orbital-off-diagonal correlators. Applying this method to monolayer \ch{NbSe2}, we have quantitatively shown that its quasi-flat band forms an OA phase. This phase is characterized by an enhancement of the orbital-off-diagonal correlators, which become comparable in magnitude to the orbital-diagonal ones. Key questions arise regarding the connection between the obstructed nature of the Fermi-level band and the CDW and superconducting phases of \ch{NbSe2}, both of which suggest an enhanced electron-phonon coupling strength. Additionally, a similar analysis can be extended to \ch{NbS2}~\cite{VAN18,KNI24}, \ch{TaSe2}~\cite{WAN23a,RYU18,HAJ13}, and \ch{TaS2}~\cite{SAN16,LIN18,YAN18,HAL19,VAN23} monolayers, which are also expected to feature an OA quasi-flat band at the Fermi level. Finally, the method of ``deconvolving'' the STM signal with the spatial factor can be applied more broadly to other platforms to infer the orbital correlator matrix. One immediate application of this approach is the determination of the CDW order parameters through STM.
\section*{Methods}

\textit{Growth of \ch{NbSe2} on BLG/SiC(0001)}.~High quality single-layer \ch{NbSe2} has been grown on epitaxial bilayer graphene (BLG) on 6H-SiC(0001) using a custom-built ultra-high vacuum (UHV) based molecular beam epitaxy (MBE) system, following similar recipes described elsewhere~\cite{DRE21, WAN22a}. In short, we first performed a slow degassing of the SiC wafer at $\SI{800}{\celsius}$, followed by high-temperature annealing at $\SI{1400}{\celsius}$ for $\SI{35}{\minute}$ to obtain bilayer graphene. The sample temperature was then maintained at $\SI{550}{\celsius}$ to grow single layers of \ch{NbSe2}. Nb and Se atoms were evaporated simultaneously from high-purity solid sources using an E-beam evaporator and a Knudsen cell, respectively, while maintaining a constant flux ratio of $1:30$ during growth. A total growth time of $\SI{35}{\minute}$ was employed, resulting in an average coverage of approximately 0.8 monolayers (ML) of \ch{NbSe2} on BLG/SiC(0001) surfaces, estimated from STM images. The entire growth process was monitored in situ \textit{via} reflection high-energy electron diffraction (RHEED) to continuously track the emergence of both BLG and \ch{NbSe2} structures on the SiC(0001) substrate. After growth, a few microns of amorphous Se were deposited as capping layers on top of \ch{NbSe2}/BLG/SiC(0001) to prevent sample degradation upon exposure to air during transport between the MBE and STM chambers. Inside the STM, the capping layer of Se can be easily removed by mild annealing (below the growth temperature) for $\SI{30}{\minute}$ without compromising the quality of the original \ch{NbSe2}/BLG/SiC(0001) samples.

\textit{STM/STS measurements and tip calibration}.~STM and STS experiments were carried out in a UHV chamber hosting a commercial STM (Unisoku, USM1300) that operates under high magnetic fields (up to \SI{11}{\tesla}, perpendicular to the sample's surface) and low temperatures (down to \SI{350}{\milli\kelvin}). In this work, all STM/STS data were recorded at \SI{77}{\kelvin} (LN$_2$ temperature), which is well above the CDW transition temperature of monolayer \ch{NbSe2} ($T_{\text{CDW}} \approx \SI{33}{\kelvin}$)~\cite{UGE16}. Before each experimental run, STM tips made of Pt/Ir were meticulously calibrated against the Shockley surface state of Cu(111) to avoid artifacts in the STM/STS data. STM images were recorded in constant height mode by disabling the feedback loop during data acquisition. To counteract the persistent thermal drift at \SI{77}{\kelvin}, the microscope was thermally stabilized for an extended period (approximately \SI{4}{\hour}) before initiating constant-height imaging. This preparation minimizes variations in the tip-sample distance during scanning. The error was further reduced by performing scans at relatively higher speeds than usual, with an average acquisition time per image of \SI{15}{\second}. Experimentally, the precise tip-sample distance is not accessible. However, the employed stabilization tunneling parameters (bias voltage $V_{s}$, and tunneling current $I_{t}$), before disconnecting the feedback loop, set directly a relative tip height with respect to the sample surface. In this work, we performed paired (or contrasting) constant height STM image measurements at different bias voltage while ensuring roughly the same tip-sample distance (considering that the residual thermal drift at 77 K is minimized). This was achieved by following a systematic approach: starting from a specific set of stabilization parameters ($V_{s}$, $I_{t}$), the feedback loop was disengaged, and the scanning area was sequentially mapped at the desired pair of bias voltage values one by one. The size of all recorded STM images was $\SI{3}{\nano\meter}\times\SI{3}{\nano\meter}$, and the specific stabilization parameters ($V_{s}$, $I_{t}$) before the feedback loop is switched off are $V_s = \SI{+0.62}{\volt}$, $I_t = \SI{4}{\nano\ampere}$ for \cref{fig:extraction_method:c,fig:extraction_method:d} and $V_s = \SI{-2}{\volt}$, $I_t = \SI{2}{\nano\ampere}$ for \cref{fig:validate_pos:b}. For the STS data, a standard lock-in detection technique was employed. During data acquisition, an AC modulated voltage ($V_{\textit{ac}}$, peak-to-peak) at $f = \SI{833}{\hertz}$ was coupled to ($V_s$). We used $V_{\textit{ac}} = \SI{6}{\milli\volt}$, with stabilization parameters set to $V_s = \SI{1}{\volt}$ and $I_t = \SI{0.5}{\nano\ampere}$ in \cref{fig:validate_pos:a}. All experimental STM/STS data were analyzed and rendered using the WSxM software~\cite{HOR07}.

\textit{Electronic structure of \ch{NbSe2} recorded at 77 K via STS}.~A representative large-bias-range $\dv{I}{V}$ curve, showing the electronic structure of monolayer \ch{NbSe2} grown on BLG/SiC(0001) and measured at \SI{77}{\kelvin}, is depicted in \cref{fig:validate_pos:a}. In the conduction band, a prominent peak labeled $\mathrm{C}_1$ is observed, centered around \SI{+0.5}{\volt}. Meanwhile, the valence band is dominated by an asymmetric peak ($\mathrm{V}_1$) located at approximately \SI{-0.2}{\volt}, followed by a region of vanishing density of states (gap) extending down to \SI{-0.8}{\volt}, where the peak labeled $\mathrm{V}_2$ begins to dominate the $\dv{I}{V}$ spectrum. These features are in excellent agreement with previous STS results obtained at \SI{4.2}{\kelvin}, indicating that the CDW transition and other temperature effects do not qualitatively change the conductance curve~\cite{UGE16}. Furthermore, this spectrum reproduces all the main characteristics of the monolayer \ch{NbSe2} band structure observed via ARPES~\cite{UGE16}, as well as the calculated one from \textit{ab initio} methods shown in \cref{fig:crystal_band_structure:c}. In comparison with the latter, the observed $\mathrm{V}_1$ and $\mathrm{C}_1$ peaks in the $\dv{I}{V}$ spectrum delineate the energy extension of the quasi-flat band shown in \cref{fig:crystal_band_structure:c}.

\textit{First-principle calculations. }
The first-principle band structures presented in this work are computed using the Vienna Ab-initio Simulation Package (VASP)~\cite{KRE93, KRE93a, KRE94, KRE96a, KRE96} with the generalized gradient approximation of the Perdew-Burke-Ernzerhof (PBE) exchange-correlation functional~\cite{PER96}. An energy cutoff of $\SI{400}{\electronvolt}$ is employed. The maximally localized Wannier functions (MLWFs) are constructed using the symmetry-adapted Wannier functions~\cite{SAK13} in Wannier90~\cite{MAR12, MAR97b, PIZ20, SOU01c}, implemented in Quantum ESPRESSO~\cite{GIA17, GIA09}, with PAW-type pseudo-potentials and PBE functionals from PSlibrary 1.0.0~\cite{DAL14}. The spin-orbit coupling (SOC) effect is not included in this work. While SOC induces small spin splitting in the band structure, it does not alter the conclusions presented here.

\begin{acknowledgments}
	We thank Yuanfeng Xu for collaboration on a related project~\cite{YU24}, as well as Emilia Moro\cb{s}an, Jonah Herzog-Arbeitman, and Ryan L. Lee for useful discussions. D.C. acknowledges support from the DOE Grant No. DE-SC0016239 and the hospitality of the Donostia International Physics Center, at which this work was carried out. D.C. also gratefully acknowledges the support provided by the Leverhulme Trust. Y.J. and H.H. were supported by the European Research Council (ERC) under the European Union’s Horizon 2020 research and innovation program (Grant Agreement No. 101020833), as well as by the IKUR Strategy under the collaboration agreement between Ikerbasque Foundation and DIPC on behalf of the Department of Education of the Basque Government. B.A.B. was supported by the Gordon and Betty Moore Foundation through Grant No. GBMF8685 towards the Princeton theory program, the Gordon and Betty Moore Foundation’s EPiQS Initiative (Grant No. GBMF11070), the Office of Naval Research (ONR Grant No. N00014-20-1-2303), the Global Collaborative Network Grant at Princeton University, the Simons Investigator Grant No. 404513, the BSF Israel US foundation No. 2018226, the NSF-MERSEC (Grant No. MERSEC DMR 2011750), the Simons Collaboration on New Frontiers in Superconductivity, and the Schmidt Foundation at the Princeton University. J. Y.'s work at Princeton University is supported by the Gordon and Betty Moore Foundation through Grant No. GBMF8685 towards the Princeton theory program. J. Y.'s work at University of Florida is supported by startup funds at University of Florida. M.M.U. acknowledges support from the European Union ERC Starting grant LINKSPM (Grant \#758558) and by the grant PID2023-153277NB-I00 funded by the Spanish Ministry of Science, Innovation and Universities. H.G. acknowledges funding from the EU NextGenerationEU/PRTR-C17.I1, as well as by the IKUR Strategy under the collaboration agreement between Ikerbasque Foundation and DIPC on behalf of the Department of Education of the Basque Government. F. J. acknowledges support from the grant PID2021-128760NB-I00 funded by the Spanish Ministry of Science, Innovation and Universities.
	
	\textit{Note added}.~In the final stages of preparing our work, we became aware of Ref.~\cite{HOL24}, which investigated the OA phase in \ch{WSe2} using a distinct approach. Their study employed substitution doping to determine atomic positions in STM images and then probed the interference of Bloch wave functions at different momenta, contrasting with the strategy used in this work. Where they overlap ({\it i.e.}{} the position of the Se atom in \ch{NbSe2} STM images), our conclusion are consistent. 
\end{acknowledgments}

\textit{Data availability}:~All data generated in this study is included in the main text and Supplementary Materials. Additional simulated data, experimental data, along with any code required for reproducing the figures, are available from the authors upon reasonable request.

\let\addcontentsline\oldaddcontentsline

\renewcommand{\thetable}{S\arabic{table}}
\renewcommand{\thefigure}{S\arabic{figure}}
\renewcommand{\theequation}{S\arabic{section}.\arabic{equation}}
\onecolumngrid
\pagebreak
\thispagestyle{empty}
\newpage
\begin{center}
	\textbf{\large Supplementary Information for ``Probing the Quantized Berry Phases in 1H-\ch{NbSe2} Using Scanning Tunneling Microscopy{}''}\\[.2cm]
\end{center}

\appendix
\renewcommand{\thesection}{\Roman{section}}
\tableofcontents
\let\oldaddcontentsline\addcontentsline
\newpage

\section{Brief theoretical review of scanning tunneling microscopy}\label{app:sec:theory_stm}

This \siSection{} provides a brief theoretical overview of scanning tunneling microscopy (STM). We begin by discussing Bardeen's formula for the tunneling current between two metals separated by a vacuum layer~\cite{BAR61}. This expression is valid when both metallic regions can be considered as hosting effectively non-interacting quasi-particles~\cite{PRA64,GOT06}. Bardeen's tunneling current formula is then applied directly to STM, where one metal represents the sample and the other serves as the STM tip. This forms the foundation of the Tersoff-Hamann approximation~\cite{TER83,TER85}, which is also reviewed here. The key result of this \siSection{} is an expression for the STM signal in terms of the system's spectral function.

\subsection{Bardeen's tunneling theory}\label{app:sec:theory_stm:bardeen}

\begin{figure}[t]
	\centering
	\includegraphics[width=0.75\textwidth]{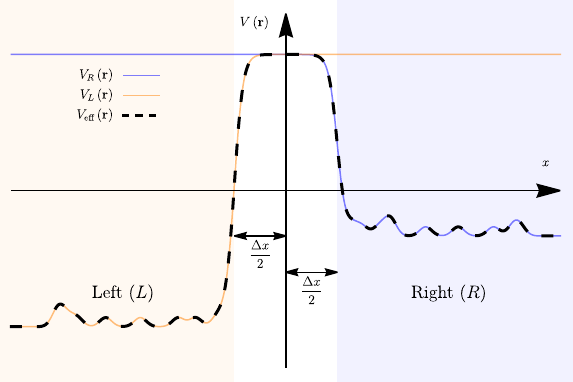}
	\caption{Tunneling of electrons between two metallic regions separated by a vacuum layer. We consider two metals separated by a vacuum of thickness $\Delta x > 0$, where the left ($L$) metal occupies the region $x < - \frac{\Delta x}{2}$ (shown in orange), and the right ($R$) metal occupies the $x > \frac{\Delta x}{2}$ region (shown in blue). The effective potential $V_{\text{eff}} \left( \vec{r} \right)$ experienced by an electron is represented by the black dashed line along a specific path parallel to the $x$ axis. The orange and blue lines represent the effective electron potential for hypothetical systems containing only the left or right metal, respectively. Within each metallic region, the potential is unaffected by the presence of the other metal and rapidly drops to its vacuum value outside the metallic regions.}
	\label{app:fig:bardeen_geometry}
\end{figure}

We begin by considering two metals, one occupying most of the positive $x$ half-plane and the other occupying most of the negative $x$ half-plane, as depicted in \cref{app:fig:bardeen_geometry}. We neglect electron-electron correlations by assuming that the electrons are well described by Landau's Fermi liquid theory within the metallic regions, and that within the barrier, the electron density is sufficiently low for Coulomb repulsion to be safely ignored. The Hamiltonian operator $\hat{H}$ for the \emph{entire} system (including both metals and the gap) can therefore be taken to include only one-body terms\footnote{In the barrier between the two metals, we consider only one-body terms since electron-electron interactions are neglected. In the metallic regions, the system is assumed to be described by quasi-particles, with their dynamics governed by a one-body Hamiltonian, where electron-electron interactions are treated at the mean-field level.}, which gives
\begin{equation}
	 \hat{H}  = \int \dd[3]{r} \ket{\vec{r}} \left( - \frac{1}{2m_e} \nabla^2 + V_{\text{eff}} \left( \vec{r} \right) \right)  \bra{\vec{r}},
\end{equation}
where $V_{\text{eff}}$ represents the effective potential describing the system, and $m_e$ is the bare electron mass. We have also employed the position eigenbasis $\ket{\vec{r}}$. For simplicity, we ignore the spin degree of freedom, assuming that both the sample and the tip feature weak spin-orbit coupling, an approximation that holds very well for the system and energy scales considered in this work.

\subsubsection{Hamiltonians for the left and right metallic regions}\label{app:sec:theory_stm:bardeen:left_right_state}

We now define the Hamiltonian $\hat{H}_{L}$ ($\hat{H}_{R}$) for a hypothetical system that contains only the left (right) metallic region. By assuming the absence of electron-electron correlations, the Hamiltonian for such a system is given by
\begin{equation}
	\hat{H}_{L/R}  = \int \dd[3]{r} \ket{\vec{r}} \left( - \frac{1}{2m_e} \nabla^2 + V_{L/R} \left( \vec{r} \right) \right)  \bra{\vec{r}},
\end{equation}
where $V_{L} \left( \vec{r} \right)$ and $V_{R} \left( \vec{r} \right)$ represent the effective potentials in the left and right metallic regions, respectively. As shown schematically in \cref{app:fig:bardeen_geometry}, the presence (or absence) of each metallic region only significantly affects the effective potential in its immediate vicinity. Specifically, we can approximate the effective potential as
\begin{equation}
    \label{app:eqn:separation_of_potential}
	V_{\text{eff}} \left( \vec{r} \right) \approx \begin{cases}
		V_{L} \left( \vec{r} \right) & \qq{for} x \lesssim \frac{\Delta x}{2} \\
		V_{R} \left( \vec{r} \right) & \qq{for} x \gtrsim -\frac{\Delta x}{2}
	\end{cases}.
\end{equation}
Let us denote the orthonormal eigenbasis of $\hat{H}_{L/R}$ by $\hat{a}^\dagger_{L/R,n} \ket{0}$, such that
\begin{equation}
	\hat{H}_{L/R} \hat{a}^\dagger_{L/R,n} \ket{0} = \epsilon_{L/R,n} \hat{a}^\dagger_{L/R,n} \ket{0},
\end{equation}
where $n > 0$ labels the eigenstates created by $\hat{a}^\dagger_{L/R,n}$, and $\epsilon_{L/R,n}$ are the corresponding energy levels, ordered by increasing energy. The associated position-space wave functions are given by
\begin{equation}
	\phi_{L/R,n} \left( \vec{r} \right) = \bra{\vec{r}} \hat{a}^\dagger_{L/R,n} \ket{0}.
\end{equation}
The low-energy states (henceforth denoted as ``bound'' states) $\phi_{L/R,n} \left( \vec{r} \right)$ are primarily localized in the respective left or right metallic regions, with only minimal probability density extending into the vacuum. Consequently, we make the further assumption that the bound state eigenfunctions of $\hat{H}_L$ and $\hat{H}_R$ are \emph{approximately} orthogonal to one another~\cite{GOT06}, leading to the condition
\begin{equation}
	\label{app:eqn:approx_orthogonality}
	\bra{0} \hat{a}^\dagger_{L,n} \hat{a}_{R,m} \ket{0} \approx 0.	
\end{equation}
We note that \cref{app:eqn:approx_orthogonality} is not true for high-energy states, which consist of plane waves both inside \emph{and outside} their corresponding metallic regions. Finally, we note that $\hat{H}$, $\hat{H}_L$, and $\hat{H}_R$ are all defined in the same region ({\it e.g.}{}, $\hat{H}_L$ is defined for the entire space, not just for the left metallic region). As such, the eigenstates of each of $\hat{H}$, $\hat{H}_L$, and $\hat{H}_R$ \emph{individually} constitute a complete basis. 

As shown schematically in \cref{app:fig:bardeen_geometry}, if $V_{L} \left( \vec{r} \right) = 0$ for $x > 0$ and $V_{R} \left( \vec{r} \right) = 0$ for $x < 0$, it follows that $V \left( \vec{r} \right) \approx V_{L} \left( \vec{r} \right) + V_{R} \left( \vec{r} \right)$, implying $\hat{H} \approx \hat{H}_L + \hat{H}_R$. However, the condition $\hat{H} \approx \hat{H}_L + \hat{H}_R$ is not strictly necessary. Our derivation remains valid even when this condition is not satisfied, such as in the presence of a potential difference between the two metals (when the vacuum potential is not constant in space) provided that \cref{app:eqn:separation_of_potential} holds true.

\subsubsection{Electron tunneling rate}\label{app:sec:theory_stm:bardeen:tunneling_rate}
Now, consider a Slater determinant of bound states from both metallic regions,
\begin{equation}
	\label{app:eqn:bardeen_initial_state}
	\ket{\Phi_0} = \hat{a}^\dagger_{L,n_1} \hat{a}^\dagger_{L,n_2} \dots \hat{a}^\dagger_{R,m_1} \hat{a}^\dagger_{R,m_2} \dots \ket{0},
\end{equation}
where $\left\lbrace n_i \right\rbrace$ and $\left\lbrace m_i \right\rbrace$ are distinct sets of quantum numbers. By ``bound states'', we refer to states of sufficiently low energy that decay exponentially outside the metallic regions (as opposed to high-energy states, which behave as plane waves in the vacuum and exhibit a macroscopic probability density beyond the metallic regions). The state in \cref{app:eqn:bardeen_initial_state} is in general \emph{not} an eigenstate of the Hamiltonian $\hat{H}$, so particles will tunnel between the two metallic regions as a result of time evolution. Our goal is to compute the tunneling rate between these regions. 

If the tunneling is completely neglected ({\it i.e.}{}, the metals are far apart), the state from \cref{app:eqn:bardeen_initial_state} is an eigenstate of $\hat{H}$ and is expected to evolve as
\begin{equation}
	\ket{\Phi_0 \left( t \right)} = \ket{\Phi_0} e^{- i \mathcal{E}_0 t}, 
\end{equation}
where
\begin{equation}
	\label{app:eqn:bardeen_gs_energy}
	\mathcal{E}_0 = \epsilon_{L,n_1} + \epsilon_{L,n_2} + \dots + \epsilon_{R,m_1} + \epsilon_{R,m_2} + \dots
\end{equation}
gives the energy of the system in the absence of tunneling. When tunneling is taken into account, electrons will move between the two metallic regions, and the time-evolved state can be expressed as a perturbation series
\begin{equation}
	\label{app:eqn:bardeen_pert_theory_ansatz}
	\ket{\Phi_0 \left( t \right)} = \ket{\Phi_0} e^{- i \mathcal{E}_0 t} + \sum_{\nu=0}^{\infty} c_{\nu} (t) \ket{\Phi_\nu}  e^{- i \mathcal{E}_{\nu} t}.
\end{equation}
In \cref{app:eqn:bardeen_pert_theory_ansatz}, $\ket{\Phi_\nu}$ for $\nu \geq 1$ represent distinct Slater determinant states (different from $\ket{\Phi_0}$) involving the same number of particles as $\ket{\Phi_0}$, but with different $\hat{a}^\dagger_{L/R,n}$ operators\footnote{As discussed in \cref{app:sec:theory_stm:bardeen:left_right_state}, the two set of states $\hat{a}^\dagger_{L,n}$ and $\hat{a}^\dagger_{R,n}$ individually furnish a complete basis for the entire space.}. For $\nu \geq 1$, the total energy $\mathcal{E}_{\nu}$ is obtained by summing the single-particle energies of the creation operators in $\ket{\Phi_\nu}$, similar to how $\mathcal{E}_0$ is defined for $\ket{\Phi_0}$ in \cref{app:eqn:bardeen_gs_energy}. The coefficients $c_{\nu} (t)$, which characterize the amplitude of each state, are assumed to be much smaller than $1$.

To find the time dependence of $c_{\nu}(t)$, we substitute \cref{app:eqn:bardeen_pert_theory_ansatz} into the Schr\"odinger equation
\begin{align}
	\left( i \dv{t} - \hat{H} \right) \ket{\Phi_0 \left( t \right)} &= 0 \nonumber \\
	\left( \mathcal{E}_{0} - \hat{H} \right) \ket{\Phi_0} e^{- i \mathcal{E}_0 t} + \sum_{\nu=0}^{\infty} i \dv{c_{\nu} (t) }{t} \ket{\Phi_\nu} e^{- i \mathcal{E}_{\nu} t} + \sum_{\nu=0}^{\infty} c_{\nu} (t) \left( \mathcal{E}_{\nu} - \hat{H} \right) \ket{\Phi_\nu} e^{- i \mathcal{E}_{\nu} t} &= 0 \label{app:eqn:bardeen_schr_1}
\end{align}
Taking the inner product of \cref{app:eqn:bardeen_schr_1} with $\bra{\Phi_\mu}$ gives
\begin{equation}
	\label{app:eqn:bardeen_schr_2}
	\sum_{\nu=0}^{\infty} i \dv{c_{\nu} (t) }{t} \bra{\Phi_\mu} \ket{\Phi_\nu} e^{-i \mathcal{E}_\nu t} = \bra{\Phi_{\mu}}\left( \hat{H} - \mathcal{E}_{0} \right) \ket{\Phi_0} e^{- i \mathcal{E}_0 t} + \sum_{\nu=0}^{\infty} c_{\nu} (t) \bra{\Phi_{\mu}} \left( \hat{H} - \mathcal{E}_{\nu} \right) \ket{\Phi_\nu} e^{- i \mathcal{E}_{\nu} t}. 
\end{equation}
Since the tunneling between the two metallic regions is weak, we assume $\abs{c_{\nu} (t)} \ll 1$ over the relevant time scales, allowing us to neglect the second term on the right-hand side\footnote{We are focused on the leading-order behavior of $c_{\nu} (t)$, which is dominated by the first term.}. Additionally, since states containing unbound electrons (whose energy is much larger than $\mathcal{E}_0$) contribute negligibly, the sum over $\nu$ in \cref{app:eqn:bardeen_schr_2} can be restricted to bound-state Slater determinants. These distinct Slater determinants, $\ket{\Phi_\nu}$ and $\ket{\Phi_\mu}$, are orthogonal due to the assumption in \cref{app:eqn:approx_orthogonality}, so
\begin{equation}
	\bra{\Phi_\mu} \ket{\Phi_\nu} \approx \delta_{\mu\nu}.
\end{equation} 
Thus, \cref{app:eqn:bardeen_schr_2} simplifies to 
\begin{equation}
	\label{app:eqn:bardeen_schr_3}
	i \dv{c_{\mu} (t) }{t} \approx \bra{\Phi_{\mu}}\left( \hat{H} - \mathcal{E}_{0} \right) \ket{\Phi_0} e^{i \left( \mathcal{E}_{\mu} - \mathcal{E}_0  \right) t} .
\end{equation}
This result is typical for time-dependent perturbation theory in quantum mechanics~\cite{SHA13}. By integrating \cref{app:eqn:bardeen_schr_3}, we find that the probability of the system being in state $\ket{\Phi_{\mu}}$ after time $t$ is
\begin{align}
	P_{\mu} (t) = \abs{c_{\mu}(t)}^2 &\approx \abs{ \bra{\Phi_{\mu}}\left( \hat{H} - \mathcal{E}_{0} \right) \ket{\Phi_0}}^2 \abs{\frac{e^{i \left( \mathcal{E}_{\mu} - \mathcal{E}_0  \right) t} - 1}{\mathcal{E}_{\mu} - \mathcal{E}_0}}^2 \nonumber \\
	&\approx \abs{ \bra{\Phi_{\mu}}\left( \hat{H} - \mathcal{E}_{0} \right) \ket{\Phi_0}}^2 \abs{\frac{\sin \left( \frac{\mathcal{E}_{\mu} - \mathcal{E}_0}{2} t \right)}{\frac{\mathcal{E}_{\mu} - \mathcal{E}_0}{2}}}^2.
\end{align} 
Taking the large-time limit (which does not necessarily imply that $\abs{c_{\nu} (t)} \ll 1$ is violated, since the magnitude of the latter is also controlled by the matrix element, which is assumed to be small)~\cite{SHA13},
\begin{equation}
	\lim_{t \to \infty} \frac{\sin^2 \left( \alpha t \right)}{\alpha^2} = \pi t \delta (\alpha) = 2 \pi t \delta (2 \alpha),
\end{equation}
we find that for long times, the probability of the system being in state $\ket{\Phi_\mu}$ is
\begin{equation}
	P_{\mu} (t) \approx 2 \pi t \abs{ \bra{\Phi_{\mu}}\left( \hat{H} - \mathcal{E}_{0} \right) \ket{\Phi_0}}^2 \delta \left( \mathcal{E}_{\mu} - \mathcal{E}_0 \right).
\end{equation}
This expression resembles Fermi's Golden Rule, where the transition rate from the state $\ket{\Phi_0}$ to the state $\ket{\Phi_{\mu}}$ is
\begin{equation}
	\label{app:eqn:bardeen_fermi_golden_rule}
	\dv{P_{\mu} (t)}{t} \approx 2 \pi \abs{ \bra{\Phi_{\mu}}\left( \hat{H} - \mathcal{E}_{0} \right) \ket{\Phi_0}}^2 \delta \left( \mathcal{E}_{\mu} - \mathcal{E}_0 \right).
\end{equation}

\subsubsection{Simplifying the tunneling many-body matrix element}\label{app:sec:theory_stm:bardeen:tunnelling_matrix_element_mb}

We now address the matrix element appearing in \cref{app:eqn:bardeen_fermi_golden_rule}. Since $\hat{H} - \mathcal{E}_{0}$ is a one-body operator and both $\ket{\Phi_0}$ and $\ket{\Phi_\nu}$ are Slater determinants, the latter can only differ from the former by a \emph{single} electron state. Focusing on processes where an electron tunnels across the barrier (which we assume, without loss of generality, is from the left to the right metallic region), we can write
\begin{equation}
	\ket{\Phi_\mu} = \hat{a}^\dagger_{R,m_0} \hat{a}_{L,n_1} \ket{\Phi_0} = \hat{a}^\dagger_{L,n_2} \hat{a}^\dagger_{L,n_3} \dots \hat{a}^\dagger_{R,m_0} \hat{a}^\dagger_{R,m_1} \hat{a}^\dagger_{R,m_2} \dots \ket{0},
\end{equation}
where $\hat{a}^\dagger_{L,n_1}$ ($\hat{a}^\dagger_{R,m_0}$) denotes a filled (empty) electron state in $\ket{\Phi_0}$. The second equality follows directly from \cref{app:eqn:bardeen_initial_state}.

To simplify the matrix element $\bra{\Phi_{\mu}} \left( \hat{H} - \mathcal{E}_{0} \right) \ket{\Phi_0}$, we make use of the fact that $\hat{H}$ is a one-body operator and apply the following lemma:

\begin{lemma*}
	Let $\hat{O}$ be a quadratic, number-conserving operator, and let $\hat{\gamma}^\dagger_{n}$ (with $n \in \mathbb{Z}$) denote fermionic creation operators, where the corresponding states are not necessarily orthogonal for different $n$. Consider two distinct sets of positive integers, $\left\lbrace n_1, n_2, \dots, n_N \right\rbrace$ and $\left\lbrace m_1, m_2, \dots, m_N \right\rbrace$. Then,
	\begin{equation}
		\label{app:eqn:many_body_overlap}
		\mel**{0}{\hat{\gamma}_{n_{N}} \dots \hat{\gamma}_{n_{2}} \hat{\gamma}_{n_{1}} \hat{O} \hat{\gamma}^\dagger_{m_{1}} \hat{\gamma}^\dagger_{m_{2}} \dots \hat{\gamma}^\dagger_{m_{N}}}{0} = \sum_{\sigma \in S_N} \epsilon(\sigma) \sum_{i=1}^{N} \mel**{0}{\hat{\gamma}_{n_{i}} \hat{O} \hat{\gamma}^\dagger_{m_{\sigma(i)}}}{0} \prod_{\substack{j = 1 \\ j \neq i}}^{N} \mel**{0}{\hat{\gamma}_{n_{j}} \hat{\gamma}^\dagger_{m_{\sigma(j)}}}{0},
	\end{equation}
	where $S_N$ is the set of all permutations of $N$ elements, and $\epsilon(\sigma)$ is the signature of the permutation $\sigma$.
\end{lemma*}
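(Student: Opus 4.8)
\textit{Proof strategy}.~The plan is to move the identity out of second-quantized language into the first-quantized, $N$-particle picture, where it reduces to the Leibniz expansion of a determinant and all fermionic signs are handled automatically. Write $\ket{\gamma_n}=\hat\gamma^\dagger_n\ket{0}$ for the single-particle state attached to each $\hat\gamma^\dagger_n$, living in the single-particle Hilbert space $\mathcal H$, and identify the (generically non-orthogonal) Slater determinant $\hat\gamma^\dagger_{m_1}\dots\hat\gamma^\dagger_{m_N}\ket{0}$ with the simple wedge $\ket{\gamma_{m_1}}\wedge\dots\wedge\ket{\gamma_{m_N}}\in\wedge^N\mathcal H$. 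The fact that lets us discard the orthogonality assumption at the outset is the standard overlap identity $\braket{\gamma_{n_1}\wedge\dots\wedge\gamma_{n_N}}{\gamma_{m_1}\wedge\dots\wedge\gamma_{m_N}}=\det\!\big(\braket{\gamma_{n_i}}{\gamma_{m_j}}\big)=\det\!\big(\mel{0}{\hat\gamma_{n_i}\hat\gamma^\dagger_{m_j}}{0}\big)$, which is valid for arbitrary vectors $\ket{\gamma_{n_i}},\ket{\gamma_{m_j}}$.

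Next I would dispose of the operator. Since $\hat O$ is quadratic and number-conserving, fix any orthonormal single-particle basis $\{\hat c_a\}$ and write $\hat O=\sum_{ab}O_{ab}\,\hat c^\dagger_a\hat c_b$; the standard second-quantization dictionary then says that on the $N$-particle sector $\hat O$ acts as the derivation $\sum_{k=1}^N O^{(k)}$, namely it maps $\gamma_{m_1}\wedge\dots\wedge\gamma_{m_N}$ to $\sum_{k=1}^N \gamma_{m_1}\wedge\dots\wedge(O\gamma_{m_k})\wedge\dots\wedge\gamma_{m_N}$, where $O$ is the single-particle operator on $\mathcal H$ with matrix $(O_{ab})$. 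I would also record the elementary identity $\braket{\gamma_n}{O\gamma_m}=\mel{0}{\hat\gamma_n\hat O\hat\gamma^\dagger_m}{0}$, which turns the ``distinguished'' factor appearing in the claim into a genuine single-particle matrix element.

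Putting these together, the left-hand side equals $\sum_{k=1}^N\braket{\gamma_{n_1}\wedge\dots\wedge\gamma_{n_N}}{\gamma_{m_1}\wedge\dots\wedge O\gamma_{m_k}\wedge\dots\wedge\gamma_{m_N}}=\sum_{k=1}^N\det M^{(k)}$, where $M^{(k)}$ is the $N\times N$ matrix whose $(i,j)$ entry is $\mel{0}{\hat\gamma_{n_i}\hat\gamma^\dagger_{m_j}}{0}$ for $j\neq k$ and $\mel{0}{\hat\gamma_{n_i}\hat O\hat\gamma^\dagger_{m_k}}{0}$ for $j=k$. Expanding each $\det M^{(k)}=\sum_{\sigma\in S_N}\epsilon(\sigma)\prod_i M^{(k)}_{i\sigma(i)}$, isolating the unique factor with $\sigma(i)=k$, swapping the sums over $k$ and $\sigma$, and relabeling the outer index via $i=\sigma^{-1}(k)$ turns $\sum_{k,\sigma}$ into $\sum_{\sigma,i}$ and yields exactly the right-hand side $\sum_\sigma\epsilon(\sigma)\sum_i\mel{0}{\hat\gamma_{n_i}\hat O\hat\gamma^\dagger_{m_{\sigma(i)}}}{0}\prod_{j\neq i}\mel{0}{\hat\gamma_{n_j}\hat\gamma^\dagger_{m_{\sigma(j)}}}{0}$. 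The only non-mechanical ingredients are the wedge-product overlap formula and the derivation action of a one-body operator; both are classical, and everything downstream is the Leibniz expansion.

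I expect the sign bookkeeping to be the single point at which a naive proof would stumble, and routing through the exterior algebra is precisely the device that offloads it onto the signature of $\det$. As a cross-check---or an alternative argument for readers who prefer to stay in Fock space---one may instead apply Wick's theorem directly to $\mel{0}{\hat\gamma_{n_N}\dots\hat\gamma_{n_1}\big(\sum_{ab}O_{ab}\hat c^\dagger_a\hat c_b\big)\hat\gamma^\dagger_{m_1}\dots\hat\gamma^\dagger_{m_N}}{0}$: in the vacuum the only surviving contractions are $\mel{0}{\hat\gamma_n\hat c^\dagger_a}{0}$, $\mel{0}{\hat c_b\hat\gamma^\dagger_m}{0}$ and $\mel{0}{\hat\gamma_n\hat\gamma^\dagger_m}{0}$, so $\hat c^\dagger_a$ must pair with some $\hat\gamma_{n_i}$ and $\hat c_b$ with some $\hat\gamma^\dagger_{m_{\sigma(i)}}$, the remaining $\hat\gamma$'s pair up according to a bijection, the sum over $a,b$ against $O_{ab}$ reassembles $\mel{0}{\hat\gamma_{n_i}\hat O\hat\gamma^\dagger_{m_{\sigma(i)}}}{0}$, and the Wick sign of the resulting configuration collapses to $\epsilon(\sigma)$. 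An induction on $N$ (anticommuting $\hat\gamma_{n_1}$ rightward) also works but is the most sign-intensive route, so I would present the exterior-algebra proof as the main one.
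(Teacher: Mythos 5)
Your proof is correct, and it takes a genuinely different route from the paper's. The paper stays entirely in Fock space: it expands $\hat{O}$ and the $\hat{\gamma}^\dagger_{n}$ in an orthonormal basis $\hat{c}^\dagger_{n}$, evaluates the resulting vacuum expectation value of a string of $2(N+1)$ operators as a sum over $S_{N+1}$ weighted by $\epsilon(\sigma)\left(\delta_{N+1,\sigma(N+1)}-1\right)$, and then does the combinatorial work of mapping the permutations with $\sigma(N+1)\neq N+1$ bijectively onto $S_{N}\times\{1,\dots,N\}$ via the transpositions $\left[N+1,j\right]$ --- essentially your ``cross-check'' Wick-theorem route, carried out explicitly. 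Your main argument instead lifts everything to first quantization: the non-orthogonal Slater overlap becomes $\det\big(\mel{0}{\hat{\gamma}_{n_i}\hat{\gamma}^\dagger_{m_j}}{0}\big)$, the number-conserving one-body operator acts as a derivation on the wedge product, and the identity reduces to $\sum_k \det M^{(k)}$ followed by the Leibniz expansion and a trivial reindexing $i=\sigma^{-1}(k)$; the ordering conventions (the bra being the adjoint of $\hat{\gamma}^\dagger_{n_1}\cdots\hat{\gamma}^\dagger_{n_N}\ket{0}$) make the Gram determinant appear with no extra sign, so the bookkeeping is indeed absorbed into $\epsilon(\sigma)$ automatically. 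What each approach buys: yours is shorter and avoids the $S_{N+1}\to S_N$ surgery entirely, at the cost of invoking two classical facts (the determinant overlap formula for non-orthogonal determinants and the derivation property of one-body operators) whose standard proofs are themselves Wick-type computations; the paper's version is self-contained at the level of elementary anticommutators and directly exhibits, in the form needed downstream, the reduction of the many-body tunneling matrix element to a single one-body matrix element times overlaps. Both are complete and valid.
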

\begin{proof}
    The lemma can be proven using Wick's theorem. Let $\hat{c}^\dagger_{n}$ for $n \in \mathbb{Z}$ denote an orthonormal fermionic basis. Expanding the operator $\hat{O}$ in this basis, we have
    \begin{equation}
        \label{app:eqn:lemma_O_to_c}
        \hat{O} = \sum_{i,j} O_{ij} \hat{c}^\dagger_{i} \hat{c}_{j}, 
    \end{equation}
    where $O_{ij} = \mel**{0}{\hat{c}_{i} \hat{O} \hat{c}^\dagger_{j}}{0}$. Similarly, the operators $\hat{\gamma}^\dagger_{n}$ can be expressed as
    \begin{equation}
    	\label{app:eqn:lemma_gamma_to_c}
        \hat{\gamma}^\dagger_{n} = \sum_{m} u_{n m} \hat{c}^\dagger_{m}.
    \end{equation}
    Substituting these into \cref{app:eqn:many_body_overlap}, we can rewrite the matrix element in terms of the orthonormal basis $\hat{c}^\dagger_{n}$ as
    \begin{align}
        \mel**{0}{\hat{\gamma}_{n_{N}} \dots \hat{\gamma}_{n_{2}} \hat{\gamma}_{n_{1}} \hat{O} \hat{\gamma}^\dagger_{m_{1}} \hat{\gamma}^\dagger_{m_{2}} \dots \hat{\gamma}^\dagger_{m_{N}}}{0} =& \sum_{ \left\lbrace n'_a \right\rbrace} \sum_{ \left\lbrace m'_a \right\rbrace} O_{m'_{N+1} n'_{N+1}} \left( \prod_{a=1}^{N} u^*_{n_a,n'_a} u_{m_a,m'_a} \right) \nonumber \\
        & \times \mel**{0}{\hat{c}_{n'_{N}} \dots \hat{c}_{n'_{2}} \hat{c}_{n'_{1}} \hat{c}^\dagger_{m'_{N+1}} \hat{c}_{n'_{N+1}} \hat{c}^\dagger_{m'_{1}} \hat{c}^\dagger_{m'_{2}} \dots \hat{c}^\dagger_{m'_{N}}}{0}.
    \end{align}
    Focusing on the last matrix element, we find
    \begin{align}
        \mel**{0}{\hat{c}_{n'_{N}} \dots \hat{c}_{n'_{2}} \hat{c}_{n'_{1}} \hat{c}^\dagger_{m'_{N+1}} \hat{c}_{n'_{N+1}} \hat{c}^\dagger_{m'_{1}} \hat{c}^\dagger_{m'_{2}} \dots \hat{c}^\dagger_{m'_{N}}}{0} =& \delta_{m'_{N+1} n'_{N+1}}  \mel**{0}{\hat{c}_{n'_{N}} \dots \hat{c}_{n'_{2}} \hat{c}_{n'_{1}} \hat{c}^\dagger_{m'_{1}} \hat{c}^\dagger_{m'_{2}} \dots \hat{c}^\dagger_{m'_{N}}}{0} \nonumber \\
        & - \mel**{0}{\hat{c}_{n'_{N}} \dots \hat{c}_{n'_{2}} \hat{c}_{n'_{1}} \hat{c}_{n'_{N+1}} \hat{c}^\dagger_{m'_{N+1}} \hat{c}^\dagger_{m'_{1}} \hat{c}^\dagger_{m'_{2}} \dots \hat{c}^\dagger_{m'_{N}}}{0} \nonumber \\
        =& \sum_{\sigma \in S_{N+1}} \epsilon (\sigma) \left( \delta_{N+1, \sigma(N+1)} - 1 \right) \prod_{a=1}^{N+1} \delta_{n'_a, m'_{\sigma(a)}},
    \end{align}
    which leads to
    \begin{align}
        & \mel**{0}{\hat{\gamma}_{n_{N}} \dots \hat{\gamma}_{n_{2}} \hat{\gamma}_{n_{1}} \hat{O} \hat{\gamma}^\dagger_{m_{1}} \hat{\gamma}^\dagger_{m_{2}} \dots \hat{\gamma}^\dagger_{m_{N}}}{0} = \nonumber \\
        =& \sum_{\sigma \in S_{N+1}} \epsilon (\sigma) \left( \delta_{N+1, \sigma(N+1)} - 1 \right) \sum_{ \left\lbrace m'_a \right\rbrace} O_{m'_{N+1} m'_{\sigma(N+1)}} \left( \prod_{a=1}^{N} u^*_{n_a,m'_{\sigma(a)}} u_{m_a,m'_a} \right) \nonumber \\
        =& \sum_{\sigma \in S_{N+1}} \epsilon (\sigma) \left( \delta_{N+1, \sigma(N+1)} - 1 \right) \sum_{ \left\lbrace m'_a \right\rbrace} O_{m'_{N+1} m'_{\sigma(N+1)}} \left( \prod_{a=1}^{N} u^*_{n_a,m'_{\sigma(a)}} \right) \left( \prod_{a=1}^{N} u_{m_a,m'_{a}} \right) \nonumber \\
        =& \sum_{\sigma \in S_{N+1}} \epsilon (\sigma) \left( \delta_{N+1, \sigma(N+1)} - 1 \right) \sum_{ \left\lbrace m'_a \right\rbrace} u^{*}_{n_{\sigma^{-1}(N+1)}, m'_{N+1}} O_{m'_{N+1} m'_{\sigma(N+1)}} u_{m_{\sigma(N+1)},m'_{\sigma(N+1)}} \nonumber \\
        & \times \left( \prod_{\substack{a=1 \\ \sigma(a) \neq N+1}}^{N} u^*_{n_a,m'_{\sigma(a)}} \right) \left( \prod_{\substack{a=1 \\ \sigma(a) \neq N+1}}^{N} u_{m_{\sigma(a)},m'_{\sigma(a)}} \right) \nonumber \\
        =& \sum_{\sigma \in S_{N+1}} \epsilon (\sigma) \left( \delta_{N+1, \sigma(N+1)} - 1 \right)  \mel**{0}{\hat{\gamma}_{n_{\sigma^{-1}(N+1)}} \hat{O} \hat{\gamma}^\dagger_{m_{\sigma(N+1)}}}{0} \prod_{\substack{a=1 \\ \sigma(a) \neq N+1}}^{N} \mel**{0}{\hat{\gamma}_{n_{a}} \hat{\gamma}^\dagger_{m_{\sigma(a)}}}{0}. \label{app:eqn:inner_prod_proof_perm_N1}
    \end{align}
    In the last line of \cref{app:eqn:inner_prod_proof_perm_N1}, we have employed the fact that 
    \begin{equation}
    	\mel**{0}{\hat{\gamma}_{n_{n}} \hat{O} \hat{\gamma}^\dagger_{m_{m}}}{0} = \sum_{n',m'} O_{n' m'} u^{*}_{n n'} u_{m m'} \qq{and}
    	\mel**{0}{\hat{\gamma}_{n_{n}} \hat{\gamma}^\dagger_{m_{m}}}{0} = \sum_{i} u^{*}_{n i} u_{m i}.
    \end{equation}
  
    All that remains is to rewrite the sum in \cref{app:eqn:inner_prod_proof_perm_N1} as a sum over permutations in $S_N$ rather than $S_{N+1}$. To do so, we first observe that the terms in \cref{app:eqn:inner_prod_proof_perm_N1} are nonzero only for permutations $\sigma \in S_{N+1}$ where $\sigma(N+1) \neq N+1$. Any such permutation can be generated from a permutation in $S_N$ by multiplying it with a transposition. Denoting by $\left[i, j\right]$ the transposition that swaps $i$ and $j$, the following holds
    \begin{align}
        \sigma \in S_{N+1}, \qq{with} \sigma(N+1) \neq N+1 &\implies \sigma \left[N+1, \sigma^{-1}(N+1) \right] \in S_{N}, \nonumber \\
        \sigma' \in S_{N}, \qq{and} j \in \mathbb{Z} \qq{with} 1 \leq j \leq N &\implies \sigma = \sigma' \left[N+1, j \right]  \in S_{N+1}, \qq{with} \sigma(N+1) \neq N+1. \label{app:eqn:permutation_mapping}
    \end{align}
    This one-to-one correspondence between the sets of permutations enables us to replace the summation in \cref{app:eqn:inner_prod_proof_perm_N1} as
    \begin{equation}
        \sum_{\sigma \in S_{N+1}} \epsilon (\sigma) \left( \delta_{N+1, \sigma(N+1)} - 1 \right) \to \sum_{\sigma' \in S_{N}} \sum_{j=1}^{N} \epsilon \left( \sigma' \right),
    \end{equation}
    where $\sigma = \sigma' \left[N+1, j \right]$. Under this substitution, the following relations hold
    \begin{gather}
        \sigma^{-1} (N+1) = \left[N+1, j \right] \left(\sigma'\right)^{-1} (N+1) = j \\
        \sigma (N+1) =  \sigma' \left[N+1, j \right] (N+1) = \sigma' (j) \\
        \sigma(a) \neq N+1 \iff a \neq j \\
        \sigma(a) = \sigma'(a), \qq{for} 1 \leq a \leq N \qq{and} a \neq j.
    \end{gather}
    Using these substitutions, \cref{app:eqn:inner_prod_proof_perm_N1} simplifies to
    \begin{equation}
        \mel**{0}{\hat{\gamma}_{n_{N}} \dots \hat{\gamma}_{n_{2}} \hat{\gamma}_{n_{1}} \hat{O} \hat{\gamma}^\dagger_{m_{1}} \hat{\gamma}^\dagger_{m_{2}} \dots \hat{\gamma}^\dagger_{m_{N}}}{0} = \sum_{\sigma' \in S_{N}} \sum_{j=1}^{N} \epsilon \left( \sigma' \right) \mel**{0}{\hat{\gamma}_{n_{j}} \hat{O} \hat{\gamma}^\dagger_{m_{\sigma'(j)}}}{0} \prod_{\substack{a=1 \\ a \neq j}}^{N} \mel**{0}{\hat{\gamma}_{n_{a}} \hat{\gamma}^\dagger_{m_{\sigma'(a)}}}{0},
    \end{equation}
    which is just \cref{app:eqn:many_body_overlap}.
\end{proof}

We now have all the necessary components to evaluate the many-body matrix element $\bra{\Phi_{\mu}} \left( \hat{H} - \mathcal{E}_{0} \right) \ket{\Phi_0}$. According to \cref{app:eqn:many_body_overlap}, this matrix element can be expressed as a sum of products of one-body matrix elements and one-body overlaps. Since $\ket{\Phi_0}$ and $\ket{\Phi_{\mu}}$ differ by exactly one single-particle state, we observe that, depending on the permutation, each term in this sum will include a given number of off-diagonal Hamiltonian matrix elements and off-diagonal overlaps. This property allows us to derive an analytically tractable expression by relying on three key assumptions:
\begin{itemize}
	\item First, we assume that the bound states on the left and right metallic regions are approximately orthogonal, as per \cref{app:eqn:approx_orthogonality},
	\begin{equation}
		\abs{\bra{0} \hat{a}_{R/L,m} \hat{a}^\dagger_{L/R,n} \ket{0}} \ll 1.
	\end{equation}
	\item Second, since the bound states $\hat{a}^\dagger_{L/R,b} \ket{0}$ are mainly localized in one metallic region, we can approximate
	\begin{equation}
		\bra{0} \hat{a}_{L/R,n} \hat{H} \hat{a}^\dagger_{L/R,n} \ket{0} \approx \bra{0} \hat{a}_{L/R,n} \hat{H}_{L/R} \hat{a}^\dagger_{L/R,n} \ket{0} = \epsilon_{L/R,n}.
	\end{equation}
	\item Finally, we assume that the tunneling between the two regions is weak,
	\begin{equation}
		\abs{\bra{0} \hat{a}_{R/L,m} \hat{H} \hat{a}^\dagger_{L/R,n} \ket{0}} \sim \abs{\epsilon_{L/R,n} \bra{0} \hat{a}_{R/L,m} \hat{a}^\dagger_{L/R,n} \ket{0}} \sim \abs{\epsilon_{R/L,m} \bra{0} \hat{a}_{R/L,m} \hat{a}^\dagger_{L/R,n} \ket{0}} \ll \abs{\epsilon_{L/R,n}}, \abs{\epsilon_{R/L,m}}.
	\end{equation}
\end{itemize}
Working to leading order in the small overlap between the bound states on the left and right metallic regions, there is only one permutation that corresponds to a single off-diagonal matrix element or a single off-diagonal overlap
\begin{equation}
	\bra{\Phi_\mu} \hat{H} \ket{\Phi_0} \approx \bra{0}\hat{a}_{R,m_0} \hat{H}  \hat{a}^\dagger_{L,n_1}\ket{0} + \left( \epsilon_{L,n_2} + \epsilon_{L,n_3} + \dots + \epsilon_{R,m_1} + \epsilon_{R,m_2} + \dots \right) \bra{0}\hat{a}_{R,m_0}  \hat{a}^\dagger_{L,n_1}\ket{0}.
\end{equation}
This result allows us to conclude that
\begin{equation}
	\bra{\Phi_{\mu}}\left( \hat{H} - \mathcal{E}_{0} \right) \ket{\Phi_0} \approx \bra{0} \hat{a}_{R,m_0} \left( \hat{H} - \epsilon_{L,n_1} \right) \hat{a}^\dagger_{L,n_1} \ket{0} . \label{app:eqn:bardeen_mat_elem}
\end{equation}
Thus, the many-body matrix element $\bra{\Phi_{\mu}}\left( \hat{H} - \mathcal{E}_{0} \right) \ket{\Phi_0}$ is approximately given by the single-particle matrix element $\bra{0} \hat{a}_{R,m_0} \left( \hat{H} - \epsilon_{L,n_1} \right) \hat{a}^\dagger_{L,n_1} \ket{0}$. In other words, the tunneling rate between \emph{any} two Slater many-body states $\ket{\Phi_{0}}$ and $\ket{\Phi_{\mu}}$ -- where $\ket{\Phi_{\mu}}$ is identical to $\ket{\Phi_{0}}$ except for one electron, originally in state $\hat{a}^\dagger_{L,n_1}$, which has tunneled to state $\hat{a}^\dagger_{R,m_0}$ -- is governed by the single-particle matrix element $\bra{0} \hat{a}_{R,m_0} \left( \hat{H} - \epsilon_{L,n_1} \right) \hat{a}^\dagger_{L,n_1} \ket{0}$. Our next goal is to derive a simpler form for this \emph{single-particle} matrix element.

\subsubsection{Simplifying the tunneling single-particle matrix element}\label{app:sec:theory_stm:bardeen:tunnelling_matrix_element_sp}

We begin by expressing \cref{app:eqn:bardeen_mat_elem} in the position basis
\begin{align}
	\bra{0} \hat{a}_{R,m_0} \left( \hat{H} - \epsilon_{L,n_1} \right) \hat{a}^\dagger_{L,n_1} \ket{0} &= \int \dd[3]{r} \bra{0} \hat{a}_{R,m_0} \ket{\vec{r}} \left( - \frac{1}{2m_e} \nabla^2 + V_{\text{eff}} \left( \vec{r} \right) - \epsilon_{L,n_1} \right)  \bra{\vec{r}} \hat{a}^\dagger_{L,n_1} \ket{0} \nonumber \\
	&= \int \dd[3]{r} \phi^{*}_{R,m_0} \left( \vec{r} \right) \left( - \frac{1}{2m_e} \nabla^2 + V_{\text{eff}} \left( \vec{r} \right) - \epsilon_{L,n_1} \right) \phi_{L,n_1} \left( \vec{r} \right) \nonumber \\
	&= \int_{x>0} \dd[3]{r} \phi^{*}_{R,m_0} \left( \vec{r} \right) \left( - \frac{1}{2m_e} \nabla^2 + V_{\text{eff}} \left( \vec{r} \right) - \epsilon_{L,n_1} \right) \phi_{L,n_1} \left( \vec{r} \right) \nonumber \\
	&+ \int_{x<0} \dd[3]{r} \phi^{*}_{R,m_0} \left( \vec{r} \right) \left( - \frac{1}{2m_e} \nabla^2 + V_{\text{eff}} \left( \vec{r} \right) - \epsilon_{L,n_1} \right) \phi_{L,n_1} \left( \vec{r} \right).\label{app:eqn:mat_elem_position_basis_1}
\end{align}
For $x < 0$, where $V_{\text{eff}} \left( \vec{r} \right) \approx V_{L} \left( \vec{r} \right)$, the second term of \cref{app:eqn:mat_elem_position_basis_1} vanishes because $\phi_{L,n_1} \left( \vec{r} \right)$ satisfies Schr\"odinger's equation. In the $x > 0$ region, where $V_{\text{eff}} \left( \vec{r} \right) \approx V_{R} \left( \vec{r} \right)$, we obtain
\begin{equation}
	\label{app:eqn:mat_elem_position_basis_2}
	\bra{0} \hat{a}_{R,m_0} \left( \hat{H} - \epsilon_{L,n_1} \right) \hat{a}^\dagger_{L,n_1} \ket{0} \approx \int_{x>0} \dd[3]{r} \phi^{*}_{R,m_0} \left( \vec{r} \right) \left( - \frac{1}{2m_e} \nabla^2 + V_{R} \left( \vec{r} \right) - \epsilon_{L,n_1} \right) \phi_{L,n_1} \left( \vec{r} \right).
\end{equation}
Since $\phi^{*}_{R,m_0} \left( \vec{r} \right)$ satisfies the Schr\"odinger equation for $\hat{H}_{R}$,
\begin{equation}
	\left( - \frac{1}{2m_e} \nabla^2 + V_{R} \left( \vec{r} \right) \right)\phi^{*}_{R,m_0} \left( \vec{r} \right) = \epsilon_{R,m_0} \phi^{*}_{R,m_0} \left( \vec{r} \right),
\end{equation}
we can rewrite \cref{app:eqn:mat_elem_position_basis_2} in a more symmetric form
\begin{align}
	\bra{0} \hat{a}_{R,m_0} \left( \hat{H} - \epsilon_{L,n_1} \right) \hat{a}^\dagger_{L,n_1} \ket{0} \approx& \int_{x>0} \dd[3]{r} \left[ \phi^{*}_{R,m_0} \left( \vec{r} \right) \left( - \frac{1}{2m_e} \nabla^2 + V_{R} \left( \vec{r} \right) - \epsilon_{L,n_1} \right) \phi_{L,n_1} \left( \vec{r} \right) \right. \nonumber \\
	&- \left. \left( - \frac{1}{2m_e} \nabla^2 \phi^{*}_{R,m_0} \left( \vec{r} \right) + V_{R} \left( \vec{r} \right)  \phi^{*}_{R,m_0} \left( \vec{r} \right) - \epsilon_{R,m_0}  \phi^{*}_{R,m_0} \left( \vec{r} \right) \right) \phi_{L,n_1} \left( \vec{r} \right) \right] \nonumber \\
	\approx& \int_{x>0} \dd[3]{r} \left[ \frac{-1}{2m_e} \left( \phi^{*}_{R,m_0} \left( \vec{r} \right)  \nabla^2 \phi_{L,n_1} \left( \vec{r} \right) - \phi_{L,n_1} \left( \vec{r} \right)  \nabla^2 \phi^{*}_{R,m_0} \left( \vec{r} \right)  \right) \right. \nonumber \\
	&+ \left. \left( \epsilon_{R,m_0} - \epsilon_{L,n_1} \right) \phi^{*}_{R,m_0} \left( \vec{r} \right) \phi_{R,n_1} \left( \vec{r} \right) \right] \label{app:eqn:mat_elem_position_basis_3}
\end{align}
To simplify \cref{app:eqn:mat_elem_position_basis_3} further, we take into account the Dirac $\delta$-function from \cref{app:eqn:bardeen_fermi_golden_rule},
\begin{equation}
	\delta \left( \mathcal{E}_{\mu} - \mathcal{E}_0 \right) = \delta \left( \epsilon_{R,m_0} - \epsilon_{L,n_1} \right),
\end{equation}
which implies that the energies of the states $\hat{a}^\dagger_{L,n_1}$ and $\hat{a}^\dagger_{R,m_0}$ must be equal, {\it i.e.}{}, $\epsilon_{R,m_0} = \epsilon_{L,n_1}$. Thus, we find that
\begin{align}
	\bra{0} \hat{a}_{R,m_0} \left( \hat{H} - \epsilon_{L,n_1} \right) \hat{a}^\dagger_{L,n_1} \ket{0} \approx& \int_{x>0} \dd[3]{r} \frac{-1}{2m_e} \nabla \cdot \left( \phi^{*}_{R,m_0} \left( \vec{r} \right)  \nabla \phi_{L,n_1} \left( \vec{r} \right) - \phi_{L,n_1} \left( \vec{r} \right)  \nabla \phi^{*}_{R,m_0} \left( \vec{r} \right)  \right) \nonumber \\
	\approx& - i \int_{x=0} \left( \phi^{*}_{R,m_0} \left( \vec{r} \right)  \frac{-i \nabla}{2m_e} \phi_{L,n_1} \left( \vec{r} \right) - \phi_{L,n_1} \left( \vec{r} \right)  \frac{-i \nabla}{2m_e} \phi^{*}_{R,m_0} \left( \vec{r} \right)  \right) \cdot \dd{\vec{S}} \nonumber \\
	\approx& - i \int_{x=0} \bra{0} \hat{a}_{R,m_0} \hat{\vec{j}} \left( \vec{r} \right) \hat{a}^\dagger_{L,n_1} \ket{0}  \cdot \dd{\vec{S}}, \label{app:eqn:mat_elem_position_basis_4}
\end{align}
where $\hat{\vec{j}} \left( \vec{r} \right)$ is the current density operator, represented in the position basis as
\begin{equation}
	\label{app:eqn:current_density_operator_pos_rep}
	\hat{\vec{j}} \left( \vec{r} \right) = \int \dd[3]{r'} \delta \left( \vec{r} - \vec{r}' \right) \left( \frac{-i \nabla}{2 m_e} - \frac{-i \nabla'}{2 m_e} \right) \ket{\vec{r}'}  \bra{\vec{r}}.
\end{equation}
In \cref{app:eqn:mat_elem_position_basis_4}, the surface element $\dd{\vec{S}}$ points in the negative $x$ direction. As a summary of this section, we note that the tunneling rate between the state $\ket{\Phi_0}$ containing quasiparticles in two metallic regions and the state $\ket{\Phi_{\mu}} = \hat{a}^\dagger_{R,m_0}\hat{a}_{L,n_1} \ket{\Phi_0}$, which differs from the state $\ket{\Phi_0}$ by the transfer of one particle from the state $\hat{a}^\dagger_{L,n_1}$ to the state $\hat{a}^\dagger_{R,m_0}$ is given by~\cite{BAR61}
\begin{align}
	\dv{P_{\mu} (t)}{t} &= 2 \pi \abs{ \bra{0} \hat{a}_{R,m_0} \left( \int \dd{\vec{S}} \cdot \hat{\vec{j}} \left( \vec{r} \right) \right) \hat{a}^\dagger_{L,n_1}\ket{0} }^2 \delta \left( \epsilon_{R,m_0} - \epsilon_{L,n_1} \right), \nonumber \\
	&= 2 \pi \abs{ \bra{\Phi_\mu} \hat{J} \ket{\Phi_0} }^2 \delta \left( \mathcal{E}_{\mu} - \mathcal{E}_{0} \right),
	\label{app:eqn:bardeen_fermi_golden_rule_final}
\end{align}
where we have introduced the one-body operator corresponding to the current through the interface between the two metals,
\begin{equation}
	\label{app:eqn:def_current_through_interface}
	\hat{J} = \sum_{m,n} \bra{0} \hat{a}_{R,m} \left( \int \dd{\vec{S}} \cdot \hat{\vec{j}} \left( \vec{r} \right) \right) \hat{a}^\dagger_{L,n}\ket{0} \hat{a}^\dagger_{R,m} \hat{a}_{L,n} + \text{h.c.}.
\end{equation}
In \cref{app:eqn:def_current_through_interface}, ``$+\text{h.c.}$'' denotes the addition of the Hermitian conjugate. 

\subsection{The Tersoff-Hamann approximation}\label{app:sec:theory_stm:TH_approx}

\begin{figure}[t]
	\centering
	\includegraphics[width=0.5\textwidth]{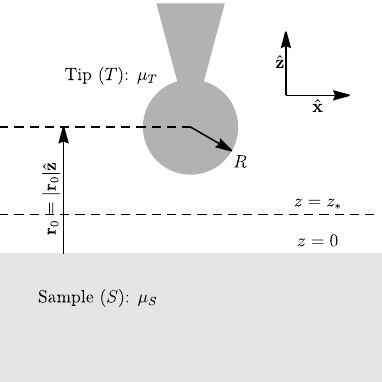}
	\caption{STM setup considered by the Tersoff-Hamann approximation. The tunneling setup is analogous to the one considered in Bardeen's formula for the tunneling current from \cref{app:fig:bardeen_geometry}, but the left and right metallic regions have been replaced by the sample and the tip, respectively. The tip consists of a metallic sphere of radius $R$ at electrochemical potential $\mu_T$. The sample is located in the $z=0$ plane and has chemical potential $\mu_S$. The center of the tip is located at $\vec{r}_0 = \abs{\vec{r}_0} \hat{\vec{z}}$. The dashed line at $z=z_*$ represents the integration surface used to compute the tunneling current between the sample and the tip, as will be done in \cref{app:TH_matrix_element_final}.}
	\label{app:fig:STM_setup}
\end{figure}

Having derived Bardeen's formula for the tunneling current, we now focus on the STM setup described in \cref{app:fig:STM_setup}. In an STM experiment, a metallic tip is brought close to the sample, and the resulting tunneling current between the tip and the sample is measured. The Tersoff-Hamann approximation~\cite{TER83,TER85} models the tip and sample as metallic regions separated by a vacuum layer. Using Bardeen's formula (with additional approximations detailed below), an analytical expression for the tunneling current in STM can be derived.

\subsubsection{Notation}\label{app:sec:theory_stm:TH_approx:notation}

As in \cref{app:sec:theory_stm:bardeen}, we assume that both the sample and the tip admit an \emph{effectively} non-interaction description. We employ a similar notation to \cref{app:sec:theory_stm:bardeen:left_right_state}, but replace the $L$ (left) and $R$ (right) indices of all the quantities with $S$ (sample) and $T$ (tip), respectively. The spectral functions for the sample and tip are defined as~\cite{COL15}
\begin{align}
	\mathcal{A}_{S,n} (\omega) &= \sum_{\lambda_S,\xi_S} P_{\lambda_S} \left( \abs{\mel**{\xi_S}{\hat{a}^\dagger_{S,n}}{\lambda_S}}^2 \delta \left(\omega - E_{\xi_S} + E_{\lambda_S} \right) + \abs{\mel**{\xi_S}{\hat{a}_{S,n}}{\lambda_S}}^2 \delta \left(\omega + E_{\xi_S} - E_{\lambda_S} \right) \right), 
	\label{app:eqn:def_spect_func_samp} \\
	\mathcal{A}_{T,n} (\omega) &= \sum_{\lambda_T,\xi_T} P_{\lambda_T} \left( \abs{\mel**{\xi_T}{\hat{a}^\dagger_{T,n}}{\lambda_T}}^2 \delta \left(\omega - E_{\xi_T} + E_{\lambda_T} \right) + \abs{\mel**{\xi_T}{\hat{a}_{T,n}}{\lambda_T}}^2 \delta \left(\omega + E_{\xi_T} - E_{\lambda_T} \right) \right), 
	\label{app:eqn:def_spect_func_tip}
\end{align}
respectively. Since both the tip and the sample are assumed to follow an effectively non-interacting description (where interactions are treated at the mean-field level), the spectral function $\mathcal{A}_{S/T,n} (\omega)$ is diagonal in the excitation index $n$. The summations in  \cref{app:eqn:def_spect_func_samp,app:eqn:def_spect_func_tip} span the exact eigenstates of the sample and the tip, respectively.

Under the non-interacting assumption, the exact states $\ket{\lambda_{S/T}}$ are Slater determinants written in terms of the $\hat{a}^\dagger_{S/T,n}$ operators. The corresponding energies are given by
\begin{equation}
	\hat{H}_{S/T} \ket{\lambda_{S/T}} = \mathcal{E}_{\lambda_{S/T}} \ket{\lambda_{S/T}}, \quad
	\hat{K}_{S/T} \ket{\lambda_{S/T}} = E_{\lambda_{S/T}} \ket{\lambda_{S/T}},
\end{equation}
where $\hat{K}_{S/T} = \hat{H}_{S/T} - \mu_{S/T} \hat{N}_{S/T}$ is the grand canonical Hamiltonian for the sample or the tip. Here, $\mu_{S/T}$ and $\hat{N}_{S/T}$ denote the chemical potential and number operator, respectively. For a state $\ket{\lambda_{S/T}}$ with $N_{\lambda_{S/T}}$ electrons ({\it i.e.}{}, $\hat{N}_{S/T} \ket{\lambda_{S/T}} = N_{\lambda_{S/T}} \ket{\lambda_{S/T}}$), the microcanonical and grand canonical energies are related by $E_{\lambda_{S/T}} = \mathcal{E}_{\lambda_{S/T}} - \mu_{S/T} N_{\lambda_{S/T}}$.

In \cref{app:eqn:def_spect_func_samp,app:eqn:def_spect_func_tip}, we introduced the Boltzmann factor
\begin{equation}
	P_{\lambda_{S/T}} = \frac{e^{- \beta E_{\lambda_{S/T}}}}{Z_{S/T}},
\end{equation} 
where the partition function $Z_{S/T}$ is defined such that the probability distribution is normalized
\begin{equation}
	\sum_{\lambda_{S/T}} P_{\lambda_{S/T}} = 1,
\end{equation}
and $\beta = \frac{1}{T}$ is the inverse temperature.

Using the expression for the Boltzmann factor, we can also rewrite \cref{app:eqn:def_spect_func_samp} as
\begin{align}
	\mathcal{A}_{S,n} (\omega) =& \frac{1}{Z_S} \sum_{\lambda_S,\xi_S} \left( e^{- \beta E_{\lambda_{S}}} \abs{\mel**{\xi_S}{\hat{a}^\dagger_{S,n}}{\lambda_S}}^2 \delta \left(\omega - E_{\xi_S} + E_{\lambda_S} \right) + e^{- \beta E_{\lambda_{S}}} \abs{\mel**{\lambda_S}{\hat{a}^\dagger_{S,n}}{\xi_S}}^2 \delta \left(\omega + E_{\xi_S} - E_{\lambda_S} \right) \right) \nonumber \\
	=& \sum_{\lambda_S,\xi_S} \frac{e^{- \beta E_{\lambda_{S}}} }{Z_S} \left[ 1 + e^{- \beta \left( E_{\xi_{S}} - E_{\lambda_{S}} \right) }  \right] \abs{\mel**{\xi_S}{\hat{a}^\dagger_{S,n}}{\lambda_S}}^2 \delta \left(\omega - E_{\xi_S} + E_{\lambda_S} \right) \nonumber \\
	=& \sum_{\lambda_S,\xi_S} P_{\lambda_{S}} \left( 1 + e^{- \beta \omega }  \right) \abs{\mel**{\xi_S}{\hat{a}^\dagger_{S,n}}{\lambda_S}}^2 \delta \left(\omega - E_{\xi_S} + E_{\lambda_S} \right) \label{app:eqn:first_form_of_simpler_spectral_function} \\
	=& \sum_{\lambda_S,\xi_S} P_{\xi_{S}} \left( 1 + e^{- \beta \omega }  \right) \abs{\mel**{\xi_S}{\hat{a}_{S,n}}{\lambda_S}}^2 \delta \left(\omega + E_{\xi_S} - E_{\lambda_S} \right) \nonumber \\
	=& \sum_{\lambda_S,\xi_S} P_{\lambda_{S}} e^{\beta \omega} \left( 1 + e^{- \beta \omega }  \right) \abs{\mel**{\xi_S}{\hat{a}_{S,n}}{\lambda_S}}^2 \delta \left(\omega + E_{\xi_S} - E_{\lambda_S} \right). \label{app:eqn:second_form_of_simpler_spectral_function}
\end{align}
\Cref{app:eqn:first_form_of_simpler_spectral_function,app:eqn:second_form_of_simpler_spectral_function} allow us to relate the thermodynamic average over the matrix elements to the spectral function through
\begin{align}
	\sum_{\lambda_S,\xi_S} P_{\lambda_{S}}  \abs{\mel**{\xi_S}{\hat{a}^\dagger_{S,n}}{\lambda_S}}^2 \delta \left(\omega - E_{\xi_S} + E_{\lambda_S} \right) &= \mathcal{A}_{S,n} ( \omega ) \left( 1 - n_F ( \omega ) \right), \label{app:eqn:mat_elem_sp_func_fermi_1}\\
	\sum_{\lambda_S,\xi_S} P_{\lambda_{S}} \abs{\mel**{\xi_S}{\hat{a}_{S,n}}{\lambda_S}}^2 \delta \left(\omega + E_{\xi_S} - E_{\lambda_S} \right) &=  \mathcal{A}_{S,n} ( \omega ) n_F ( \omega ), \label{app:eqn:mat_elem_sp_func_fermi_2}
\end{align}
and similarly for the tip spectral function. In \cref{app:eqn:mat_elem_sp_func_fermi_1,app:eqn:mat_elem_sp_func_fermi_2}, $n_F (\omega) = \frac{1}{e^{\beta \omega} + 1}$ denotes the Fermi occupation function.

The expressions in \cref{app:eqn:mat_elem_sp_func_fermi_1,app:eqn:mat_elem_sp_func_fermi_2} can be further simplified under the non-interacting assumption. Considering the matrix elements in \cref{app:eqn:mat_elem_sp_func_fermi_1} for a given $\ket{\lambda_S}$, there is only one Slater-determinant eigenstate $\ket{\xi_S}$ for which the matrix element $\mel**{\xi_S}{\hat{a}^\dagger_{S,n}}{\lambda_S}$ is non-vanishing, namely $\ket{\xi_S} = \hat{a}^\dagger_{S,n} \ket{\lambda_S}$. When $\ket{\xi_S} = \hat{a}^\dagger_{S,n} \ket{\lambda_S}$, we have $E_{\xi_S} - E_{\lambda_S} = \epsilon_{S,n} - \mu_S$. Thus, \cref{app:eqn:mat_elem_sp_func_fermi_1,app:eqn:mat_elem_sp_func_fermi_2} can be simplified to
\begin{align}
	\mathcal{A}_{S,n} ( \omega ) \left( 1 - n_F ( \omega ) \right) &= \sum_{\lambda_s} P_{\lambda_{S}} \abs{\mel**{\lambda_S}{\hat{a}_{S,n}\hat{a}^\dagger_{S,n}}{\lambda_S}}^2 \delta \left( \omega - \epsilon_{S,n} + \mu_S \right),\label{app:eqn:sp_func_non_int_simple_expr_el} \\
	\mathcal{A}_{S,n} ( \omega ) n_F ( \omega ) &= \sum_{\lambda_s} P_{\lambda_{S}} \abs{\mel**{\lambda_S}{\hat{a}^\dagger_{S,n} \hat{a}_{S,n}}{\lambda_S}}^2 \delta \left( \omega - \epsilon_{S,n} + \mu_S \right), \label{app:eqn:sp_func_non_int_simple_expr_ho}
\end{align}
and similarly for the tip.

\subsubsection{The tunneling current}\label{app:sec:theory_stm:TH_approx:tip_sample_current}

We are now ready to derive the expression for the net tunneling current between the tip and the sample. Assume that, in the absence of tunneling, the entire system is in the state $\ket{\lambda_{S} \otimes \lambda_{T}}$. If an electron in state $\hat{a}^\dagger_{S,n}$ tunnels from the sample to the tip and occupies the state $\hat{a}^\dagger_{T,m}$, the system transitions to the state $\ket{\xi_{S} \otimes \xi_{T}} = \hat{a}^\dagger_{T,m} \hat{a}_{S,n} \ket{\lambda_{S} \otimes \lambda_{T}}$. The current arising from this process is just the tunneling probability rate from \cref{app:eqn:bardeen_fermi_golden_rule_final}, which is given
\begin{align}
	\dv{P_{\ket{\lambda_{S} \otimes \lambda_{T}} \to \ket{\xi_{S} \otimes \xi_{T}}} (t)}{t} &= 2 \pi \abs{ \bra{0} \hat{a}_{T,m} \left( \int \dd{\vec{S}} \cdot \hat{\vec{j}} \left( \vec{r} \right) \right) \hat{a}^\dagger_{S,n}\ket{0} }^2 \delta \left( \epsilon_{T,m} - \epsilon_{S,n} \right)
\end{align}
To compute the total current from the sample to the tip, we perform a thermodynamic average over all possible system states. For a process where an electron tunnels from the state $\hat{a}^\dagger_{S,n}$ to the state $\hat{a}^\dagger_{T,m}$ (with tunneling amplitude $\bra{0} \hat{a}_{T,m} \left( \int \dd{\vec{S}} \cdot \hat{\vec{j}} \left( \vec{r} \right) \right) \hat{a}^\dagger_{S,n}\ket{0}$), we sum over all initial states $\ket{\lambda_{S} \otimes \lambda_{T}}$ where $\hat{a}^\dagger_{S,n}$ is occupied and $\hat{a}^\dagger_{T,m}$ is empty. The total current from the sample to the tip is then given by
\begin{align}
	I_{S \to T} =& 2 \pi e \sum_{n,m} \abs{ \bra{0} \hat{a}_{T,m} \left( \int \dd{\vec{S}} \cdot \hat{\vec{j}} \left( \vec{r} \right) \right) \hat{a}^\dagger_{S,n}\ket{0} }^2 \delta \left( \epsilon_{T,m} - \epsilon_{S,n} \right) \nonumber \\
	& \times \sum_{\lambda_S,\lambda_T} P_{\lambda_{S}} \abs{\mel**{\lambda_S}{\hat{a}^\dagger_{S,n} \hat{a}_{S,n}}{\lambda_S}}^2 P_{\lambda_{T}} \abs{\mel**{\lambda_T}{\hat{a}_{T,m} \hat{a}^\dagger_{T,m}}{\lambda_T}}^2 \nonumber \\
	=& 2 \pi e \sum_{n,m} \abs{ \bra{0} \hat{a}_{T,m} \left( \int \dd{\vec{S}} \cdot \hat{\vec{j}} \left( \vec{r} \right) \right) \hat{a}^\dagger_{S,n}\ket{0} }^2 \nonumber \\
	& \times \int \dd{\omega} \sum_{\lambda_S,\lambda_T} P_{\lambda_{S}} \abs{\mel**{\lambda_S}{\hat{a}^\dagger_{S,n} \hat{a}_{S,n}}{\lambda_S}}^2 \delta \left( \omega - \mu_{S} - \epsilon_{S,n} + \mu_{S} \right) P_{\lambda_{T}} \abs{\mel**{\lambda_T}{\hat{a}_{T,m} \hat{c}^\dagger_{T,m}}{\lambda_T}}^2 \delta \left( \omega - \mu_{T} - \epsilon_{T,m} + \mu_{T} \right) \nonumber \\
	=& 2 \pi e \sum_{n,m} \abs{ \bra{0} \hat{a}_{T,m} \left( \int \dd{\vec{S}} \cdot \hat{\vec{j}} \left( \vec{r} \right) \right) \hat{a}^\dagger_{S,n}\ket{0} }^2 \nonumber \\
	& \times \int \dd{\omega} \mathcal{A}_{S,n} \left( \omega - \mu_S \right) n_F \left( \omega - \mu_S \right) \mathcal{A}_{T,n} \left( \omega - \mu_T \right) \left( 1-n_F \left( \omega - \mu_T \right) \right), \label{app:eqn:sample_tip_current}
\end{align} 
where $e$ is the (negative) electronic charge. Similarly, the current from the tip to the sample is obtained by swapping $S \leftrightarrow T$ in \cref{app:eqn:sample_tip_current}
\begin{align}
	I_{T \to S} =& 2 \pi e \sum_{n,m} \abs{ \bra{0} \hat{a}_{S,n} \left( \int \dd{\vec{S}} \cdot \hat{\vec{j}} \left( \vec{r} \right) \right) \hat{a}^\dagger_{T,m}\ket{0} }^2 \nonumber \\
	& \times \int \dd{\omega} \mathcal{A}_{T,m} \left( \omega - \mu_T \right) n_F \left( \omega - \mu_T \right) \mathcal{A}_{S,m} \left( \omega - \mu_S \right) \left( 1-n_F \left( \omega - \mu_S \right) \right), \label{app:eqn:tip_sample_current}
\end{align}
The net current from the sample to the tip is therefore given by
\begin{align}
	I =& I_{S \to T} - I_{T \to S} \nonumber \\
	=& 2 \pi e \sum_{n,m} \abs{ \bra{0} \hat{a}_{T,m} \left( \int \dd{\vec{S}} \cdot \hat{\vec{j}} \left( \vec{r} \right) \right) \hat{a}^\dagger_{S,n}\ket{0} }^2 \nonumber \\
	& \times \int \dd{\omega} \mathcal{A}_{T,m} \left( \omega - \mu_T + \mu_S \right) \mathcal{A}_{S,n} \left( \omega \right) \left( n_F \left( \omega \right) - n_F \left( \omega - \mu_T + \mu_S \right) \right).
\end{align}
It is worth noting that no current flows in equilibrium ({\it i.e.}{}, when $\mu_T = \mu_S$): a bias voltage must be introduced to induce current flow. We assume that the tip is biased at potential $V$ relative to the sample (assumed to be at zero potential). As a result, $\mu_T - \mu_S = - e V$, which gives the net current from the sample to the tip as
\begin{equation}
	I (V) = 2 \pi e \sum_{n,m} \abs{ \bra{0} \hat{a}_{T,m} \left( \int \dd{\vec{S}} \cdot \hat{\vec{j}} \left( \vec{r} \right) \right) \hat{a}^\dagger_{S,n}\ket{0} }^2 \int \dd{\omega} \mathcal{A}_{T,m} \left( \omega - \abs{e} V \right) \mathcal{A}_{S,n} \left( \omega \right) \left( n_F \left( \omega \right) - n_F \left( \omega - \abs{e} V \right) \right). \label{app:eqn:final_tunneling_current_with_me}
\end{equation} 

So far, \cref{app:eqn:final_tunneling_current_with_me} is exact, as long as the assumptions of Bardeen's tunneling formula are satisfied (namely, weak tunneling, absence of spin-orbit coupling, and the existence of an effectively non-interacting description for both the sample and the tip). The Tersoff-Hamann approximation examines the matrix element in \cref{app:eqn:final_tunneling_current_with_me} in more detail and simplifies it by making additional assumptions about the tip. These assumptions are often justified, in the absence  of detailed information on the exact tip geometry in experiments.

\subsubsection{Approximating the tunneling current}\label{app:sec:theory_stm:TH_approx:approximation}

To compute the current matrix element from \cref{app:eqn:final_tunneling_current_with_me}, we begin by considering a plane (in between the tip and the sample) located at height $z$ above the surface of the sample. At a sufficient distance from the sample surface, an electron satisfies the Schr\"odinger equation for a free particle in vacuum. Following Refs.~\cite{TER83,TER85}, we assume that the electrostatic potential at sufficiently large distances from the sample (including this plane) quickly drops to its vacuum value (taken to be zero). The wave function of any bound state from the sample takes the form of a linear superposition of waves evanescent along the $\hat{\vec{z}}$ direction
\begin{equation}
	\label{app:eqn:wave_function_ansatz_TH}
	\phi_{S,n} \left( \vec{r}_{\parallel} + z \hat{\vec{z}} \right) = \frac{1}{(2 \pi)^2} \int \dd[2]{k_{\parallel}} \tilde{\phi}_{S,n} \left( \vec{k}_{\parallel} \right) e^{ - z \sqrt{ \kappa^2 + k_{\parallel}^2 } }e^{i \vec{k}_{\parallel} \cdot \vec{r}_{\parallel}},
\end{equation}
where 
\begin{equation}
	\kappa = 2 m_e \mathcal{W}
\end{equation} 
is related to the work function of the sample $\mathcal{W}$, and $\tilde{\phi}_{S,n} \left( \vec{k}_{\parallel} \right)$ are Fourier expansion coefficients
\begin{equation}
	\label{app:eqn:wave_function_ansatz_TH_inverse}
	\tilde{\phi}_{S,n} \left( \vec{k}_{\parallel} \right) = \frac{1}{(2 \pi)^2} e^{z \sqrt{ \kappa^2 + k_{\parallel}^2 } } \int \dd[2]{r_{\parallel}} \phi_{S,n} \left( \vec{r}_{\parallel} + z \hat{\vec{z}} \right) e^{-i \vec{k}_{\parallel} \cdot \vec{r}_{\parallel}}, \qq{for some} z>0.
\end{equation} 
The form in \cref{app:eqn:wave_function_ansatz_TH} is straightforward to understand. In \cref{app:eqn:final_tunneling_current_with_me}, for small bias voltages and temperatures much smaller than $\mathcal{W}$, only states near the Fermi level contribute to tunneling, for both the tip and the sample. Thus, the states $\hat{a}^\dagger_{S,n}$ relevant to tunneling are located close to the Fermi level of the sample and have energy $- \mathcal{W}$ relative to the vacuum. In the vacuum, such a state is a linear superposition of waves that decay along the $\hat{\vec{z}}$ direction but oscillate in the plane of the sample, as described by \cref{app:eqn:wave_function_ansatz_TH}.

So far, the only assumption we have made is that only the states near the Fermi energy contribute to tunneling for the sample. This assumption is justified by \cref{app:eqn:final_tunneling_current_with_me}, where, for small temperatures, $n_F \left( \omega \right) \approx 1 - \Theta \left( \omega \right)$, with $\Theta \left( \omega \right)$ denoting the Heaviside step function. Using this approximation
\begin{equation}
	I  (V) \approx - 2 \pi e \sum_{n,m} \abs{ \bra{0} \hat{a}_{T,m} \left( \int \dd{\vec{S}} \cdot \hat{\vec{j}} \left( \vec{r} \right) \right) \hat{a}^\dagger_{S,n}\ket{0} }^2 \int_{0}^{\abs{e} V} \dd{\omega} \mathcal{A}_{T,m} \left( \omega - \abs{e} V \right) \mathcal{A}_{S,n} \left( \omega \right).
\end{equation} 
This expression demonstrates that only states located within $\abs{eV}$ of the sample's Fermi energy contribute to tunneling.

The key simplifying assumption of the Tersoff-Hamann approximation concerns the wave function of the tip~\cite{TER83,TER85}. The tip is assumed to be composed of a spherically symmetric metal with the \emph{same} work function as the sample, $\mathcal{W}$. The radius of the tip is $R$, and its position is denoted by $\vec{r}_0$. For the tip-only system, outside the metal, Refs.~\cite{TER83,TER85} assume the following asymptotic form for the wave functions of the tip near its Fermi energy
\begin{equation}
	\label{app:eqn:tip_wave_function_ansatz}
	\phi_{T,m} \left( \vec{r} \right) = \frac{c_m}{\sqrt{\Omega_t}} \frac{ R e^{\kappa R} }{ \abs{\vec{r} - \vec{r}_0} } e^{ - \kappa \abs{\vec{r} - \vec{r}_0}},
\end{equation}
where $\Omega_t$ is the volume of the tip, and $c_m$ is an order-one complex constant. 

The asymptotic form in \cref{app:eqn:tip_wave_function_ansatz} requires some explanation. First, $\phi_{T,m} \left( \vec{r} \right)$ is the solution of the Schr\"odinger equation with energy $- \mathcal{W}$ outside the tip, exhibiting perfect spherical symmetry around the tip center $\vec{r}_0$. Specifically, we have
\begin{equation}
	- \frac{1}{2m_e} \nabla^2 \phi_{T,m} \left( \vec{r} \right) = - \mathcal{W} \phi_{T,m} \left( \vec{r} \right), \qq{for} \abs{\vec{r} - \vec{r}_0} > R.
\end{equation}
The assumed spherical symmetry is the key \emph{assumption} of the Tersoff-Hamann approximation~\cite{TER83,TER85}. In principle, there is no fundamental reason why the tip wave function should exhibit perfect spherical symmetry outside the tip. This assumption is justified \textit{a posteriori} in Ref.~\cite{TER85}, where it is argued that contributions to tunneling from tip wave functions with larger angular quantum numbers are significantly smaller than those arising from spherically symmetric wave functions (which have zero angular momentum). The best justification for assuming a spherically symmetric tip wave function stems from the lack of detailed microscopic information about the tip's shape and size. Regarding the constant $c_m$ in \cref{app:eqn:tip_wave_function_ansatz}, its order-one nature can be reasoned as follows: inside the tip, electron states near the Fermi energy are essentially plane wave states
\begin{equation}
	\label{app:eqn:wavf_inside_tip}
	\phi_{T,m} \left( \vec{r} \right) \approx \frac{1}{\sqrt{\Omega_t}} e^{i \vec{k}_m \cdot \left( \vec{r} - \vec{r}_0 \right)}, \qq{for} \abs{\vec{r} - \vec{r}_0} < R,
\end{equation} 
where $\abs{\vec{k}_m} = \kappa$, and we have assumed $\kappa R \gg 1$. Since $\phi_{T,m} \left( \vec{r} \right)$ is a bound state inside the tip, most of its spectral weight resides \emph{inside} the tip. Thus, \cref{app:eqn:wavf_inside_tip} is appropriately normalized, since
\begin{equation}
	1 = \int \dd[3]{r} \abs{\phi_{T,m} \left( \vec{r} \right)}^2 \approx \int_{\abs{r}<R} \dd[3]{r} \abs{\phi_{T,m} \left( \vec{r} \right)}^2. 
\end{equation}
The continuity of the wave function across the tip boundary ensures that $c_m$ is an order-one complex constant.

With these assumptions, computing the matrix element in \cref{app:eqn:final_tunneling_current_with_me} becomes straightforward. We start by computing the two-dimensional Fourier transform of the tip wave function, defined analogously to \cref{app:eqn:wave_function_ansatz_TH}
\begin{align}
	\int \dd[2]{\vec{r}_{\parallel}} \phi_{T,m} \left( \vec{r}_{\parallel} + z' \hat{\vec{z}} \right) e^{- i \vec{k}_{\parallel} \cdot \vec{r}_{\parallel}} 
	&= \int < \phi_{T,m} \left( \vec{r} \right) e^{- i \vec{k}_{\parallel} \cdot \vec{r}_{
	\parallel}} \delta \left( z' - z \right) \nonumber \\
	&=  \frac{1}{2 \pi} \int \dd{k_z} \int \dd[3]{r} \phi_{T,m} \left( \vec{r} \right) e^{- i \vec{k}_{\parallel} \cdot \vec{r}_{
		\parallel}} e^{- i k_z \left( z - z' \right)} \nonumber \\
	&= \frac{1}{2 \pi} \int \dd{k_z} \int \dd[3]{r} \phi_{T,m} \left( \vec{r} + \vec{r}_0 \right) e^{- i \vec{k} \cdot \vec{r}} e^{- i \vec{k} \cdot \vec{r}_0} e^{i k_z z'} \nonumber \\
	&= \frac{c_m R e^{\kappa R}}{\sqrt{\Omega_t}} \int \dd{k_z}  \frac{2}{\abs{\vec{k}_{\parallel}}^2 + k_z^2 + \kappa^2} e^{i k_z \left( z' -  \abs{\vec{r}_0} \right)} \nonumber \\
	&= \frac{c_m R e^{\kappa R}}{\sqrt{\Omega_t}}  \frac{2\pi}{ \sqrt{\abs{\vec{k}_{\parallel}}^2 + \kappa^2}} e^{ - \abs{ z' - \abs{\vec{r}_0} } \sqrt{\abs{\vec{k}_{\parallel}}^2 + \kappa^2} }. \label{app:eqn:fourier_trafo_tip_derivation}
\end{align}
where we have assumed, without loss of generality, that $\vec{r}_0 = \abs{\vec{r}_0} \hat{\vec{z}}$, as shown in \cref{app:fig:STM_setup}. From \cref{app:eqn:fourier_trafo_tip_derivation}, the tip wave function can be rewritten as 
\begin{equation}
	\label{app:eqn:tip_wavf_form}
	\phi_{T,m} \left( \vec{r}_{\parallel} + z \hat{\vec{z}} \right) = \frac{1}{2 \pi} \frac{c_m e^{\kappa R}}{\sqrt{\Omega_t}} \int \dd[2]{k_{\parallel}} \frac{1}{ \sqrt{\abs{\vec{k}_{\parallel}}^2 + \kappa^2}} e^{ - \abs{ z - \abs{\vec{r}_0} } \sqrt{\abs{\vec{k}_{\parallel}}^2 + \kappa^2} } e^{i \vec{k}_{\parallel} \cdot \vec{r}_{\parallel}}.
\end{equation}
Next, we substitute \cref{app:eqn:wave_function_ansatz_TH,app:eqn:tip_wavf_form} into the matrix element from \cref{app:eqn:final_tunneling_current_with_me}. As shown in \cref{app:fig:STM_setup}, we choose the integration surface to be a plane parallel to the sample, located at a height $z_*$ above the sample (assumed to be at $z=0$ without loss of generality). The current matrix element can be evaluated using the position basis representation of the current density operator from \cref{app:eqn:current_density_operator_pos_rep}
\begin{align}
	&\bra{0} \hat{a}_{T,m} \left( \int \dd{\vec{S}} \cdot \hat{\vec{j}} \left( \vec{r} \right) \right) \hat{a}^\dagger_{S,n}\ket{0} =  \nonumber \\  
	=&\frac{-i}{2m_e} \int \dd[2]{r_{\parallel}} \eval{ \left( \phi^{*}_{T,m} \left( \vec{r}_{\parallel} + z \hat{\vec{z}} \right) \partial_z \phi_{S,n} \left( \vec{r}_{\parallel} + z \hat{\vec{z}} \right) - \phi_{S,n} \left( \vec{r}_{\parallel} + z \hat{\vec{z}} \right) \partial_z \phi^{*}_{T,m} \left( \vec{r}_{\parallel} + z \hat{\vec{z}} \right)  \right)}_{z = z_*} \nonumber  \\ 
	=&\frac{i }{2m_e} \frac{1}{2 \pi} \frac{2 c_m R e^{\kappa R}}{\sqrt{\Omega_t}} \int \dd[2]{k_{\parallel}} \tilde{\phi}_{S,n} \left( \vec{k}_{\parallel} \right) e^{ - \abs{\vec{r}_0}  \sqrt{\abs{\vec{k}_{\parallel}}^2 + \kappa^2} } \nonumber \\ 
	=&\frac{i }{2m_e} \frac{4 \pi c_m R e^{\kappa R}}{\sqrt{\Omega_t}} \phi_{S,n} \left( \vec{r}_0 \right), \label{app:TH_matrix_element_final}
\end{align}
where the final expression shows the dependence of the current matrix element on the wave function of the sample, evaluated at the tip center $\vec{r}_0$. We highlight two key observations based on the final expression for the current matrix element:
\begin{itemize}
	\item First, as expected, the result is independent of $z_*$ (the height of the integration surface, as shown in \cref{app:fig:STM_setup}) confirming that the chosen plane does not influence the tunneling matrix element.
	\item Second, the matrix element is directly proportional to the amplitude of the sample wave function at the tip center, located at $\vec{r}_0$ (a height $\abs{\vec{r}_0}$ above the sample surface, rather than exactly at the surface). Because the sample wave function decays evanescently in the direction perpendicular to the surface, tunneling to and from states with large in-plane momenta $\abs{\vec{k}_{\parallel}} \gg \kappa$ is exponentially suppressed. This suppression arises from \cref{app:eqn:wave_function_ansatz_TH}, where the contribution of high-momentum components decreases rapidly with increasing $z$. 
\end{itemize}

It is important to note that the simplification of the current matrix elements in \cref{app:TH_matrix_element_final} relies on two critical assumptions: the spherical symmetry of the tip wave function and the assumption that the tip and the sample have identical work functions. Both these assumption result in the convenient cancelation of the square-root denominator in \cref{app:eqn:tip_wavf_form}. Substituting \cref{app:TH_matrix_element_final} into \cref{app:eqn:final_tunneling_current_with_me}, we arrive at
\begin{align}
	I (V) &= 2 \pi e \left( \frac{4 \pi R e^{\kappa R}}{2 m_e \sqrt{\Omega_t}} \right)^2 \int \dd{\omega} \left( \sum_{m} \abs{c_m}^2 \mathcal{A}_{T,m} \left( \omega - \abs{e} V \right) \right) \left( \sum_{n} \mathcal{A}_{S,n} \left( \omega \right) \abs{\phi_{S,n} \left( \vec{r}_0 \right)}^2 \right) \left( n_F \left( \omega \right) - n_F \left( \omega - \abs{e} V \right) \right) \nonumber \\
	&\approx 2 \pi e \left( \frac{4 \pi R e^{\kappa R}}{2 m_e \sqrt{\Omega_t}} \right)^2 \int \dd{\omega} g_T \left( \omega - \abs{e} V \right) \mathcal{A}_{S} \left( \vec{r}_0, \omega \right) \left( n_F \left( \omega \right) - n_F \left( \omega - \abs{e} V \right) \right)  \nonumber \\ 
	&\approx - 2 \pi e \left( \frac{4 \pi R e^{\kappa R}}{2 m_e \sqrt{\Omega_t}} \right)^2 g_T \left( 0 \right) \int_{0}^{\abs{e} V} \dd{\omega} \mathcal{A}_{S} \left( \vec{r}_0, \omega \right). \label{app:eqn:final_TH_expression_current}
\end{align}
In \cref{app:eqn:final_TH_expression_current}, we introduced the real-space spectral function of the sample
\begin{equation}
	\label{app:eqn:def_spect_func_rs}
	\mathcal{A}_S \left( \vec{r}, \omega \right) = \sum_{\lambda_S,\xi_S} P_{\lambda_S} \left( \abs{\mel**{\xi_S}{ \hat{\Psi}^\dagger_{} \left( \vec{r} \right)}{\lambda_S}}^2 \delta \left(\omega - E_{\xi_S} + E_{\lambda_S} \right) + \abs{\mel**{\xi_S}{ \hat{\Psi}_{} \left( \vec{r} \right) }{\lambda_S}}^2 \delta \left(\omega + E_{\xi_S} - E_{\lambda_S} \right) \right), 
\end{equation}
where $\hat{\Psi}^\dagger_{} \left( \vec{r} \right)$ denotes the creation operator for the fermionic field. Additionally, we defined the density of states of the tip
\begin{equation}
	g_T \left( \omega \right) = \sum_{m} \mathcal{A}_{T,m} \left( \omega \right).
\end{equation}
To derive the second line of \cref{app:eqn:final_TH_expression_current}, we approximated the order-one constant $c_m \approx 1$, assuming it to be state-independent and approximately equal to one. This allowed us to substitute the expression for the tip density of states in the first parenthesis of the integrand in the first line of \cref{app:eqn:final_TH_expression_current}. 

Additionally, by decomposing the fermionic field operator into the complete basis of sample states
\begin{equation}
	\hat{\Psi}^\dagger_{} \left( \vec{r} \right) = \sum_{n} \phi^{*}_{S,n} \left( \vec{r} \right) \hat{a}^\dagger_{S,n},
\end{equation}
and substituting the resulting expression into \cref{app:eqn:def_spect_func_rs}, we find
\begin{align}
	\mathcal{A}_S \left( \vec{r}, \omega \right) =& \sum_{\lambda_S,\xi_S} P_{\lambda_S}  \sum_{m,n} \ \left( \mel**{\lambda_S}{ \hat{a}_{S,m} }{\xi_S} \mel**{\xi_S}{ \hat{a}^\dagger_{S,n} }{\lambda_S} \phi_{m,S} \left( \vec{r} \right) \phi^{*}_{n,S} \left( \vec{r} \right) \delta \left(\omega - E_{\xi_S} + E_{\lambda_S} \right) \right. \nonumber \\
	&+\left. \sum_{m,n} \mel**{\lambda_S}{ \hat{a}^\dagger_{S,m} }{\xi_S} \mel**{\xi_S}{ \hat{a}_{S,n} }{\lambda_S} \phi^{*}_{m,S} \left( \vec{r} \right) \phi_{S,n} \left( \vec{r} \right) \delta \left(\omega + E_{\xi_S} - E_{\lambda_S} \right) \right) \nonumber \\
	=& \sum_{\lambda_S,\xi_S} P_{\lambda_S}  \sum_{n} \ \left( \abs{ \mel**{\xi_S}{ \hat{a}^\dagger_{S,n} }{\lambda_S}}^2 \abs{\phi_{S,n} \left( \vec{r} \right)}^2 \delta \left(\omega - E_{\xi_S} + E_{\lambda_S} \right) \right. \nonumber \\
	&+\left. \sum_{n} \abs{\mel**{\xi_S}{ \hat{a}_{S,n} }{\lambda_S}} \abs{\phi_{S,n} \left( \vec{r} \right)}^2 \delta \left(\omega + E_{\xi_S} - E_{\lambda_S} \right) \right) \nonumber \\
	=& \sum_{n} \mathcal{A}_{S,n} \left( \omega \right) \abs{\phi_{S,n} \left( \vec{r} \right)}^2,\label{app:eqn:relating_the_real_and_orb_SF}
\end{align}
where, in the second equality of \cref{app:eqn:relating_the_real_and_orb_SF}, we employed the non-interacting assumption for the sample. \Cref{app:eqn:relating_the_real_and_orb_SF} was similarly applied to \cref{app:eqn:def_spect_func_rs} to simplify the second parenthesis in the integrand of the first line.

Finally, in the last line of \cref{app:eqn:final_TH_expression_current}, we assumed that temperatures are small compared to the energy scales of the sample, approximating $n_F \left( \omega \right) \approx 1 - \Theta \left( \omega \right)$. Furthermore, we treated the density of states of the tip as constant around its Fermi energy, approximating $g_T \left( \omega \right) \approx g_T \left( 0 \right)$ for $0 \leq \abs{\omega} \leq \abs{eV}$. These approximations significantly simplify the final expression for the tunneling current.

\Cref{app:eqn:final_TH_expression_current} shows that, under a series of approximations discussed in this section, the STM tunneling current measures the spectral function integrated from zero ({\it i.e.}{}, the Fermi energy) to the bias voltage. Notice also that the current is positive for a positive bias voltage, as expected. Importantly, within these same approximations, the conductance at a given bias voltage is directly proportional to the spectral function of the sample at that bias voltage, as expressed by
\begin{equation}
	\dv{I (V)}{V} \approx - 2 \pi e \left( \frac{4 \pi R e^{\kappa R}}{2 m_e \sqrt{\Omega_t}} \right)^2 g_T \left( 0 \right) \mathcal{A}_{S} \left( \vec{r}_0, \abs{e} V \right).
\end{equation}

\subsubsection{The Tersoff-Hamann approximation for an entire isolated band}\label{app:sec:theory_stm:TH_approx:approximation:isolated_band}

Consider a system with a single (spinful) isolated band gapped from the rest of the spectrum\footnote{The discussion can be straightforwardly generalized to a set of bands.}. We assume the energy of the top and bottom of the band, relative to the Fermi energy, is given by $E_t$ and $E_b$, respectively (such that $E_{t/b} > 0$ if the top/bottom of the band is above the Fermi energy and $E_{t/b} < 0$ otherwise). No 
additional constraints are imposed on the band.

Suppose the tunneling current of the system is measured at a point $\vec{r}$ for bias voltages $V_t$ and $V_b$, where $\abs{e} V_{t/b} = E_{t/b}$. Subtracting the two tunneling currents and applying \cref{app:eqn:final_TH_expression_current}, we find 
\begin{equation}
	\label{app:eqn:current_for_band}
	I \left( V_t \right) - I \left( V_b \right) \approx - 2 \pi e \left( \frac{4 \pi R e^{\kappa R}}{2 m_e \sqrt{\Omega_t}} \right)^2 g_T \left( 0 \right) \int_{\abs{e} V_b}^{\abs{e} V_t} \dd{\omega} \mathcal{A}_{S} \left( \vec{r}, \omega \right).
\end{equation} 
We now examine the integral in \cref{app:eqn:current_for_band} in more detail and simplify its form under the non-interacting limit by using \cref{app:eqn:sp_func_non_int_simple_expr_el,app:eqn:sp_func_non_int_simple_expr_ho,app:eqn:relating_the_real_and_orb_SF}
\begin{align}
	A \left( \vec{r} \right) &= \int_{\abs{e} V_b}^{\abs{e} V_t} \dd{\omega} \mathcal{A}_{S} \left( \vec{r}, \omega \right) \nonumber \\
	&= \int_{E_b}^{E_t} \dd{\omega} \sum_{n} \abs{\phi_{S,n} \left( \vec{r} \right)}^2 \mathcal{A}_{S,n} \left( \omega \right) \nonumber \\
	&= \sum_{\lambda_s} \sum_{n} \abs{\phi_{S,n} \left( \vec{r} \right)}^2 P_{\lambda_{S}} \left( \abs{\mel**{\lambda_S}{\hat{a}_{S,n}\hat{a}^\dagger_{S,n}}{\lambda_S}}^2 + \abs{\mel**{\lambda_S}{\hat{a}^\dagger_{S,n} \hat{a}_{S,n}}{\lambda_S}}^2 \right) \int_{E_b}^{E_t} \dd{\omega} \delta \left( \omega - \epsilon_{S,n} + \mu_S \right) \nonumber \\
	&= \sum_{n} \abs{\phi_{S,n} \left( \vec{r} \right)}^2 \left( \Theta \left( E_t - \epsilon_{S,n} + \mu_S \right) - \Theta \left( E_b - \epsilon_{S,n} + \mu_S \right) \right). \label{app:eqn:relating_band_CDD_to_measurment}
\end{align}
The result of \cref{app:eqn:relating_band_CDD_to_measurment} is nothing but the charge density distribution (CDD) of the band of interest (which is located between $E_{b}$ and $E_{t}$). This can be easily understood by noting that $\abs{\phi_{S,n} \left( \vec{r} \right)}^2$ is the CDD of state $\hat{a}^\dagger_{S,n}$ and \cref{app:eqn:relating_band_CDD_to_measurment} sums the CDD corresponding to all the states within the flat band. Within the Tersoff-Hamann approximation, the current difference in \cref{app:eqn:current_for_band} is equal to the CDD of the band of interest, up to proportionality factors that depend on the STM setup.

An equivalent and more convenient expression for $A \left( \vec{r} \right)$, which will be used extensively in subsequent discussions, is given by the expectation value
\begin{equation}
	\label{app:eqn:ldos_for_single_band}
	A \left( \vec{r} \right) = \mel**{\Phi}{\hat{\Psi}^\dagger_{} \left( \vec{r} \right) \hat{\Psi}_{} \left( \vec{r} \right)}{\Phi},
\end{equation}
taken in the product state $\ket{\Phi}$, where only the band of interest is fully populated
\begin{equation}
	\label{app:eqn:initial_definition_fully_filled_band}
	\ket{\Phi} = \left( \prod_{\substack{n \\ E_b \leq \epsilon_{S,n} \leq E_t}} \hat{a}^\dagger_{S,n} \right) \ket{0}.
\end{equation}
In the following discussion, which focuses on the CDD of an entire isolated band, we will use the expression in \cref{app:eqn:ldos_for_single_band}, noting that the CDD can be extracted by performing tunneling current measurements at constant height for two bias voltages that bracket the energy dispersion of the band of interest.
 
\section{Extracting correlation functions from STM}\label{app:sec:correlator_general}

This \siSection{} builds on the results of \cref{app:sec:theory_stm} and explains how measurements of a diagonal operator, {\it i.e.}{}, the local fermionic density operator $\hat{\Psi}^\dagger_{} \left( \vec{r} \right) \hat{\Psi}_{} \left( \vec{r} \right)$, can be used to infer the \emph{off-diagonal} expectation values of Wannier orbital operators. The analysis presented here is \emph{completely} general and applicable to any crystalline system.

We begin by introducing a generic notation for crystalline systems. Next, we examine the CDD of an isolated band and show that it can be expressed as the tensor product between a spatial factor and the inter-orbital correlation functions computed for the isolated band, which we term the correlators of the band. 
We then analyze the effects of crystalline symmetries on the CDD, the spatial factor, and the correlation functions. Finally, we outline how the correlation functions of the band can be extracted from experimentally measured CDD in the general case.

\subsection{Notation}\label{app:sec:correlator_general:notation}
We consider a two-dimensional\footnote{The results here can be straightforwardly generalized to the surface of a three-dimensional crystal.} crystalline material and let $\hat{c}^\dagger_{\vec{R},i}$ denote the corresponding fermionic creation operators\footnote{In this \siSection{}, we focus solely on the sample and use $\hat{c}^\dagger_{\vec{R},i}$ to represent its electronic states. These operators, like $\hat{a}^\dagger_{S,n}$ in \cref{app:sec:theory_stm:TH_approx}, span the complete basis of sample states. However, they are expressed in the Wannier basis rather than the energy band basis.}: $\hat{c}^\dagger_{\vec{R},i}$ creates an electron located in the $i$-th Wannier orbital of the unit cell at $\vec{R}$. The Wannier orbital associated with $\hat{c}^\dagger_{\vec{R},i}$ is displaced by $\vec{r}_i$ from the origin of the unit cell (located at $\vec{R}$), and its real-space wave function is denoted by 
\begin{equation}
	W_{i} \left( \vec{r} - \vec{R} \right) = \mel**{0}{\hat{\Psi}_{} \left( \vec{r} \right)\hat{c}^\dagger_{\vec{R},i}}{0}	.
\end{equation}

We also introduce the momentum-space fermionic operators
\begin{align}
    \hat{c}^\dagger_{\vec{R},i} &= \frac{1}{\sqrt{N}} \sum_{\vec{k}} \hat{c}^\dagger_{\vec{k},i} e^{-i \vec{k} \cdot \left( \vec{R} + \vec{r}_i \right)}, \label{app:eqn:ft_def_c_to_r} \\
    \hat{c}^\dagger_{\vec{k},i} &= \frac{1}{\sqrt{N}} \sum_{\vec{R}} \hat{c}^\dagger_{\vec{R},i} e^{i \vec{k} \cdot \left( \vec{R} + \vec{r}_i \right)}, \label{app:eqn:ft_def_c_to_k}
\end{align}
as well as the Fourier transform of the Wannier orbital wave function
\begin{align}
    W_{i} \left( \vec{r} \right) &= \frac{1}{\sqrt{N \Omega_0}} \sum_{\vec{k},\vec{Q}} W_{i} \left( \vec{k} + \vec{Q}, z \right) e^{i \left( \vec{k} + \vec{Q} \right) \cdot \vec{r}} \label{app:eqn:ft_def_w_to_r} \\
    W_{i} \left( \vec{k} + \vec{Q}, z \right) &= \frac{1}{\sqrt{N \Omega_0}} \int \dd[2]{r_{\parallel}} W_{i} \left( \vec{r} \right) e^{-i \left( \vec{k} + \vec{Q} \right) \cdot \vec{r}}. \label{app:eqn:ft_def_w_to_k} 
\end{align}
In \crefrange{app:eqn:ft_def_c_to_r}{app:eqn:ft_def_w_to_k}, $N$ denotes the number of unit cells, while $\Omega_0$ represents their surface area. We use the convention where momenta in the first Brillouin zone (BZ) are represented by bold lowercase letters, while reciprocal lattice momenta are represented by bold uppercase letters. Consequently, the summation over $\vec{k}$ runs over the first BZ, while the summation over $\vec{Q}$ spans the infinite reciprocal lattice.
 
It is important to note that the lattice vectors $\vec{R}$, as well as the crystalline momenta $\vec{k}$ and reciprocal lattice vectors $\vec{Q}$, lie within the plane of the sample. However, orbital displacements from the unit cell origin $\vec{r}_i$, as well as the position $\vec{r}$, can extend out of the plane. For concreteness, we take the sample plane to be perpendicular to the $\hat{\vec{z}}$ direction and decompose any three-dimensional vector $\vec{r}$ as $\vec{r} = \vec{r}_{\parallel} + z \hat{\vec{z}}$. Because the momenta are strictly two-dimensional, we have $\vec{k} \cdot \vec{r} = \vec{k} \cdot \vec{r}_{\parallel}$, $\vec{Q} \cdot \vec{r} = \vec{Q} \cdot \vec{r}_{\parallel}$, {\it etc.}{}

For future convenience, we also define the following partial Fourier transformations of the Wannier orbitals
\begin{align}
	W_{i} \left(\vec{k}, \vec{r} \right) &= \frac{1}{\sqrt{N}} \sum_{\vec{R}} W_{i} \left( \vec{r} - \vec{R} \right) e^{i \vec{k} \cdot \vec{R}} = \frac{1}{\sqrt{\Omega_0}} \sum_{\vec{Q}} W_{i} \left( \vec{k} + \vec{Q}, z \right) e^{i \left( \vec{k} + \vec{Q} \right) \cdot \vec{r}} \label{app:eqn:ft_def_w_to_k_partial}  \\ 
	W_{i} \left( \vec{r} - \vec{R} \right) &= \frac{1}{\sqrt{N}} \sum_{\vec{k}} W_{i} \left(\vec{k}, \vec{r} \right) e^{-i \vec{k} \cdot \vec{R}}. \label{app:eqn:ft_def_w_to_r_partial}
\end{align}
The real-space fermionic field operator can then be expanded into the Wannier orbital basis as
\begin{align}
	\hat{\Psi}^\dagger_{} \left( \vec{r} \right) &= \sum_{\vec{R},i} W^*_i \left( \vec{r}-\vec{R} \right) \hat{c}^\dagger_{\vec{R},i} + \dots, \label{app:eqn:field_in_terms_wan_real} \\
	\hat{\Psi}^\dagger_{} \left( \vec{r} \right) &= \frac{1}{\sqrt{\Omega_0}} \sum_{\vec{k},\vec{Q},i} W^*_i \left( \vec{k} + \vec{Q}, z \right) 
	e^{-i \left( \vec{k}+\vec{Q} \right)\cdot \vec{r} } e^{i \vec{k} \cdot \vec{r}_i }\hat{c}^\dagger_{\vec{k},i} + \dots, \label{app:eqn:field_in_terms_wan_mom}\\
	\hat{\Psi}^\dagger_{} \left( \vec{r} \right) &= \sum_{\vec{k},i} W^{*}_{i} \left(\vec{k}, \vec{r} \right) 
	e^{-i \vec{k} \cdot \vec{r}_i }\hat{c}^\dagger_{\vec{k},i} + \dots. \label{app:eqn:field_in_terms_wan_hyb}
\end{align}
In \crefrange{app:eqn:field_in_terms_wan_real}{app:eqn:field_in_terms_wan_hyb}, the dots ``$\dots$'' denote other states of higher energy, orthogonal to the Wannier orbitals created by $\hat{c}^\dagger_{\vec{R},i}$, which are irrelevant for the present discussion and will be ignored in what follows.

\subsubsection{Obtaining the Wannier orbitals and tight-binding Hamiltonian}\label{app:sec:correlator_general:notation:obtain_wan}

For an isolated set of bands, the Wannier orbital wave functions $W_{i} \left( \vec{r} - \vec{R} \right)$ can generally be obtained using Wannier90~\cite{MAR12, MAR97b, PIZ20, SOU01c} from the \textit{ab initio}{} Bloch wave functions. The Bloch wave function of the $\alpha$-th band at crystalline momentum $\vec{k}$ is denoted by $\psi_{\alpha} \left( \vec{k}, \vec{r} \right)$ and is expanded into plane wave coefficients as
\begin{equation}
	\psi_{\alpha} \left( \vec{k}, \vec{r} \right)= \frac{1}{\sqrt{\Omega}} \sum_{\vec{G}, G_z} C_{\alpha,\vec{k}}^{\vec{G}, G_z} e^{i \left[ \left( \vec{k} + \vec{G} \right) \cdot \vec{r}_{\parallel} + G_z z \right]},
\end{equation}
where $\Omega$ denotes the volume of the system considered in the \textit{ab initio}{} simulations. As mentioned below \cref{app:eqn:ft_def_w_to_k}, momenta in the first BZ are represented by bold lowercase letters, while reciprocal lattice momenta are represented by bold uppercase ones, hence $\vec{k}$ belongs to the first BZ. 

Numerically, for two-dimensional ({\it i.e.}{}, layered) systems, periodic boundary conditions are also assumed along the direction perpendicular to the sample (in this case, the $\hat{\vec{z}}$ direction), necessitating a plane wave expansion in the $\hat{\vec{z}}$ direction as well. However, the system size along $\hat{\vec{z}}$ is chosen to be sufficiently large to ensure no interlayer effects arise. The Bloch wave functions are normalized according to
\begin{equation}
	\label{app:eqn:normalization_bloch_condition}
	\sum_{\vec{G}, G_z} \abs{C_{\alpha,\vec{k}}^{\vec{G}, G_z}}^2 = \Omega.
\end{equation} 

On a strictly technical note, in \textit{ab initio}{} simulations, one typically employs a supercell made up of $N_{k_1} \times N_{k_2} \times N_{k_3}$ unit cell. Each unit cell is further discretized on an $N_{G_1} \times N_{G_2} \times N_{G_3}$ grid. Periodic boundary conditions are applied in all three directions. Denoting the direct lattice vectors by $\vec{a}_i$ and the reciprocal lattice vectors by $\vec{b}_i$, the allowed values for the \emph{three-dimensional} crystalline and reciprocal momenta are
\begin{align}
	\vec{k} &= \sum_{i=1}^{3} \frac{n_i}{N_{k_i}} \vec{b}_i, \qq{for} 0 \leq n_i <N_{k_i}, \\
	\vec{G} &= \sum_{i=1}^{3} n_i \vec{b}_i, \qq{for} 0 \leq n_i < N_{G_i}.
\end{align}  
For layered materials, we take $\abs{\vec{a}_3} \gg \abs{\vec{a}_{1,2}}$, $N_{k_3} = 1$, and $N_{G_3} \gg N_{G_{1,2}}$, which makes $\vec{k}$ strictly two-dimensional, as assumed throughout this work.

Using the Bloch wave functions, Wannier90 determines a gauge-smoothing transformation $U_{i \alpha} \left(\vec{k} \right)$, which provides the Wannier orbital wave functions through
\begin{equation}
	\label{app:eqn:wannier_wavf_plane_wave}
	W_{i} \left(\vec{k}, \vec{r} \right) = \sum_{\alpha} U_{i \alpha} \left(\vec{k} \right) \psi_{\alpha} \left( \vec{k}, \vec{r} \right)  = \frac{1}{\sqrt{\Omega}} \sum_{\vec{G}, G_z} \sum_{\alpha} U_{i\alpha} \left( \vec{k} \right) C_{\alpha,\vec{k}}^{\vec{G}, G_z} e^{i \left[ \left( \vec{k} + \vec{G} \right) \cdot \vec{r}_{\parallel} + G_z z \right]}.
\end{equation}
If the number of bands equals the number of Wannier wave functions, $U_{i \alpha} \left(\vec{k} \right)$ is a unitary matrix designed to yield maximally localized Wannier functions (MLWFs). When the number of bands exceeds the number of Wannier wave functions, $U_{i \alpha} \left(\vec{k} \right)$ includes both a projector for disentanglement and a unitary transformation for obtaining the MLWFs. In general, $U_{i \alpha} \left(\vec{k} \right)$ is a full-rank matrix with unit singular values. By comparing \cref{app:eqn:ft_def_w_to_k_partial,app:eqn:wannier_wavf_plane_wave}, we can directly determine the Fourier coefficients of the Wannier wave functions
\begin{equation}
	W_{i} \left( \vec{k} + \vec{G}, z \right) = \sqrt{\frac{\Omega_0}{\Omega}} \sum_{G_z} \sum_{\alpha} U_{i\alpha} \left( \vec{k} \right) C_{\alpha,\vec{k}}^{\vec{G}, G_z} e^{i G_z z}.
\end{equation}

Finally, the Kohn-Sham tight-binding quadratic Hamiltonian of the system is defined as 
\begin{equation}
	\hat{H} = \sum_{\Delta\vec{R},i,j} H_{ij} \left( \Delta \vec{R} \right) \hat{c}^\dagger_{\vec{R},i} \hat{c}_{\vec{R} + \Delta \vec{R},j} = \sum_{\vec{k},i,j} H_{ij} \left( \vec{k} \right) \hat{c}^\dagger_{\vec{k},i} \hat{c}_{\vec{k},j},
\end{equation}
where $H_{ij} \left( \Delta \vec{R} \right)$ and $H_{ij} \left( \vec{k} \right)$ are, respectively, the real-space and momentum-space Hamiltonian matrices. These matrices are related through the following Fourier transformation
\begin{equation}
	H_{ij} \left( \vec{k} \right) = \sum_{\Delta \vec{R}} H_{ij} \left( \Delta \vec{R} \right) e^{i \vec{k} \left( \cdot \Delta \vec{R} + \vec{r}_j - \vec{r}_i \right)},
\end{equation}
with the real-space Hamiltonian matrix being also outputted by Wannier90~\cite{MAR12, MAR97b, PIZ20, SOU01c}. We denote the eigenvector and eigenvalue of the $\alpha$-th band of the tight binding Hamiltonian by $\epsilon_{\alpha} \left( \vec{k} \right)$ and $u_{i \alpha} \left( \vec{k} \right)$, respectively, such that
\begin{equation}
	\sum_{j} H_{ij} \left( \vec{k} \right) u_{j \alpha} \left( \vec{k} \right) = \epsilon_{\alpha} \left( \vec{k} \right) u_{i \alpha} \left( \vec{k} \right).
\end{equation}

\subsection{The CDD of an isolated band}\label{app:sec:correlator_general:band_ldos}

Consider a single (isolated) band of the system, chosen without loss of generality to be the $\alpha$-th band. For convenience, we denote its wave function by
\begin{equation}
	u_{i} \left( \vec{k} \right) = u_{i \alpha} \left( \vec{k} \right).
\end{equation}
Here, we suppress the band index for simplicity, as our discussion exclusively concerns this band. We aim to compute the CDD of a state where the $\alpha$-th band is fully filled, defined analogously to \cref{app:eqn:initial_definition_fully_filled_band} as
\begin{equation}
	\ket{\Phi} = \prod_{\vec{k}} \hat{\gamma}^\dagger_{\vec{k}} \ket{0},
\end{equation}
where the band operators are defined by 
\begin{equation}
	\hat{\gamma}^\dagger_{\vec{k}} = \sum_{i} u_{i} \left( \vec{k} \right) \hat{c}^\dagger_{\vec{k},i}.
\end{equation}

\subsubsection{Decomposition into the spatial factor and the correlation functions}\label{app:sec:correlator_general:band_ldos:decomposition}

Using \cref{app:eqn:field_in_terms_wan_real}, the CDD of the system can be expressed as
\begin{equation}
	\label{app:eqn:ldos_to_wannier}
	A \left( \vec{r} \right) =  \mel**{\Phi}{ \hat{\Psi}^\dagger_{} \left( \vec{r} \right) \hat{\Psi}_{} \left( \vec{r} \right) }{\Phi} = \sum_{\substack{\vec{R}_1, \vec{R}_2 \\ i,j}} W^{*}_{i} \left( \vec{r} - \vec{R}_1 \right) W_{j} \left( \vec{r} - \vec{R}_2 \right)  \mel**{\Phi}{ \hat{c}^\dagger_{\vec{R}_1,i} \hat{c}_{\vec{R}_2,j} }{\Phi}.
\end{equation}
\Cref{app:eqn:ldos_to_wannier} forms the basis of our method: it relates an operator diagonal in space, $\hat{\Psi}^\dagger_{} \left( \vec{r} \right) \hat{\Psi}_{} \left( \vec{r} \right)$, to the potentially \emph{off-diagonal} correlation function $\mel**{\Phi}{ \hat{c}^\dagger_{\vec{R}_1,i} \hat{c}_{\vec{R}_2,j} }{\Phi}$. The expression can then be employed to derive the matrix elements $\mel**{\Phi}{ \hat{c}^\dagger_{\vec{R}_1,i} \hat{c}_{\vec{R}_2,j} }{\Phi}$ in terms of the experimentally measured CDD. This method relies on the Wannier orbitals not being highly localized. Indeed, for Wannier orbitals $W^{*}_{i} \left( \vec{r} - \vec{R} \right)$ tightly (delta-function) localized around the Wannier centers $\vec{R} + \vec{r}_i$
\begin{equation}
	W^{*}_{i} \left( \vec{r} - \vec{R}_1 \right) W_{j} \left( \vec{r} - \vec{R}_2 \right) \approx W^{*}_{i} \left( \vec{r} - \vec{R}_1 \right) W_{j} \left( \vec{r} - \vec{R}_1 \right) \delta_{\vec{R}_1+\vec{r}_i,\vec{R}_2+\vec{r}_j},
\end{equation}
making the CDD approximately independent of the off-diagonal matrix elements $\mel**{\Phi}{ \hat{c}^\dagger_{\vec{R}_1,i} \hat{c}_{\vec{R}_2,j} }{\Phi}$, for which $\vec{R}_1 + \vec{r}_i \neq \vec{R}_2 + \vec{r}_j$.

By substituting the fermionic field operators from \cref{app:eqn:field_in_terms_wan_hyb} into \cref{app:eqn:ldos_to_wannier}, we obtain
\begin{align}
	A \left( \vec{r} \right) &= \sum_{\substack{\vec{k}_1, \vec{k}_2 \\ i,j}} W^{*}_{i} \left( \vec{k}_1, \vec{r} \right) W_{j} \left( \vec{k}_2, \vec{r} \right) e^{- i \vec{k}_1 \cdot \vec{r}_i} e^{i \vec{k}_2 \cdot \vec{r}_j}  \mel**{\Phi}{ \hat{c}^\dagger_{\vec{k}_1,i} \hat{c}_{\vec{k}_2,j} }{\Phi} \nonumber \\
	&= \sum_{\vec{k}, i,j} W^{*}_{i} \left( \vec{k}, \vec{r} \right) W_{j} \left( \vec{k}, \vec{r} \right) e^{-i \vec{k} \cdot \vec{r}_i} e^{i \vec{k} \cdot \vec{r}_j}  \mel**{\Phi}{ \hat{c}^\dagger_{\vec{k},i} \hat{c}_{\vec{k},j} }{\Phi} \nonumber \\
	&= \sum_{\vec{k}, i,j} B_{ij} \left(\vec{r}, \vec{k} \right) \rho_{ij} \left( \vec{k} \right), \label{app:eqn:ldos_with_sp_factor_mom_space}
\end{align}
where the second equality follows from $\ket{\Phi}$ being translation-invariant. In \cref{app:eqn:ldos_with_sp_factor_mom_space}, we have also defined
\begin{align}
	B_{ij} \left(\vec{r}, \vec{k} \right) &= W^{*}_{i} \left( \vec{k}, \vec{r} \right) W_{j} \left( \vec{k}, \vec{r} \right), \label{app:eqn_def_spatial_factor_interm} \\ 
	\rho_{ij} \left( \vec{k} \right) &= e^{-i \vec{k} \cdot \vec{r}_i} e^{i \vec{k} \cdot \vec{r}_j}  \mel**{\Phi}{ \hat{c}^\dagger_{\vec{k},i} \hat{c}_{\vec{k},j} }{\Phi}. \label{app:eqn:def_order_parameter_matrix_k_space}
\end{align}
The two quantities defined above are termed the spatial factor and the correlation function of the band. The correlation function $\rho_{ij} \left( \vec{k} \right)$ encodes the correlators between the Wannier orbitals on which the band is supported, while $B_{ij} \left(\vec{r}, \vec{k} \right)$ relates these correlators to the band's CDD. In real space, the correlators characterizing the band are
\begin{align}
	\rho_{ij} \left( \Delta \vec{R} \right) =& \mel**{\Phi}{ \hat{c}^\dagger_{\vec{R},i} \hat{c}_{\vec{R} + \Delta \vec{R},j} }{\Phi}, \qq{for any lattice vector} \vec{R} \nonumber \\
	=& \frac{1}{N} \sum_{\vec{R}} \mel**{\Phi}{ \hat{c}^\dagger_{\vec{R},i} \hat{c}_{\vec{R} + \Delta \vec{R},j} }{\Phi} \nonumber \\
	=& \frac{1}{N} \sum_{\vec{R}, \vec{k}_1, \vec{k}_2} e^{- i\vec{k}_1 \cdot \left( \vec{R} + \vec{r}_i \right)} e^{i \vec{k}_2 \cdot \left( \vec{R} + \Delta \vec{R} + \vec{r}_j \right)}\mel**{\Phi}{ \hat{c}^\dagger_{\vec{k}_1,i} \hat{c}_{\vec{k}_2,j} }{\Phi} \nonumber \\
	=& \sum_{\vec{k}} e^{- i\vec{k} \cdot \vec{r}_i} e^{i \vec{k} \cdot \left( \Delta \vec{R} + \vec{r}_j \right)} \mel**{\Phi}{ \hat{c}^\dagger_{\vec{k},i} \hat{c}_{\vec{k},j} }{\Phi} \nonumber \\
	=& \sum_{\vec{k}} \rho_{ij} \left( \vec{k} \right) e^{i \vec{k} \cdot \Delta \vec{R}}, \label{app:eqn:def_order_parameter_matrix_r_space}\\
	\rho_{ij} \left( \vec{k} \right) =& \frac{1}{N} \sum_{\vec{R}} \rho_{ij} \left( \Delta \vec{R} \right) e^{-i \vec{k} \cdot \Delta \vec{R}}
\end{align}

In practice, it is more convenient to work in momentum space. Since the state with one filled band is translation-invariant, $\hat{T}_{\vec{R}} \ket{\Phi} = \ket{\Phi}$, we must have 
\begin{equation}
	A \left( \vec{r} + \vec{R} \right) =  \mel**{\Phi}{ \hat{T}_{\vec{R}} \hat{\Psi}^\dagger_{} \left( \vec{r} \right) \hat{\Psi}_{} \left( \vec{r} \right) \hat{T}^{-1}_{\vec{R}} }{\Phi} =  \mel**{\Phi}{ \hat{\Psi}^\dagger_{} \left( \vec{r} \right) \hat{\Psi}_{} \left( \vec{r} \right) }{\Phi} = A \left( \vec{r} \right).
\end{equation}
As a result, the CDD admits the following Fourier decomposition
\begin{align}
	A \left( \vec{r} \right) &= \sum_{\vec{Q}} A \left( \vec{Q}, z\right) e^{i \vec{Q} \cdot \vec{r}}, \label{app:eqn:ft_sp_func_to_real} \\
	A \left( \vec{Q}, z \right) &= \frac{1}{N \Omega_0} \int \dd[2]{r_{\parallel}} A \left( \vec{r} \right) e^{-i \vec{Q} \cdot \vec{r}}.  \label{app:eqn:ft_sp_func_to_mom}
\end{align} 
Writing the Fourier transform of the CDD $A \left( \vec{Q}, z \right)$ in terms of the real-space correlation functions $\rho_{ij} \left( \Delta \vec{R} \right)$, we obtain
\begin{equation}
	\label{app:eqn:sp_func_of_correlators}
	A \left( \vec{Q}, z \right) = \sum_{\vec{R}, i, j} B_{ij} \left( \vec{Q}, z, \Delta \vec{R} \right) \rho_{ij} \left( \Delta \vec{R} \right),
\end{equation}
where the Fourier-transformed spatial factor is defined as
\begin{align}
	B_{ij} \left( \vec{Q}, z, \Delta \vec{R} \right) &= \frac{1}{N} \sum_{\vec{k}} \frac{1}{N \Omega_0} \int \dd[2]{r_{\parallel}}B_{ij} \left(\vec{r}, \vec{k} \right) e^{- i \vec{k} \cdot \Delta \vec{R}} e^{- i \vec{Q} \cdot \vec{r}} \nonumber \\
	&= \frac{1}{N} \sum_{\vec{k}} \frac{1}{N \Omega_0} \int \dd[2]{r_{\parallel}} W^{*}_{i} \left( \vec{k}, \vec{r} \right) W_{j} \left( \vec{k}, \vec{r} \right)  e^{- i \vec{k} \cdot \Delta \vec{R}} e^{- i \vec{Q} \cdot \vec{r}} \nonumber \\
	&= \frac{1}{N} \sum_{\vec{k}} \frac{1}{N \Omega_0} \int \dd[2]{r_{\parallel}} \frac{1}{\Omega_0} \sum_{\vec{Q}_1, \vec{Q}_2} W^{*}_{i} \left( \vec{k} + \vec{Q}_1, z \right) W_{j} \left( \vec{k} + \vec{Q}_2, z \right) e^{i \left( \vec{Q}_2 - \vec{Q}_1 \right) \cdot \vec{r}}  e^{- i \vec{k} \cdot \Delta \vec{R}} e^{- i \vec{Q} \cdot \vec{r}} \nonumber \\
	&= \frac{1}{N \Omega_0} \sum_{\vec{k},\vec{Q}'} W^{*}_{i} \left( \vec{k} + \vec{Q}', z \right) W_{j} \left( \vec{k} + \vec{Q}' + \vec{Q}, z \right)  e^{- i \vec{k} \cdot \Delta \vec{R}}. \label{app:eqn:def_spatial_factor_final}
\end{align}

\Cref{app:eqn:sp_func_of_correlators} will be used to extract the real-space correlation functions $\rho_{ij} \left( \Delta \vec{R} \right)$ from the experimentally measured CDD $A \left( \vec{Q} \right)$, as will be explained in \cref{app:sec:correlator_general:fitting}. Regarding the spatial factor, \cref{app:eqn:def_spatial_factor_final} illustrates multiple ways to compute it from the Wannier wave functions. In practice, the most efficient method is as follows
\begin{itemize}
	\item First, determine the Wannier wave functions $ W_{i} \left( \vec{k}, \vec{r} \right)$ from the \textit{ab initio}{} Bloch functions using Wannier90, as shown in \cref{app:eqn:wannier_wavf_plane_wave}. Alternatively, start from the Fourier-transformed Wannier wave functions $ W_{i} \left( \vec{k} + \vec{Q}, z \right)$ and perform a Fourier transformation over $\vec{Q}$, as in the last equality of \cref{app:eqn:ft_def_w_to_k_partial}.
	
	\item Next, compute the spatial factor $B_{ij} \left(\vec{r}, \vec{k} \right)$ using \cref{app:eqn_def_spatial_factor_interm}.
	
	\item Finally, perform a Fourier transformation of $B_{ij} \left(\vec{r}, \vec{k} \right)$ over both $\vec{r}$ and $\vec{k}$, as detailed in the first line of \cref{app:eqn:def_spatial_factor_final}.
\end{itemize}
By following this procedure, the numerically expensive convolution in \cref{app:eqn:def_spatial_factor_final} can be avoided\footnote{Numerically the direct convolution needs $\mathcal{O} \left( N_{G_1}^2 N_{G_2}^2 \right)$ operations, while the more efficient method using Fourier transformations only requires $\mathcal{O} \left( N_{G_1} N_{G_2} \log \left( N_{G_1} N_{G_2} \right) \right)$ operations.}.

\subsubsection{Properties of real-space correlation functions}\label{app:sec:correlator_general:band_ldos:properties}

We now examine the real-space correlation functions in more detail. In momentum space, the correlation function takes the form of a projector
\begin{equation}
	\rho_{ij} \left( \vec{k} \right) = e^{-i \vec{k} \cdot \vec{r}_i} e^{i \vec{k} \cdot \vec{r}_j}  \mel**{\Phi}{ \hat{c}^\dagger_{\vec{k},i} \hat{c}_{\vec{k},j} }{\Phi} =  e^{-i \vec{k} \cdot \vec{r}_i} u^*_i \left( \vec{k} \right) e^{i \vec{k} \cdot \vec{r}_j} u_j \left( \vec{k} \right).
\end{equation}
As a result, $\rho_{ij} \left( \vec{k} \right)$ has trace one and is idempotent
\begin{align}
	\sum_{i} \rho_{ii} \left( \vec{k} \right) &= \sum_{i} e^{-i \vec{k} \cdot \vec{r}_i} u^*_i \left( \vec{k} \right) e^{i \vec{k} \cdot \vec{r}_i} u_i \left( \vec{k} \right) = 1, \label{app:eqn:tr_one_mom_space} \\
	\sum_{j} \rho_{ij} \left( \vec{k} \right) \rho_{jk} \left( \vec{k} \right) &= \sum_{j} e^{-i \vec{k} \cdot \vec{r}_i} u^*_i \left( \vec{k} \right) e^{i \vec{k} \cdot \vec{r}_j} u_j \left( \vec{k} \right) e^{-i \vec{k} \cdot \vec{r}_j} u^*_j \left( \vec{k} \right) e^{i \vec{k} \cdot \vec{r}_k} u_k \left( \vec{k} \right) \nonumber \\
	&= e^{-i \vec{k} \cdot \vec{r}_i} u^*_i \left( \vec{k} \right) e^{i \vec{k} \cdot \vec{r}_k} u_k \left( \vec{k} \right) \nonumber \\
	&= \rho_{ik} \left( \vec{k} \right). \label{app:eqn:idem_pot_mom_space}
\end{align}
By Fourier transforming \cref{app:eqn:tr_one_mom_space,app:eqn:idem_pot_mom_space} to real space, we obtain two properties of the real-space correlation function
\begin{align}
	1 =& \frac{1}{N} \sum_{\vec{k},i} \rho_{ii} \left( \vec{k} \right) = \sum_{i} \eval{ \rho_{ii} \left( \Delta \vec{R} \right)}_{\Delta \vec{R} = \vec{0}} \label{app:eqn:normalization_rho_linear} \\
	1 =& \frac{1}{N} \sum_{\vec{k},i} \rho_{ii} \left( \vec{k} \right) = \frac{1}{N} \sum_{\vec{k},i,j} \rho_{ij} \left( \vec{k} \right) \rho_{ji} \left( \vec{k} \right) = \frac{1}{N} \sum_{\vec{k},i,j} \rho_{ij} \left( \vec{k} \right) \rho^{*}_{ij} \left( \vec{k} \right) \nonumber \\
	=&  \frac{1}{N^2} \sum_{\vec{k}_1, \vec{k}_2} \sum_{\Delta \vec{R}, i, j} \rho_{ij} \left( \vec{k}_1 \right) e^{i \vec{k}_1 \cdot \Delta \vec{R}} \rho^{*}_{ij} \left( \vec{k}_2 \right) e^{-i \vec{k}_2 \cdot \Delta \vec{R}}  \nonumber \\
	=&  \sum_{\Delta \vec{R}, i, j} \abs{\rho_{ij} \left( \Delta \vec{R} \right)}^2. \label{app:eqn:normalization_rho_square}
\end{align}
We term these the normalization conditions
\begin{equation}
	\sum_{\Delta \vec{R}, i, j} \abs{\rho_{ij} \left( \Delta \vec{R} \right)}^2 = \sum_{i} \eval{ \rho_{ii} \left( \Delta \vec{R} \right)}_{\Delta \vec{R} = \vec{0}} = 1.
\end{equation}

Finally, for computing the CDD of a band (as opposed to extracting the correlation function from the CDD) using \textit{ab initio}{} methods, the following expression is the most computationally efficient, as it avoids direct convolutions. Starting from the second row of \cref{app:eqn:ldos_with_sp_factor_mom_space}, we obtain
\begin{align}
	A \left( \vec{r} \right) &= \sum_{\vec{k},i,j} W^{*}_{i} \left( \vec{k}, \vec{r} \right) W_{j} \left( \vec{k}, \vec{r} \right) e^{- i \vec{k} \cdot \vec{r}_i} e^{i \vec{k} \cdot \vec{r}_j} u^*_i \left( \vec{k} \right) u_j \left( \vec{k} \right) \nonumber \\
	&= \sum_{\vec{k},i}  \abs{ W_{i} \left( \vec{k}, \vec{r} \right) u_i \left( \vec{k} \right) e^{i \vec{k} \cdot \vec{r}_i}}^2.
\end{align}

\subsection{Symmetry properties}\label{app:sec:correlator_general:sym_prop}

In this section, we consider the constraints imposed by the crystalline symmetries of the system on the CDD, correlation function, and spatial factor. We take $g = \left\lbrace R | \vec{p} \right\rbrace$ to be a symmorphic ($\vec{p} = \vec{0}$) or non-symmorphic ($\vec{p} \neq \vec{0}$) crystalline symmetry of the system, consisting of a rotation $R$ followed by a translation by a non-lattice vector $\vec{p}$. As we focus on two-dimensional samples probed by STM, $R$ ($\vec{p}$) denotes only in-plane rotations and translations, as well as out-of-plane reflections. In full generality, $g$ can be either unitary or antiunitary.

The action of $g$ on any real-space vector is defined as
\begin{equation}
	g \vec{r} = R \vec{r} + \vec{p},
\end{equation}
while the action on a momentum $\vec{k}$ depends on whether $g$ is unitary or not
\begin{equation}
	g \vec{k} = \begin{cases}
		R \vec{k} & \quad \text{if $g$ is unitary} \\
		-R \vec{k} & \quad \text{if $g$ is antiunitary} \\
	\end{cases}.
\end{equation}

The action of $g$ on the electron operators is given by
\begin{equation}
	\label{app:eqn:tb_general_notation_symmetries}
	g \hat{c}^\dagger_{\vec{R},i} g^{-1} = \sum_{j} D_{ij} (g) \hat{c}^\dagger_{g \left( \vec{R} + \vec{r}_i \right) - \vec{r}_j,j},
\end{equation}
where $D_{ij} (g)$ represents the unitary matrix corresponding to the symmetry $g$. Importantly, $D_{ij} (g)$ is nonzero only if $g \left( \vec{R} + \vec{r}_i \right) - \vec{r}_j$ is a lattice vector.

To obtain the constraints imposed by $g$ on the Wannier wave functions, we use the fact that 
\begin{equation}
	\label{app:eqn:trafo_of_field}
	g \hat{\Psi}_{} \left( \vec{r} \right) g^{-1} = \hat{\Psi}_{} \left( g \vec{r} \right).
\end{equation}
Using the expansion in \cref{app:eqn:field_in_terms_wan_real}, \cref{app:eqn:trafo_of_field} leads to
\begin{alignat}{4}
	&& \sum_{\vec{R},i,j} W^{(*)}_i \left( \vec{r}-\vec{R} \right) D^{*}_{ij} (g) \hat{c}_{g \left( \vec{R} + \vec{r}_i \right) - \vec{r}_j,j} &= \sum_{\vec{R},i} W_i \left( g\vec{r}-\vec{R} \right) \hat{c}_{\vec{R},i} && \iff \nonumber  \\
	\iff &&  \sum_{\vec{R},i,j} W^{(*)}_i \left( \vec{r}- g^{-1} \left( \vec{R} + \vec{r}_j \right) + \vec{r}_i  \right) D^{*}_{ij} (g) \hat{c}_{\vec{R},j} &= \sum_{\vec{R},j} W_j \left( g\vec{r}-\vec{R} \right) \hat{c}_{\vec{R},j}, && \label{app:eqn:action_symmetry_wan_func_interm}
\end{alignat}
with ${}^{(*)}$ implying complex conjugation whenever $g$ is antiunitary. Since the operators $\hat{c}_{\vec{R},j}$ are linearly independent, we find that (for any $\vec{r}$ and $\vec{R}$)
\begin{alignat}{4}
	&& W_j \left( R\vec{r} + \vec{p} -\vec{R} \right) &= \sum_{i} W^{(*)}_i \left( \vec{r}- R^{-1} \left( \vec{R} + \vec{r}_j - \vec{p} \right) + \vec{r}_i  \right) D^{*}_{ij} (g) &&\iff \nonumber \\
	\iff && W_j \left( \vec{r} \right) & = \sum_{i} W^{(*)}_i \left( R^{-1} \left( \vec{r} - \vec{p} \right)- R^{-1} \left( \vec{r}_j - \vec{p} \right) + \vec{r}_i  \right) D^{*}_{ij} (g) &&\iff \nonumber \\
	\iff && W_j \left( \vec{r} + \vec{r}_j \right) & = \sum_{i} W^{(*)}_i \left( R^{-1} \vec{r}  + \vec{r}_i  \right) D^{*}_{ij} (g). && \label{app:eqn:symmetry_real_wannier}
\end{alignat}
Fourier transforming \cref{app:eqn:symmetry_real_wannier} according to \cref{app:eqn:ft_def_w_to_r}, we directly obtain
\begin{equation}
	W_j \left( \vec{k} + \vec{Q}, z \right) = \sum_{i} W^{(*)}_i \left( g^{-1} \left( \vec{k} + \vec{Q} \right), z \right) e^{i \left( \vec{k} + \vec{Q} \right) \cdot \left( R \vec{r}_i - \vec{r}_j \right)} D^{*}_{ij} (g). \label{app:eqn:symmetry_momentum_wannier}
\end{equation}
From \cref{app:eqn:def_spatial_factor_final,app:eqn:symmetry_momentum_wannier}, the constraints imposed by the symmetry $g$ on the spatial factor are
\begin{align}
	B_{ij} \left( \vec{Q}, z, \Delta \vec{R} \right) =& \frac{1}{N \Omega_0} \sum_{\substack{\vec{k},\vec{Q}' \\ i' j'}} \left[ W^{*}_{i'} \left(g^{-1} \left( \vec{k} + \vec{Q}' \right), z \right) W_{j'} \left( g^{-1} \left( \vec{k} + \vec{Q}' + \vec{Q} \right), z \right) \right]^{(*)} \nonumber \\
	& \times e^{-i \left( \vec{k} + \vec{Q}' \right) \cdot \left( R \vec{r}_{i'} - \vec{r}_i \right)} e^{i \left( \vec{k} + \vec{Q}' + \vec{Q} \right) \cdot \left( R \vec{r}_{j'} - \vec{r}_j \right)}  e^{- i \vec{k} \cdot \Delta \vec{R}} D_{i'i} (g) D^{*}_{j'j} (g) \nonumber \\
	=& \frac{1}{N \Omega_0} \sum_{\substack{\vec{k},\vec{Q}' \\ i' j'}} \left[ W^{*}_{i'} \left(g^{-1} \left( \vec{k} + \vec{Q}' \right), z \right) W_{j'} \left( g^{-1} \left( \vec{k} + \vec{Q}' + \vec{Q} \right), z \right) \right]^{(*)} \nonumber \\
	& \times e^{-i \left( \vec{k} + \vec{Q}' \right) \cdot \left( \Delta \vec{R} +  R \vec{r}_{i'} - \vec{r}_i - R \vec{r}_{j'} + \vec{r}_j \right)} e^{i \vec{Q} \cdot \left( R \vec{r}_{j'} - \vec{r}_j \right)} D_{i'i} (g) D^{*}_{j'j} (g)\nonumber \\
	=& \frac{1}{N \Omega_0} \sum_{\substack{\vec{k},\vec{Q}' \\ i' j'}} \left\lbrace W^{*}_{i'} \left(\vec{k} + \vec{Q}', z \right) W_{j'} \left( \vec{k} + \vec{Q}' + g^{-1} \vec{Q}, z \right) e^{-i \left[ R \left( \vec{k} + \vec{Q}' \right) \right] \cdot \left( \Delta \vec{R} +  R \vec{r}_{i'} - \vec{r}_i - R \vec{r}_{j'} + \vec{r}_j \right)}\right\rbrace^{(*)} \nonumber \\
	& \times e^{i \vec{Q} \cdot \left( g \vec{r}_{j'} - \vec{p} - \vec{r}_j \right)} D_{i'i} (g) D^{*}_{j'j} (g) \nonumber \\
	=& \frac{1}{N \Omega_0} \sum_{\substack{\vec{k},\vec{Q}' \\ i' j'}} \left\lbrace W^{*}_{i'} \left(\vec{k} + \vec{Q}', z \right) W_{j'} \left( \vec{k} + \vec{Q}' + g^{-1} \vec{Q}, z \right) e^{-i  \left( \vec{k} + \vec{Q}' \right) \cdot \left[ g^{-1} \left( \Delta \vec{R}  + \vec{r}_j\right) - g^{-1} \vec{r}_i - \vec{r}_{j'} + \vec{r}_{i'}\right]}\right\rbrace^{(*)} \nonumber \\
	& \times e^{- i \vec{Q} \cdot \vec{p}} D_{i'i} (g) D^{*}_{j'j} (g) \nonumber \\
	=& \sum_{i',j'} B^{(*)}_{i'j'} \left( g^{-1} \vec{Q}, z, g^{-1} \left( \Delta \vec{R}  + \vec{r}_j\right) - g^{-1} \vec{r}_i - \vec{r}_{j'} + \vec{r}_{i'} \right) e^{- i \vec{Q} \cdot \vec{p}} D_{i'i} (g) D^{*}_{j'j} (g). \label{app:eqn:trafo_spatial_factor_interm}
\end{align}
For future use in \cref{app:eqn:sym_constraint_FFT_CDD_confirmation}, we take $g = \left\lbrace R | \vec{p} \right\rbrace \to g^{-1} = \left\lbrace R^{-1} | - R^{-1}\vec{p} \right\rbrace$ in \cref{app:eqn:trafo_spatial_factor_interm} to obtain
\begin{equation}
	\label{app:eqn:trafo_spatial_factor}
	B_{ij} \left( \vec{Q}, z, \Delta \vec{R} \right) =\sum_{i',j'} B^{(*)}_{i'j'} \left( g \vec{Q}, z, R \left( \Delta \vec{R}  + \vec{r}_j -  \vec{r}_i \right) - \vec{r}_{j'} + \vec{r}_{i'} \right) e^{i \left( R \vec{Q} \right) \cdot \vec{p}} D^{*}_{ii'} (g) D_{jj'} (g). 
\end{equation}

We now turn to the real-space correlation function. Using the fact that the state $\ket{\Phi}$ is symmetric under the symmetry $g$ ({\it i.e.}{}, $g \ket{\Phi} = \ket{\Phi}$), we obtain
\begin{align}
	\rho_{ij} \left( \Delta \vec{R} \right) =& \mel**{\Phi}{g \hat{c}^\dagger_{\vec{0},i} \hat{c}_{\Delta \vec{R},j}  g^{-1}}{\Phi}^{(*)} \nonumber \\
	=& \sum_{i',j'} D_{ii'} (g) D^{*}_{jj'} (g) \mel**{\Phi}{ \hat{c}^\dagger_{g \vec{r}_i  - \vec{r}_{i'} ,i'} \hat{c}_{g \left( \Delta \vec{R} + \vec{r}_j \right) - \vec{r}_{j'},j'} }{\Phi}^{(*)} \nonumber \\
	=& \sum_{i',j'} D_{ii'} (g) D^{*}_{jj'} (g)	\rho^{(*)}_{i'j'} \left( g \left( \Delta \vec{R} + \vec{r}_j \right) - g \vec{r}_i  - \vec{r}_{j'} + \vec{r}_{i'} \right) \nonumber \\
	=& \sum_{i',j'} D_{ii'} (g) D^{*}_{jj'} (g)	\rho^{(*)}_{i'j'} \left( R \left( \Delta \vec{R} + \vec{r}_j - \vec{r}_i \right)  - \vec{r}_{j'} + \vec{r}_{i'} \right). \label{app:eqn:trafo_ord_param}
\end{align}

Finally, we consider the CDD of the band. Again using the fact that the state $\ket{\Phi}$ is symmetric under the symmetry $g$, as well as the transformation property of the fermionic field from \cref{app:eqn:trafo_of_field}, we trivially find
\begin{equation}
	A \left( g \vec{r} \right) = \mel**{\Phi}{g \hat{\Psi}^\dagger_{} \left( \vec{r} \right)  \hat{\Psi}_{} \left( \vec{r} \right) g^{-1} }{\Phi} = A \left( \vec{r} \right).
\end{equation} 
This implies, through \cref{app:eqn:ft_sp_func_to_mom}, that the Fourier transformation of $A \left( \vec{r} \right)$ obeys
\begin{alignat}{4}
	&& A \left( \vec{Q}, z \right) &= \frac{1}{N \Omega_0} \int \dd[2]{r_{\parallel}} A \left( g \vec{r} \right) e^{- i \vec{Q} \cdot \vec{r}} && \iff \nonumber \\
	\iff && A \left( \vec{Q}, z \right) &= \frac{1}{N \Omega_0} \int \dd[2]{r_{\parallel}} A \left( \vec{r} \right) e^{- i \vec{Q} \cdot \left[ R^{-1} \left( \vec{r} - \vec{p} \right) \right]} && \iff \nonumber \\
	\iff && A \left( \vec{Q}, z \right) &= \frac{1}{N \Omega_0} \int \dd[2]{r_{\parallel}} A \left( \vec{r} \right) e^{- i \left( R \vec{Q} \right) \cdot \vec{r}} e^{R \vec{Q} \cdot \vec{p}} && \iff \nonumber \\
	\iff && A \left( \vec{Q}, z \right) &= A \left( R \vec{Q}, z \right) e^{i \left( R \vec{Q} \right) \cdot \vec{p}}, && \qq{for any transformation} g = \left\lbrace R | \vec{p} \right\rbrace. 
	\label{app:eqn:sym_constraint_FFT_CDD}
\end{alignat}
Importantly, note the absence of complex conjugation when $g$ is antiunitary, as the real-space CDD is a real (and positive) quantity. In particular, the presence of time-reversal symmetry (which is antiunitary and for which $R$ is the identity and $\vec{p} = \vec{0}$) imposes no \emph{direct} constraints on the CDD. Consequently, the breaking of time-reversal symmetry cannot be \emph{directly} inferred by analyzing $A \left( \vec{r} \right)$. Although not a symmetry property, we also note that because $A \left( \vec{r} \right)$ is real and positive, we must have
\begin{equation}
	\label{app:eqn:hermiticty_prop_FFT_CDD}
	A \left( \vec{Q}, z \right) = A^{*} \left( -\vec{Q}, z \right).
\end{equation}
In addition, it is worth mentioning that \cref{app:eqn:sym_constraint_FFT_CDD} can also be obtained from \cref{app:eqn:trafo_spatial_factor,app:eqn:trafo_ord_param,app:eqn:hermiticty_prop_FFT_CDD}
\begin{align}
	A \left( \vec{Q}, z \right) &= \sum_{\vec{R}, i, j} B^{(*)}_{ij}  \left( g \vec{Q}, z, R \left( \Delta \vec{R}  + \vec{r}_j -  \vec{r}_i \right) - \vec{r}_{j'} + \vec{r}_{i'} \right) e^{i \left( R \vec{Q} \right) \cdot \vec{p}} \rho^{(*)}_{i'j'} \left( R \left( \Delta \vec{R} + \vec{r}_j - \vec{r}_i \right)  - \vec{r}_{j'} + \vec{r}_{i'} \right) \nonumber \\
	&= A^{*} \left( g \vec{Q}, z \right) e^{i \left( R \vec{Q} \right) \cdot \vec{p}} \nonumber \\
	&= A \left( R \vec{Q}, z \right) e^{i \left( R \vec{Q} \right) \cdot \vec{p}}. \label{app:eqn:sym_constraint_FFT_CDD_confirmation}
\end{align}

\subsection{Fitting the correlation function}\label{app:sec:correlator_general:fitting}
We now explain how the correlation function can be extracted from experimental measurements of $A \left( \vec{r} \right)$. The general method can be outlined as follows:
\begin{enumerate}
	\item Begin by writing down the most general form of the real-space correlation function that obeys the symmetries of the system. These symmetries constrain $\rho_{ij} \left( \Delta \vec{R} \right)$ according to \cref{app:eqn:trafo_ord_param}. A cutoff distance $R_{\text{max}}$ is typically introduced, beyond which all the components of the correlation function are approximated as zero
	\begin{equation}
		\label{app:eqn:cutoff_correlation_function}
		\rho_{ij} \left( \Delta \vec{R} \right) \approx 0, \qq{for} \abs{\Delta \vec{R} + \vec{r}_j - \vec{r}_i} > R_{\text{max}}.
	\end{equation}
	Because of the symmetry constraints from \cref{app:eqn:trafo_ord_param}, not all nonzero components of $\rho_{ij} \left( \Delta \vec{R} \right)$ are independent. In what follows, we will let $N_{\rho}$ denote the number of \emph{independent} real components of the correlation function $\rho_{ij} \left( \Delta \vec{R} \right)$.  
	
	The cutoff in \cref{app:eqn:cutoff_correlation_function} can be justified as follows:
	\begin{itemize}
		\item For bands with zero Chern number, the band projector is an analytic function of $\vec{k}$, making its real-space Fourier transform exponentially decaying in $\vec{R}$~\cite{BRO07}. Thus, large-distance components of the correlation function are suppressed and can be ignored.
		\item As shown in \cref{app:eqn:ldos_to_wannier}, Wannier orbitals farther apart than their typical spread contribute minimally to the CDD due to small overlap. The corresponding correlators cannot be determined and must be ignored.
		\item Experimental data has finite resolution and cannot reliably fit a large number of independent components.
	\end{itemize}

	\item Compute the spatial factor $B_{ij} \left( \vec{Q}, z, \Delta \vec{R} \right)$ using the method outlined in \cref{app:sec:correlator_general:band_ldos:decomposition} with Wannier wave functions determined from \textit{ab initio}{} simulations.

	\item Fourier transform the experimentally measured CDD to obtain $A \left( \vec{Q}, z \right)$. Due to finite experimental resolution, only a limited number of Fourier components can be resolved ({\it i.e.}{}, $A \left( \vec{Q}, z \right) \approx 0$ for large $\abs{\vec{Q}}$). Additionally, system symmetries reduce the number of \emph{independent} real components of $A \left( \vec{Q}, z \right)$. We denote this number by $N_{A}$.

	\item Determine the correlation function by solving \cref{app:eqn:sp_func_of_correlators} for $\rho_{ij} \left( \Delta \vec{R} \right)$ in the least-squares sense. We begin by writing \cref{app:eqn:sp_func_of_correlators} as a matrix-vector product
	\begin{equation}
		\label{app:eqn:fitting_op_mat_form}
		A_{\vec{Q}} = \sum_{I} B_{\vec{Q},I} \rho_{I},
	\end{equation}
	where $I = \left(i, j, \Delta \vec{R} \right)$ is a composite index and
	\begin{align}
		\rho_{I} \equiv & \rho_{ij} \left( \Delta \vec{R} \right), \\
		B_{\vec{Q},I} \equiv & B_{ij} \left( \vec{Q}, z, \Delta \vec{R} \right), \\
		A_{\vec{Q}} \equiv & A \left( \vec{Q}, z \right).
	\end{align}
	Since not all components of $A_{\vec{Q}}$ or $\rho_{I}$ are independent, \cref{app:eqn:fitting_op_mat_form} contains some redundancy ({\it e.g.}{}, equations corresponding to components of $A_{\vec{Q}}$ related by symmetry are duplicated, and there appear to be more unknowns than independent components of $\rho_{I}$). To eliminate this redundancy, we express both $A_{\vec{Q}}$ and $\rho_{I}$ in terms of their independent components,
	\begin{equation}
		A_{\vec{Q}} = \sum_{n=1}^{N_A} \mathcal{A}_{\vec{Q},n} \alpha_n \qq{and} \rho_{I} = \sum_{n=1}^{N_{\rho}} \mathcal{R}_{I,n} \beta_n,
	\end{equation}
	where $\alpha_n, \beta_n \in \mathbb{R}$, and the bases $\mathcal{A}_{\vec{Q},n}$ and $\mathcal{R}_{I,n}$ are chosen to be orthonormal without loss of generality
	\begin{equation}
		\sum_{\vec{Q}} \mathcal{A}^{*}_{\vec{Q},n} \mathcal{A}_{\vec{Q},n'} = \delta_{nn'} \qq{and}
		\sum_{I} \mathcal{R}^{*}_{I,n} \mathcal{R}_{I,n'} = \delta_{nn'}.
	\end{equation} 
	In this nonredundant basis, \cref{app:eqn:fitting_op_mat_form} becomes
	\begin{equation}
		\label{app:eqn:fitting_op_mat_form_non_redundant}
		\sum_{n=1}^{N_A} \mathcal{A}_{\vec{Q},n} \alpha_n = \sum_{n=1}^{N_{\rho}} \mathcal{B}_{\vec{Q},n} \beta_n, \qq{where} \mathcal{B}_{\vec{Q},n} \equiv \sum_{I} B_{\vec{Q},I} \mathcal{R}_{I,n}.
	\end{equation}
	\Cref{app:eqn:fitting_op_mat_form_non_redundant} is a linear system for the independent components of $\rho_{I}$ ({\it i.e.}{}, $\beta_n$, for $1 \leq n \leq N_{\rho}$). The coefficient $\alpha_n$, for $1 \leq n \leq N_{A}$, are known from the experimentally determined $A_{\vec{Q}}$. Depending on the rank $r$ of the matrix $\mathcal{B}_{\vec{Q},n}$, the number of independent components of the correlation function $N_{\rho}$, and the number of independent Fourier components of the CDD $N_{A}$, different scenarios emerge. The rank of $\mathcal{B}_{\vec{Q},n}$ is bounded by its second dimension, so $r \leq N_{\rho}$. Moreover, $r$ is, by definition, equal to the dimension of the column space of $\mathcal{B}_{\vec{Q},n}$. Since for any $\beta_n$, $\sum_{n=1}^{N_{\rho}} \mathcal{B}_{\vec{Q},n} \beta_n$ is a \textit{bona fide} symmetry-preserving CDD Fourier transformation, the column space of $\mathcal{B}_{\vec{Q},n}$ must be included in the space spanned by $\mathcal{A}_{\vec{Q},n}$, implying that $r \leq N_{A}$.
	
	In general, the upper bound derived above, $r \leq N_{A}, N_{\rho}$, should be saturated at generic heights $z$ (as all symmetries are explicitly accounted for by the nonredundant parameterizations for $A_{\vec{Q}}$ and $\rho_I$). If $N_{A} > N_{\rho}$, the system is overdetermined and can be solved in the least-squares sense by computing the Moore-Penrose inverse of $\mathcal{B}_{\vec{Q},n}$. If $N_{A} = N_{\rho}$, the system can be solved exactly by inverting $\mathcal{B}_{\vec{Q},n}$. If $N_{A} < N_{\rho}$, the system is underdetermined, necessitating a reduction in the number of nonzero independent components of the correlation function.
	
	Experiments determine $A \left( \vec{Q}, z \right)$ only up to an unspecified multiplicative constant, as described by the Tersoff-Hamann approximation in \cref{app:eqn:current_for_band}. Consequently, the correlators are determined only up to this same multiplicative factor. To ensure proper normalization, the elements should be rescaled by a constant so that they satisfy the normalization condition from \cref{app:eqn:normalization_rho_linear}.
\end{enumerate}

The advantage of the method outlined above is that solving \cref{app:eqn:fitting_op_mat_form} in the least-squares sense is computationally inexpensive. However, a drawback is that the information contained in the normalization condition from \cref{app:eqn:normalization_rho_square} is not explicitly utilized. This condition can be employed to reduce the number of independent parameters effectively. The trade-off for using this constraint is that the problem transitions into a nonlinear optimization problem.

To be specific, we first assume that the experiment measures $A' \left( \vec{Q}, z \right) = \gamma A \left( \vec{Q}, z \right)$, where $\gamma$ is an unknown positive constant. If both normalization conditions from \cref{app:eqn:normalization_rho_linear,app:eqn:normalization_rho_square} are to be satisfied by the fitted correlation function $\rho_{ij} \left( \Delta \vec{R} \right)$, $\gamma$ must be treated as a variational parameter. This leads to the following constrained minimization problem
\begin{equation}
	\begin{gathered}
		\min_{\rho_{ij} \left( \Delta \vec{R} \right), \gamma} \sum_{\vec{Q}} \abs{ A'\left( \vec{Q}, z \right) - \gamma \sum_{\Delta \vec{R}, i, j} B_{ij} \left( \vec{Q}, z, \Delta \vec{R} \right) \rho_{ij} \left( \Delta \vec{R} \right)}^2 \qq{subject to} \nonumber \\
		\sum_{i} \rho_{ii} \left( \vec{0} \right) = 1 \qq{and} \sum_{\vec{R},i,j} \abs{\rho_{ij} \left( \Delta \vec{R} \right)}^2 = 1.
	\end{gathered}
\end{equation}
However, the number of constraints and variational parameters can be reduced by one each if we instead fit $\tilde{\rho}_{ij} \left( \Delta \vec{R} \right) = \gamma \rho_{ij} \left( \Delta \vec{R} \right)$. As a result of \cref{app:eqn:normalization_rho_linear,app:eqn:normalization_rho_square}, $\tilde{\rho}_{ij} \left( \Delta \vec{R} \right)$ obeys
\begin{equation}
	\label{app:eqn:general_constraint_interm}
	\sum_{i} \tilde{\rho}_{ii} \left( \vec{0} \right) = \gamma \qq{and} \sum_{\vec{R},i,j} \abs{\rho_{ij} \left( \Delta \vec{R} \right)}^2 = \gamma^2.
\end{equation}
By eliminating $\gamma$ in \cref{app:eqn:general_constraint_interm}, we derive a single constraint for the scaled correlation function $\tilde{\rho}_{ij} \left( \Delta \vec{R} \right)$
\begin{equation}
	\label{app:eqn:general_constraint}
	\sum_{\vec{R},i,j} \abs{\tilde{\rho}_{ij} \left( \Delta \vec{R} \right)}^2 = \left( \sum_{i} \tilde{\rho}_{ii} \left( \vec{0} \right) \right)^2.
\end{equation}
This reformulation makes extracting the scaled correlation function $\tilde{\rho}_{ij} \left( \Delta \vec{R} \right)$ equivalent to solving the following constrained minimization problem
\begin{equation}
	\label{app:eqn:general_constraint_minimization}
	\min_{\tilde{\rho}_{ij} \left( \Delta \vec{R} \right)} \sum_{\vec{Q}} \abs{ A' \left( \vec{Q}, z \right) - \sum_{\Delta \vec{R}, i, j} B_{ij} \left( \vec{Q}, z, \Delta \vec{R} \right) \tilde{\rho}_{ij} \left( \Delta \vec{R} \right)}^2 \qq{subject to} \sum_{\vec{R},i,j} \abs{\tilde{\rho}_{ij} \left( \Delta \vec{R} \right)}^2 - \left( \sum_{i} \tilde{\rho}_{ii} \left( \vec{0} \right) \right)^2 = 0.
\end{equation}
After minimization, the correlation function $\tilde{\rho}_{ij} \left( \Delta \vec{R} \right)$ is normalized according to \cref{app:eqn:normalization_rho_linear} to obtain $\rho_{ij} \left( \Delta \vec{R} \right)$.

The choice between a least-squares or constrained-minimization approach depends on the problem at hand. In this work, we utilize a constrained minimization method in which the correlation function is efficiently parameterized, as detailed in \cref{app:sec:experimental:fit}.

\subsubsection{The independent components of \texorpdfstring{$A \left( \vec{Q}, z \right)$}{A(Q,z)} for triangular lattices}

\begin{figure}[!t]
	\centering
	\includegraphics[width=0.4\textwidth]{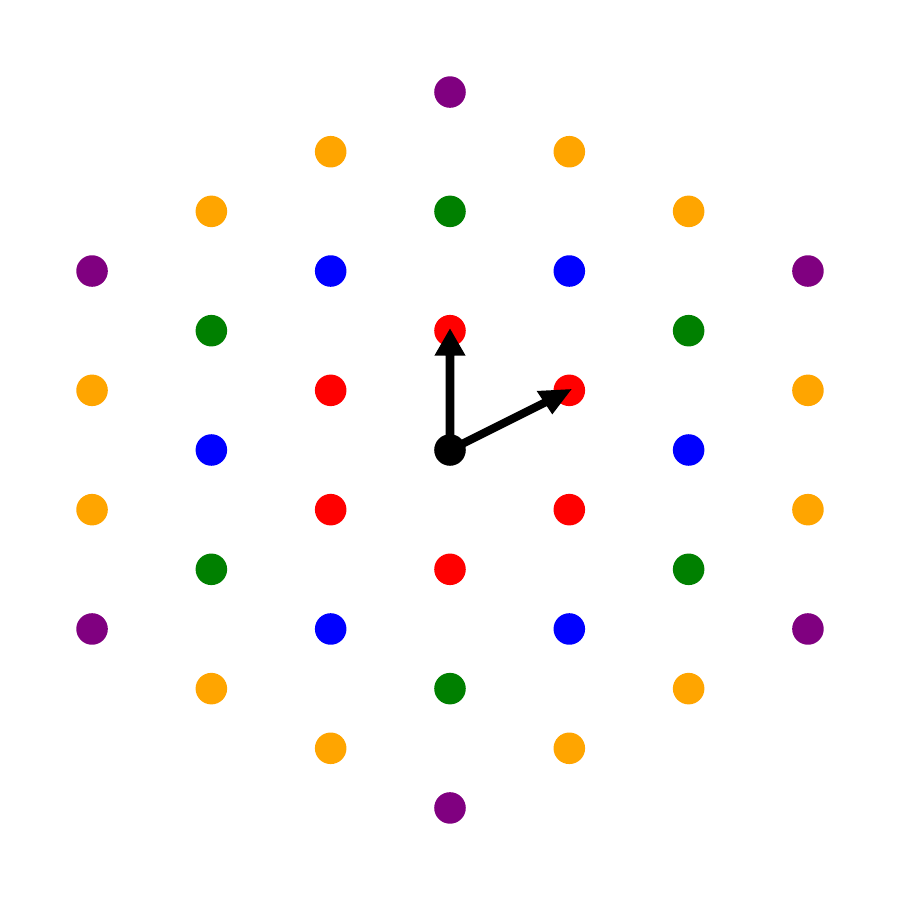}
	\caption{Harmonic decomposition of the Fourier-transformed CDD. The two black arrows represent the reciprocal lattice vectors $\vec{b}_{1}$ and $\vec{b}_{2}$. Dots of the same color correspond to reciprocal lattice vectors $\vec{Q} = n_1 \vec{b}_1 + n_2 \vec{b}_2$ (with $n_1, n_2 \in \mathbb{Z}$) that share the same length $\abs{\vec{Q}}$, {\it i.e.}{}, they represent harmonics of $A\left( \vec{Q}, z \right)$ of the same order.}
	\label{app:fig:BZ_circles}
\end{figure}

Finally, we consider the case of a triangular lattice and discuss the number of independent components of $A \left( \vec{Q}, z \right)$ under different symmetries. The primitive reciprocal lattice vectors of the system are given by
\begin{align}
	\vec{b}_1=\frac{2\pi}{a} \left( 1, \frac{1}{\sqrt{3}} \right), \quad 
	\vec{b}_2=\frac{2\pi}{a} \left( 0, \frac{2}{\sqrt{3}} \right),
\end{align}
where $a$ is the unit cell size. Labeling each reciprocal lattice vector $\vec{Q} = n_1 \vec{b}_1 + n_2 \vec{b}_2$ (with $n_1, n_2 \in \mathbb{Z}$) as $[n_1,n_2]$, we classify the reciprocal lattice vectors $\vec{Q}$ based on their distance from the origin:
\begin{itemize}
    \item zeroth harmonic:$[0,0]$.
    \item first harmonic: $[1, 0], [0, 1], [-1, 1], [-1, 0], [0, -1], [1, -1]$.
    \item second harmonic: $[1, 1], [-1, 2], [-2, 1], [-1, -1], [1, -2], [2, -1]$.
    \item third harmonic: $[2, 0], [0, 2], [-2, 2], [-2, 0], [0, -2], [2, -2]$.
    \item fourth harmonic: $[-3, 1], [-3, 2], [-2, -1], [-2, 3], [-1, -2], [-1, 3], [1, -3], [1, 2], [2, -3], [2, 1], [3, -2], [3, -1]$.
    \item fifth harmonic: $[-3, 0], [-3, 3], [0, -3], [0, 3], [3, -3], [3, 0]$.
\end{itemize}
The classification by $\abs{\vec{Q}}$ arises because $A \left( \vec{Q}, z \right)$ decreases in magnitude with increasing $\abs{\vec{Q}}$. Experimentally, only a small number of harmonics are resolved.  \Cref{app:fig:BZ_circles} illustrates the positions of $\vec{Q}$ for each order of harmonics. 

We now enumerate the independent \emph{real} components of $A \left( \vec{Q}, z \right)$ up to the fifth harmonic, depending on the symmetries of the system:
\begin{itemize}
	\item With $C_{6z}$ symmetry: Each set of harmonics with six $\vec{Q}$ vectors has only one symmetry-independent component. The set with 12 harmonics provides two symmetry-independent components if the system lacks $m_x$ or $m_y$ symmetry; otherwise, it also yields only one symmetry-independent component.
    \item With only $C_{3z}$ symmetry: Each set of harmonics with six $\vec{Q}$ vectors provides two symmetry-independent components, while the set with 12 harmonics furnishes four symmetry-independent components. If $m_x$ symmetry is additionally present, the first, third, fourth, and fifth harmonics each contribute two independent components, whereas the second harmonic contributes only one. Conversely, if $m_y$ symmetry is additionally present, the first, third, and fifth harmonics each have one independent component, while the second and fourth harmonics provide two independent components.
\end{itemize}
In all cases, the zeroth harmonic contributes one additional independent component, though it is often ignored due to low-wave vector noise, which can dominate this harmonic.

\section{First-principle results for \ch{NbSe2}}\label{app:sec:NbSe2_ab_initio}

In this \siSection{}, we present \textit{ab initio}{} results for the single-layer hexagonal phase of \ch{NbSe2}. The bulk phase, 2H-\ch{NbSe2}, comprises two layers per unit cell and is symmetric under the group $P6_3/mmc1'$ [Shubnikov Space Group (SSG) 194.264]. The single-layer phase, 1H-\ch{NbSe2}, is symmetric under the $p\bar{6}m2$ layer group or equivalently under the $P\bar{6}m21'$ space group (SSG 187.210), without the $\hat{\vec{z}}$-directional translation.

The bulk 2H-\ch{NbSe2} undergoes a charge density wave (CDW) transition at $T = \SI{33}{\kelvin}$, forming a quasi-commensurate $3 \times 3 \times 1$ CDW order~\cite{VAL04,WEB11,RAH12,SOU13,ARG14,ARG15}, and exhibits superconductivity below $T_c = \SI{7.2}{\kelvin}$~\cite{REV65,SOT07}. Similarly, the single-layer phase, 1H-\ch{NbSe2}, develops a $3 \times 3$ CDW order at ($T_{\text{CDW}} \approx \SI{33}{\kelvin}$)~\cite{WIL75,CHA15,XI15,UGE16,LIA18,LIN20,DRE21} and superconductivity ($T_{c} \approx \SI{2}{\kelvin}$)~\cite{CAO15,UGE16,XI16,LIA18,ZHA19a,DRE21,WAN22a}. Notably, the primary CDW displacements are associated with the Nb atoms.

In this work, we focus exclusively on the single-layer 1H-\ch{NbSe2} in its pristine ({\it i.e.}{}, non-CDW) phase, which we refer to as \ch{NbSe2} for brevity. First, we describe its crystal structure and symmetries. Next, we analyze its \textit{ab initio}{} electronic spectra, emphasizing the topology of its bands and their connection to the system's CDD. Subsequently, we construct Wannier tight-binding models that accurately reproduce the quasi-flat band near the Fermi energy in \ch{NbSe2}. We also review a remarkably simple and faithful tight-binding model for \ch{NbSe2} proposed in Ref.~\cite{YU24} and employ it to demonstrate the obstructed atomic nature of the flat band.

By computing the correlators $\rho_{ij} \left( \Delta \vec{R} \right)$ of the quasi-flat band both analytically and through \textit{ab initio}{} calculations, and comparing them with those of a fictitious non-obstructed atomic limit, we identify the key distinguishing features of the obstructed atomic phase. Finally, in the following \cref{app:sec:experimental}, we outline the experimental determination of the correlation functions, which provides unambiguous evidence for the obstructed atomic nature of the quasi-flat band in \ch{NbSe2}.

\subsection{Crystal structure}\label{app:sec:NbSe2_ab_initio:crystal_struct}
\begin{figure}[t]
	\centering
	\includegraphics[width=0.7\textwidth]{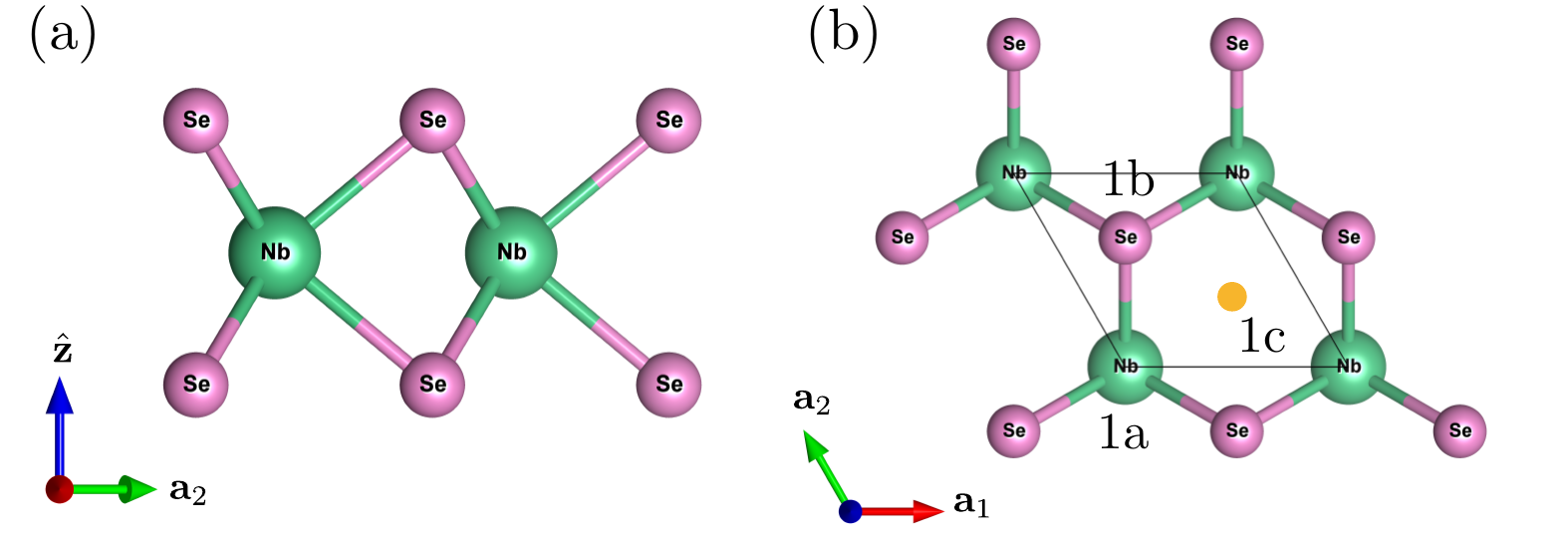}\subfloat{\label{app:fig:crystal_structure:a}}\subfloat{\label{app:fig:crystal_structure:b}}\caption{The crystalline structure of the single-layer hexagonal phase of \ch{NbSe2}. (a) is the side view and (b) is the top view. In (b), we mark the three $C_{3z}$-symmetric sites in the $xy$ plane by $1a$ (the Nb site), $1b$ (the Se site), and $1c$ (the empty site).}\label{app:fig:crystal_structure}\end{figure}

The crystal structure of \ch{NbSe2} is shown in \cref{app:fig:crystal_structure}, with the corresponding lattice vectors given in Cartesian coordinates by
\begin{equation}
	\label{app:eqn:unitcell_basis}
	\vec{a}_1=a \left( \frac{1}{2}, -\frac{\sqrt{3}}{2}, 0 \right), \quad \vec{a}_2=a \left( \frac{1}{2}, \frac{\sqrt{3}}{2}, 0 \right),
\end{equation}
where $a= \SI{3.47}{\angstrom}$ is the lattice constant. In the unit cell, the Nb atom occupies the $1a$ Wyckoff position located at $\left( 0,0,0 \right)$ with site symmetry group $\bar{6}m2$, while the two Se atoms occupy the $2f$ Wyckoff position at $\left( \frac{2}{3}, \frac{1}{3}, \pm z_{\text{Se}} \right)$, with $z_{\text{Se}}=\SI{1.67}{\angstrom}$, having $3m$ symmetry, as shown in \cref{app:fig:crystal_structure}. The first two coordinates of the Wyckoff positions are given in units of the lattice vectors, while the third coordinate is given in the $\hat{\vec{z}}$ direction.

Restricting attention to the $xy$ plane, there are three $C_{3z}$-symmetric Wyckoff positions in a unit cell, denoted by
\begin{equation}
	\label{app:eqn:c3z_pos_nbse2}
    1a: \left( 0,0,0 \right), \quad 1b: \left( \frac{1}{3}, \frac{2}{3}, 0 \right), \quad 1c: \left( \frac{2}{3}, \frac{1}{3}, 0 \right),
\end{equation}
with the first two coordinates expressed in terms of the lattice vectors from \cref{app:eqn:unitcell_basis}. The Nb atoms occupy the $1a$ position, with the two Se atoms in each unit cell positioned above and below the $1b$ site, while the $1c$ position remains vacant. In the next section, we show that the $1c$ position is, in fact, occupied by the Wannier center associated with the quasi-flat band near the Fermi level of \ch{NbSe2}. For clarity, as our focus will remain exclusively on the in-plane crystalline structure of \ch{NbSe2}, we will refer to the Se atoms as occupying the $1c$ position, even though they are technically positioned above and below the $1c$ site at the $2f$ Wyckoff positions.

\subsection{Electronic spectra of \ch{NbSe2}}\label{app:sec:NbSe2_ab_initio:dft_simulations}

\begin{figure}[t]
	\centering
	\includegraphics[width=\textwidth]{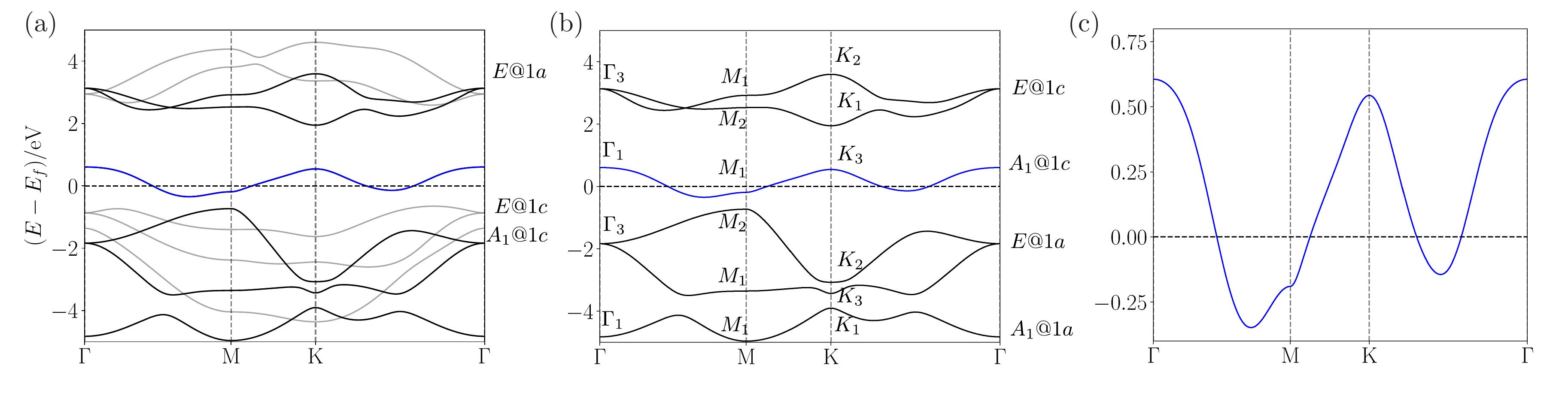}\subfloat{\label{app:fig:DFTbands:a}}\subfloat{\label{app:fig:DFTbands:b}}\subfloat{\label{app:fig:DFTbands:c}}\caption{The \textit{ab initio}{} band structure of monolayer \ch{NbSe2}. (a) illustrates the full band structure, with the bands of the mirror-even (mirror-odd) sectors shown in black and blue (gray). A focused view of the mirror-even bands is presented separately in (b) for clarity. (c) shows enlarged view of the quasi-flat band at the Fermi level. The band representations of the mirror-odd bands are indicated on the right-hand side of (a), while those of the mirror-even bands are explicitly shown in (b), along with the irreducible representations of each band. The quasi-flat band near the Fermi level corresponds to the band representation induced from $A_1 @ 1c$, representing an effective $s$ orbital located at the empty site $1c$.}
	\label{app:fig:DFTbands}
\end{figure}

\begin{figure}[t]
	\centering
	\includegraphics[width=\textwidth]{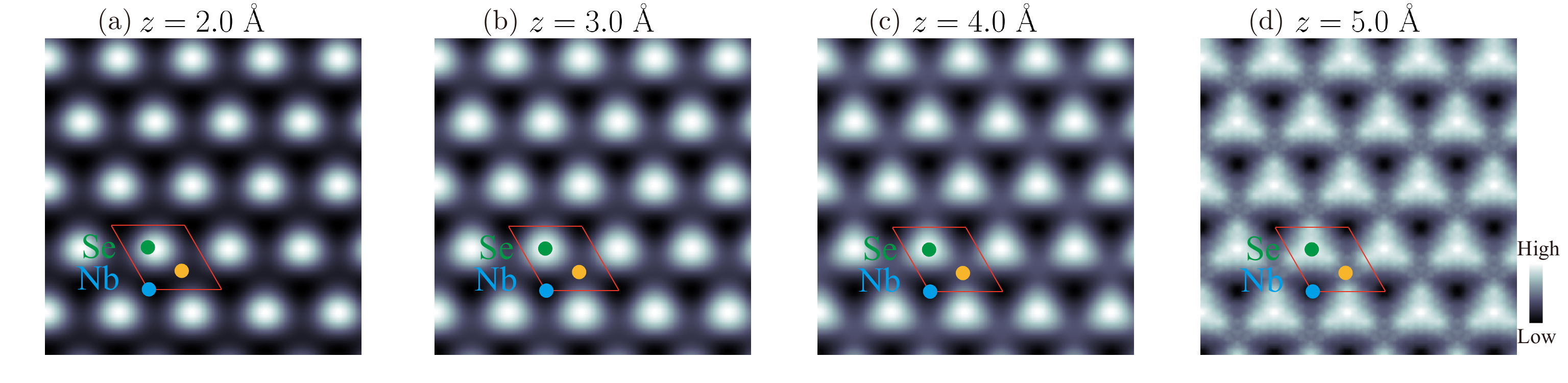}\subfloat{\label{app:fig:DFT-CDD-diffH:a}}\subfloat{\label{app:fig:DFT-CDD-diffH:b}}\subfloat{\label{app:fig:DFT-CDD-diffH:c}}\subfloat{\label{app:fig:DFT-CDD-diffH:d}}\caption{The \textit{ab initio}{} CDD of the quasi-flat band near the Fermi level in \ch{NbSe2} at various tip heights $z$, as indicated above each panel. The distance $z$, measured from the topmost layer of Se atoms, is specified in \si{\angstrom}. The unit cell (red diamond) and the three $C_{3z}$-symmetric sites (colored dots) are explicitly shown in each panel. Among these sites, the Se site $1b$ (green dots) is consistently the brightest (due to the Se atoms' proximity to the tip), followed by the empty site $1c$ (yellow dots), while the Nb site $1a$ (blue dots) appears the darkest.}
	\label{app:fig:DFT-CDD-diffH}
\end{figure}

\begin{figure}[t]
	\centering
	\includegraphics[width=0.8\textwidth]{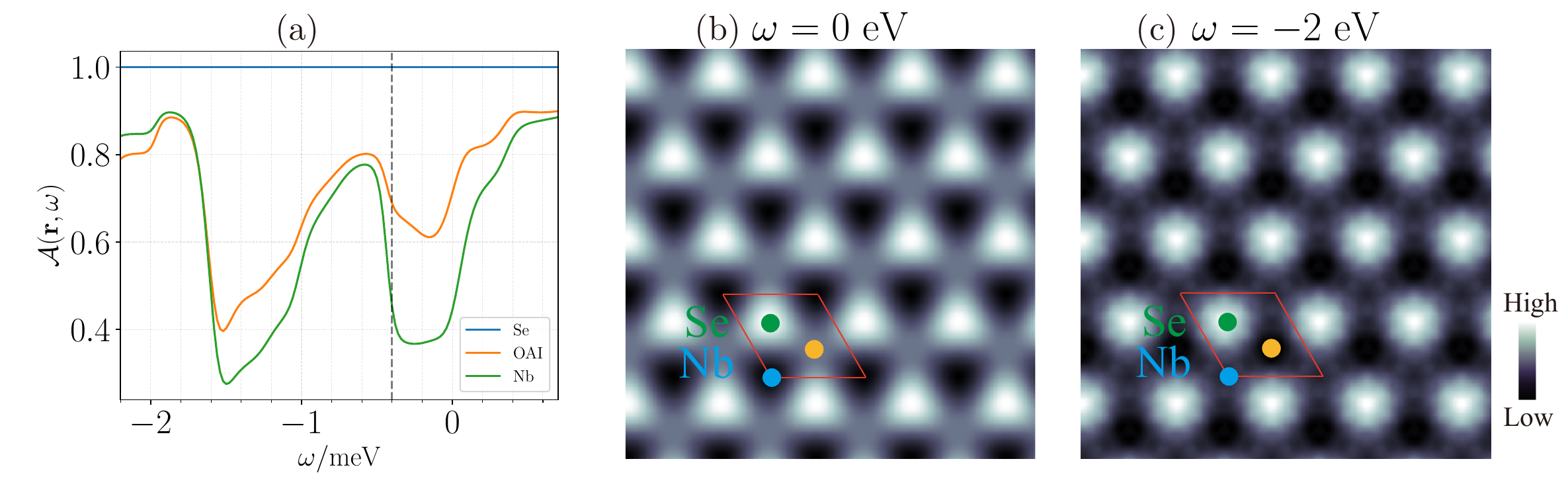}\subfloat{\label{app:fig:DFT-CDD-diffE:a}}\subfloat{\label{app:fig:DFT-CDD-diffE:b}}\subfloat{\label{app:fig:DFT-CDD-diffE:c}}\caption{The \textit{ab initio}{} spectral function $\mathcal{A} \left( \vec{r}, \omega \right)$ of \ch{NbSe2}. (a) presents the spectral function at various energies $\omega$ for the $1a$ (Nb), $1b$ (Se), and $1c$ (empty) sites. The dashed line at $\omega = \SI{-0.4}{\electronvolt}$ marks the gap between the quasi-flat band at the Fermi level and the lower bands. For a given energy $\omega$, the spectral function $\mathcal{A} \left( \vec{r}, \omega \right)$ is normalized such that its maximum value is 1 ({\it i.e.}{}, at the $1b$ site). Panels (b) and (c) show $\mathcal{A} \left( \vec{r}, \omega \right)$ at $\omega = \SI{0}{\electronvolt}$ and $\omega = \SI{-2}{\electronvolt}$, respectively. In (b), the intensity hierarchy of the spectral function is $1b \text{ (Se)} > 1c \text{ (empty)} > 1a \text{ (Nb)}$ (marked by the green, yellow, and blue circles, respectively), with a pronounced difference between the empty and Nb sites. In (c), the hierarchy of intensities is $1b \text{ (Se)} > 1a \text{ (Nb)} \gtrsim 1c \text{ (empty)}$, with only a slight difference between the empty and Nb site intensities. The tip distance is fixed at $z = \SI{4}{\angstrom}$ throughout the figure.}
	\label{app:fig:DFT-CDD-diffE}
\end{figure}

In this section, we analyze the \textit{ab initio} electronic band structures and simulated CDD of \ch{NbSe2}. The band structures presented in this work are computed using the Vienna Ab-initio Simulation Package (VASP)~\cite{KRE93, KRE93a, KRE94, KRE96a, KRE96} with the generalized gradient approximation of the Perdew-Burke-Ernzerhof (PBE) exchange-correlation functional~\cite{PER96}. An energy cutoff of $\SI{400}{\electronvolt}$ is employed. The maximally localized Wannier functions (MLWFs) are constructed using the symmetry-adapted Wannier functions~\cite{SAK13} in Wannier90~\cite{MAR12, MAR97b, PIZ20, SOU01c}, implemented in Quantum ESPRESSO~\cite{GIA17, GIA09}, with PAW-type pseudo-potentials and PBE functionals from PSlibrary 1.0.0~\cite{DAL14}. The spin-orbit coupling (SOC) effect is not included in this work. While SOC induces small spin splitting in the band structure, it does not alter the conclusions presented here.

The band structure of \ch{NbSe2} is shown in \cref{app:fig:DFTbands:a}. The 11 bands near the Fermi level originate from the five $d$ orbitals of Nb and the three $p$ orbitals of the two Se atoms. These bands are divided into two independent mirror $m_z$ sectors. Our analysis focuses on the mirror-even sector, shown in \cref{app:fig:DFTbands:b}, with the irreducible representations (IRREPs) at the high-symmetry momentum points being indicated in the plot.

A quasi-flat band near the Fermi level, with a bandwidth of approximately $\SI{1}{\electronvolt}$, is highlighted in \cref{app:fig:DFTbands:c}. Using topological quantum chemistry (TQC) and band representation (BR) theory~\cite{BRA17,CAN18a,ELC21, ZAK82}, we identify that this quasi-flat band originates from an effective $s$ orbital located at the unoccupied $1c$ Wyckoff position. It corresponds to the BR induced from $A_1@1c$, with IRREPs $\Gamma_1$, $M_1$, and $K_3$, and constitutes an obstructed atomic (OA) phase. For simplicity, the IRREPs of the mirror-even sector are calculated under the $P3m11'$ group (SSG 156.50), which governs the symmetry of this mirror sector. The BRs of the other bands are also marked in \cref{app:fig:DFTbands:a,app:fig:DFTbands:b}, where the three mirror-even occupied bands have Wannier centers at the Nb $1a$ position, while the three mirror-odd occupied bands are centered at the empty $1c$ position. Finally, it is worth mentioning that the abundance of obstructed bands in the system has led to the identification of many higher-order topological phases in transition metal dichalcogenides (TMDs) with crystalline structures similar to \ch{NbSe2}~\cite{ZEN21,JUN22,QIA22,SOD22}.

Next, we analyze the CDD of the quasi-flat band from \cref{app:fig:DFTbands:c}. In \cref{app:fig:DFT-CDD-diffH}, we present the CDD at tip distances $z$ in the range $2 \leq z/\si{\angstrom} \leq 5$, with $z$ denoting the distance between the ``tip'' ({\it i.e.}{}, the location at which the CDD of the quasi-flat band is probed) and the topmost Se layer. Across this range of heights, the Se site at $1b$ consistently exhibits the largest intensity, followed by the empty site at $1c$, while the Nb site at $1a$ is the least bright. At larger tip distances, the CDD becomes too weak and noisy for reliable analysis, and those results are excluded.

In \cref{app:fig:DFT-CDD-diffE}, we present the spectral function $\mathcal{A} \left( \vec{r}, \omega \right)$ of \ch{NbSe2} simulated at various energies $\omega$. By inspecting the Tersoff-Hamann expression for the tunneling current in \cref{app:eqn:final_TH_expression_current}, we note that $\mathcal{A} \left( \vec{r}, \omega \right)$ can be directly obtained from differential conductance ({\it i.e.}{}, $\dv{I}{V}$) measurements in STM experiments. A detailed comparison of the spectral function at the three $C_{3z}$-symmetric sites, $1a$ (Nb), $1b$ (Se), and $1c$ (empty), reveals the following trends as a function of tunneling energy $\omega$:
\begin{enumerate}
	\item The Se site consistently exhibits the largest differential conductance at all energies. This is expected since the Se atoms are closest to the STM tip, as shown in the crystal structure in \cref{app:fig:crystal_structure:a}.
	\item Within the energy range of the quasi-flat band, $-0.4 \leq \omega/ \si{\electronvolt} \leq 0.6$, the $1c$ (empty) site shows significantly higher differential conductance than the Nb site. This is consistent with the BR analysis discussed earlier, which places the Wannier center of the quasi-flat band at the empty $1c$ position.
	\item Near $\omega = \SI{-2}{\electronvolt}$, the simulated differential conductance at the $1a$ (Nb) and $1c$ (empty) sites is comparable, with the intensity at the $1b$ (Se) site being slightly larger. This can be attributed to the bands in this energy range being induced from Wannier orbitals with Wannier centers located at both the Nb and the empty sites.
	\item For $-1.5 \leq \omega/ \si{\electronvolt} \leq -1.0$, the $1c$ (empty) site surpasses the $1a$ (Nb) site in intensity. This is due to contributions from the mirror-odd sector, whose Wannier centers are located at the $1c$ site, as indicated in \cref{app:fig:DFTbands:a}.
\end{enumerate}

We will use the hierarchy of intensities among the three $C_{3z}$-symmetric sites, as described in \cref{app:sec:experimental:extract_CDD:fix_unit_cell} to identify the unit cell origin in the experimental data.

\subsection{Wannier tight-binding models and the correlation function}\label{app:sec:NbSe2_ab_initio:wanniers}

With the \textit{ab initio}{} band structure of \ch{NbSe2} at hand, we construct tight-binding models for the system using Wannier90. The models presented here are similar to those reported in the literature for other TMD monolayers that share the same crystalline structure as \ch{NbSe2}~\cite{LIU13a,CAP13,FAN15,YU24}.  Starting from different MLWF bases, we can obtain the following five tight-binding models:
\begin{itemize}
    \item An 11-band TB model that reproduces all 11 bands around the Fermi level in \cref{app:fig:DFTbands:a} can be constructed based on the MLWFs obtained from the five $d$ orbitals of Nb and the three $p$ orbitals of the two Se atoms.
	
    \item From the 11-band model, we can extract a six-band model by restricting to the mirror-even sector. The MLWFs of this model are obtained from the three mirror-symmetric $d$ orbitals of Nb and the mirror-even combination of Se $p$ orbitals
	\begin{equation}
		\left( 
			d_{z^2}, d_{x^2-y^2}, d_{xy}, 
			\frac{1}{\sqrt{2}} \left( p_z^{\text{Se}_1} - p_z^{\text{Se}_2} \right),
			\frac{1}{\sqrt{2}} \left( p_x^{\text{Se}_1} + p_x^{\text{Se}_2} \right),
			\frac{1}{\sqrt{2}} \left( p_y^{\text{Se}_1} + p_y^{\text{Se}_2} \right) 
		\right).
	\end{equation}
    This model reproduces the bands from \cref{app:fig:DFTbands:b}. 
	
	\item A simplified five-band tight-binding model can be constructed starting from the five $d$ orbitals of Nb, which can reproduce the top five bands shown in \cref{app:fig:DFTbands:a}. The MLWFs of this model also have additional contributions from the $p$ orbitals of Se.

	\item A further simplified three-band model is obtained by restricting the five-orbital model to the mirror-even sector. This model reproduces the top three bands in \cref{app:fig:DFTbands:b}. The MLWFs of this model are built starting from the $\left( d_{z^2}, d_{x^2-y^2}, d_{xy} \right)$ orbitals of Nb. After optimizing the wave functions in Wannier90, the final MLWFs also contain weights from the $p$ orbitals of Se.
		
    \item Finally, a one-band model that reproduces the quasi-flat band near the Fermi energy from \cref{app:fig:DFTbands:c} can be constructed. The MLWFs of this model are obtained in Wannier90 by starting from an $s$ orbital located at the $1c$ $C_{3z}$-symmetric site.
\end{itemize}

\subsubsection{The three-orbital model of \ch{NbSe2}}\label{app:sec:NbSe2_ab_initio:wanniers:three-orb}

\begin{figure}[t]
	\centering
	\includegraphics[width=0.5\textwidth]{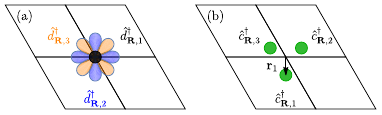}\subfloat{\label{app:fig:orb_trafo:a}}\subfloat{\label{app:fig:orb_trafo:b}}\caption{Different bases for the three-orbital model of \ch{NbSe2}. The three-orbital model of \ch{NbSe2} can be described in terms of $d$-like orbitals located at the $1a$ position, as shown in (a). The three operators $\hat{d}^\dagger_{\vec{R},1}$, $\hat{d}^\dagger_{\vec{R},2}$, and $\hat{d}^\dagger_{\vec{R},3}$ correspond respectively to the MLWFs obtained from the $d_{z^2}$, $d_{x^2-y^2}$, and $d_{xy}$ orbitals of Nb. These $d$-like orbitals can be recombined into $s$-like orbitals located \emph{around} the $1a$ position, as depicted in (b). The Wannier centers of these $s$-like orbitals are slightly displaced from the unit cell origin (the distance of their Wannier center from the $1a$ position is exaggerated for clarity).}
	\label{app:fig:orb_trafo}
\end{figure}

Throughout this work, we focus exclusively on the three-orbital model of \ch{NbSe2}. Specifically, the MLWFs of this model are obtained starting from the mirror-symmetric $d$-orbitals of Nb, which, after optimization, also contain weights from the $p$ orbitals of Se. The three $d$-orbital MLWFs are located at the origin of the unit cell ($\vec{r}^{d}_i = \vec{0}$ for $i = 1, 2, 3$). In a notation similar to \cref{app:sec:correlator_general:notation}, the creation operators for the $d_{z^2}$, $d_{x^2-y^2}$, and $d_{xy}$ orbitals in the unit cell $\vec{R}$ are denoted, respectively, by $\hat{d}^\dagger_{\vec{R},1}$, $\hat{d}^\dagger_{\vec{R},2}$, and $\hat{d}^\dagger_{\vec{R},3}$. 

In the notation of \cref{app:eqn:tb_general_notation_symmetries}, the matrix representations of the $C_{3z}$ and $m_x$ symmetries of the system are given by
\begin{equation}
	D_{3d} \left( C_{3z} \right) = \begin{pmatrix}
		1 & 0 & 0 \\
		0 & -\frac{1}{2} & -\frac{\sqrt{3}}{2} \\
		0 & \frac{\sqrt{3}}{2} & -\frac{1}{2}
	\end{pmatrix}, \quad
	D_{3d} \left( m_x \right) = \begin{pmatrix}
		1 & 0 & 0 \\
		0 & 1 & 0 \\
		0 & 0 & -1
	\end{pmatrix}.
\end{equation}

A more intuitive, but equivalent, basis~\cite{YU24} can be obtained by recombining the three $d$-like orbitals of the model into effective $s$-like orbitals also located at the origin of the unit cell. To achieve this, we introduce the following unitary transformation
\begin{equation}
	S = \begin{pmatrix}
		\frac{1}{\sqrt{3}} & \sqrt{\frac{2}{3}} & 0 \\
		\frac{1}{\sqrt{3}} & -\frac{1}{\sqrt{6}} & -\frac{1}{\sqrt{2}} \\
		\frac{1}{\sqrt{3}} & -\frac{1}{\sqrt{6}} & \frac{1}{\sqrt{2}} \\
	\end{pmatrix}.
\end{equation}
The three $s$-like orbitals, denoted by $\hat{c}^\dagger_{\vec{R},i}$ for $i=1,2,3$, are defined as
\begin{equation}
	\begin{pmatrix}
		\hat{c}^\dagger_{\vec{R},1} \\ \hat{c}^\dagger_{\vec{R},2} \\ \hat{c}^\dagger_{\vec{R},3}
	\end{pmatrix} = 
	S \begin{pmatrix}
		\hat{d}^\dagger_{\vec{R},1} \\ \hat{d}^\dagger_{\vec{R},2} \\ \hat{d}^\dagger_{\vec{R},3}
	\end{pmatrix}.
\end{equation}
The symmetry representation matrices in this $s$-like basis are given by
\begin{equation}
	\label{app:eqn:sym_rep_3s_basis}
	D_{3s} \left( C_{3z} \right) = S D_{3d} \left( C_{3z} \right) S^{-1} = \begin{pmatrix}
		0 & 1 & 0 \\
		0 & 0 & 1 \\
		1 & 0 & 0
	\end{pmatrix}, \quad
	D_{3s} \left( m_x \right) = S D_{3d} \left( m_x \right) S^{-1} =  \begin{pmatrix}
		1 & 0 & 0 \\
		0 & 0 & 1 \\
		0 & 1 & 0
	\end{pmatrix}.
\end{equation}
The three $d$-like and $s$-like orbitals of the model are shown schematically in \cref{app:fig:orb_trafo}. The $d$-like orbitals are located exactly at the $1a$ Wyckoff position, while the $s$-like orbitals are arranged in a $C_{3z}$-symmetric configuration \emph{around} the $1a$ Wyckoff position. In particular, $\hat{c}^\dagger_{\vec{R},1}$ remains invariant under $m_x$ symmetry, while $\hat{c}^\dagger_{\vec{R},2}$ and $\hat{c}^\dagger_{\vec{R},3}$ are exchanged by this operation, as dictated by the representation matrix in \cref{app:eqn:sym_rep_3s_basis}. Additionally, the $s$-like orbitals are slightly displaced \emph{away} from the $1a$ Wyckoff position. This displacement is small ($\abs{\vec{r}_i} \ll \abs{\vec{a}_1}$) and unimportant for the current discussion ({\it i.e.}{}, it is \emph{not} fixed by symmetry). 

In the $s$-orbital basis, the obstructed nature of the quasi-flat band of \ch{NbSe2} becomes immediately apparent~\cite{YU24}. In particular, the \textit{ab initio}{} tight-binding Hamiltonian can be expressed as~\cite{YU24}
\begin{equation}
	\label{app:eqn:NbSe2_simple_hamiltonian}
	\hat{H} = \sum_{\vec{R}} \hat{H}_{\text{Loc}} \left( \vec{R} \right) + \dots, 
\end{equation}
where the ``local'' Hamiltonian term is given by
\begin{equation}
	\label{app:eqn:NbSe2_simple_loc_hamiltonian}
	\hat{H}_{\text{Loc}} \left( \vec{R} \right) = \begin{pmatrix}
		\hat{c}^\dagger_{\vec{R} + \vec{a}_1 + \vec{a}_2,1} &
		\hat{c}^\dagger_{\vec{R},2} &
		\hat{c}^\dagger_{\vec{R} + \vec{a}_1,3} \\
	\end{pmatrix} 
	\begin{pmatrix}
		E_0 & t & t \\
		t & E_0 & t \\
		t & t & E_0 \\
	\end{pmatrix}
	\begin{pmatrix}
		\hat{c}_{\vec{R} + \vec{a}_1 + \vec{a}_2,1} \\
		\hat{c}_{\vec{R},2} \\
		\hat{c}_{\vec{R} + \vec{a}_1,3}
	\end{pmatrix} .
\end{equation}
The hopping amplitude and onsite terms are given by $t = \SI{-0.7840}{\electronvolt}$ and $E_0 = \SI{1.733}{\electronvolt}$, and the dots ``$\dots$'' in \cref{app:eqn:NbSe2_simple_hamiltonian} denote other terms with hopping amplitudes no larger than $\SI{0.3}{\electronvolt}$~\cite{YU24}, which will be ignored for now. Because the different ``local'' Hamiltonian terms commute, $\commutator{H_{\text{Loc}} \left( \vec{R} \right)}{H_{\text{Loc}} \left( \vec{R}' \right)} = 0$ (for $\vec{R} \neq \vec{R}'$), the eigenstates of $\hat{H}$ can be straightforwardly determined in real space (when the terms represented by the dots are ignored). Specifically, the system features three flat bands with energies $\epsilon_1 = E_0 + 2t$, $\epsilon_2 = E_0 - t$, and $\epsilon_3 = E_0 - t$, whose real-space operators exhibit \emph{compact support} and are explicitly given by
\begin{align}
	\hat{\gamma}^\dagger_{\vec{R},1} &= \frac{1}{\sqrt{3}} \left( \hat{c}^\dagger_{\vec{R} + \vec{a}_1 + \vec{a}_2,1} + \hat{c}^\dagger_{\vec{R},2} + \hat{c}^\dagger_{\vec{R} + \vec{a}_1,3} \right), \label{app:eqn:OAI_wave_function_simple}\\
	\hat{\gamma}^\dagger_{\vec{R},2} &= \frac{1}{\sqrt{2}} \left( - \hat{c}^\dagger_{\vec{R} + \vec{a}_1 + \vec{a}_2,1} + \hat{c}^\dagger_{\vec{R} + \vec{a}_1,3} \right), \\
	\hat{\gamma}^\dagger_{\vec{R},3} &= \frac{1}{\sqrt{6}} \left( - \hat{c}^\dagger_{\vec{R} + \vec{a}_1 + \vec{a}_2,1} + 2 \hat{c}^\dagger_{\vec{R},2} - \hat{c}^\dagger_{\vec{R} + \vec{a}_1,3} \right). 
\end{align}
In the full model, $\hat{\gamma}^\dagger_{\vec{R},1}$ corresponds to the quasi-flat band near the Fermi energy, while $\hat{\gamma}^\dagger_{\vec{R},2}$ and $\hat{\gamma}^\dagger_{\vec{R},3}$ represent the topmost mirror-even valence bands of \ch{NbSe2}. The dispersion of these bands in the full \textit{ab initio}{} model arises from the additional hopping terms in the tight-binding Hamiltonian from \cref{app:eqn:NbSe2_simple_hamiltonian}, which are ignored in this simplified model.

The obstructed atomic nature of $\hat{\gamma}^\dagger_{\vec{R},1}$ can be immediately inferred by noting that its center of charge is located at $\vec{r}^{\gamma}_1 = \frac{1}{3} \left( \vec{R} +  \vec{a}_1 + \vec{a}_2 + \vec{R} + \vec{R} + \vec{a}_1  \right) - \vec{R} = \left(\frac{2}{3} \vec{a}_1 + \frac{1}{3} \vec{a}_2 \right)$, {\it i.e.}{}, at the $1c$ Wyckoff position. Moreover, under the symmetries of the problem, $\hat{\gamma}^\dagger_{\vec{R},1}$ transforms as an $s$ orbital located at the $1c$ Wyckoff position. Using the representation matrices in \cref{app:eqn:sym_rep_3s_basis}, it is straightforward to show that
\begin{equation}
	C_{3z} \hat{\gamma}^\dagger_{\vec{R},1} C^{-1}_{3z} = \hat{\gamma}^\dagger_{C_{3z} \left( \vec{R} + \vec{r}^{\gamma}_1 \right) - \vec{r}^{\gamma}_1,1}, \quad
	m_x \hat{\gamma}^\dagger_{\vec{R},1} m^{-1}_x = \hat{\gamma}^\dagger_{\vec{R},1}.
\end{equation}
This confirms that $\hat{\gamma}^\dagger_{\vec{R},1}$ indeed corresponds to an obstructed phase, as its Wannier center is located at the empty site. Finally, reintroducing the neglected hopping terms in \cref{app:eqn:NbSe2_simple_hamiltonian} does not close the gap between the upper flat band and the other two bands, indicating that the flat band in the \textit{ab initio}{} model is also obstructed, consistent with the simplified model. Remarkably, the overlap between the simplified wave function in \cref{app:eqn:OAI_wave_function_simple} and the \textit{ab initio}{} wave functions is as high as $94\%$~\cite{YU24}.

\subsubsection{The correlation function of the quasi-flat band of \ch{NbSe2}}\label{app:sec:NbSe2_ab_initio:wanniers:order_parameters}

\begin{table}[t]
	\centering
	\begin{tabular}{|c|c|c|c|}
		\hline
		Matrix elements & Correlator & Value & Multiplicity \\\hline
		$\rho_{11}([0,0]), \rho_{22}([0,0]), \rho_{33}([0,0])$ & $O_0$ & 0.333 & 3 \\\hline
		\makecell{$\rho_{12}([0,0]), \rho_{13}([0,0]), \rho_{23}([0,0])$, \\ 
			$\rho_{21}([0,0]), \rho_{31}([0,0]), \rho_{32}([0,0])$} 
		& $O_1$ & 0.064 & 6 \\\hline
		\makecell{$\rho_{23}([1,0]), \rho_{21}([1,1]), \rho_{13}([0,-1])$, \\
			$\rho_{32}([-1,0]), \rho_{12}([-1,-1]), \rho_{31}([0,1])$}
		& $O_2$ & 0.306 & 6 \\\hline
		$\rho_{11}([1,0]),\dots$ & $O_3$ & -0.075 & 6 \\\hline
		$\rho_{22}([1,0]),\dots$ & $O_4$ & 0.038 & 12 \\\hline
		$\rho_{12}([1,0]),\dots$ & $O_5$ & -0.011 & 12 \\\hline
		$\rho_{13}([1,0]),\dots$ & $O_6$ & -0.014 & 12 \\\hline
		$\rho_{32}([1,0]),\dots$ & $O_7$ & -0.013 & 6 \\
		\hline
	\end{tabular}
	\caption{The indepdent correlators of the \ch{NbSe2} quasi-flat band within the three-orbital model with $s$-like orbital. For each independent correlator, the first column lists the matrix elements that are equal to it, while the third column list its value, as obtained in \textit{ab initio}{} compoutations. The last column lists the multiplicity of the correlator, which is defined as the number of distinct matrix elements equal to the corresponding correlator. For denoting the matrix elements, we use a shorthand notation in which $\rho_{ij} ([n,m]) \equiv \rho_{ij} \left(n \vec{a}_1 + m \vec{a}_2 \right)$. For $O_{1-3}$ we list all matrix elements equal to them, while for $O_{3-5}$, we only list a representation matrix element $\rho_{ij} \left( \Delta \vec{R} \right)$ equal to the corresponding correlator. It can be seen that the onsite correlator $O_0$ and inter-cell NN correlator $O_2$ have dominant values about $\frac{1}{3}$, while all other correlators [including the longer distance matrix elements of $\rho_{ij} \left( \Delta \vec{R} \right)$ which are not shown here] take very small values. }
	\label{app:tab:OPs-3band-model}
\end{table}

The fact that the quasi-flat band of \ch{NbSe2} constitutes an OA phase has significant implications for the correlation function $\rho \left( \Delta \vec{R} \right)$ associated with the band, which can also be experimentally extracted through STM measurements. By leveraging the symmetries of the system (namely time-reversal, $C_{3z}$, and $m_x$ symmetries), we constrain the correlation function using \cref{app:eqn:trafo_ord_param} and subsequently parameterize it into independent components. The parameterization of the correlation function in the $s$-orbital model is summarized in \cref{app:tab:OPs-3band-model}, along with the values extracted from \textit{ab initio}{} simulations.

Each independent component of $\rho \left( \Delta \vec{R} \right)$ is referred to as a correlator. As shown in \cref{app:tab:OPs-3band-model}, $O_0$ corresponds to the onsite correlator, $O_1$ represents the intra-cell nearest-neighbor (NN) correlator, and $O_2$ denotes the inter-cell NN correlator that corresponds to the orbitals that are strongly hybridized in the obstructed phase. The remaining correlators, $O_{3-5}$, correspond to other inter-cell NN correlators. It can be observed that the onsite correlator $O_0$ and inter-cell NN correlator $O_2$ have dominant values of approximately $\frac{1}{3}$, while all other correlators exhibit small values. In what follows, we provide an analytical explanation of these results using the simplified wave function of the quasi-flat band in \ch{NbSe2}.

\subsubsection{The atomic and obstructed atomic limits of \ch{NbSe2}}\label{app:sec:NbSe2_ab_initio:wanniers:limits}

We now use the simplified model of the quasi-flat band in \ch{NbSe2} to provide an intuitive explanation of the correlator values presented in \cref{app:tab:OPs-3band-model} and their connection to the OA nature of the band. Due to the compact support of the flat band eigenstates, the correlation function in real space can be directly and exactly computed from its definition in \cref{app:eqn:def_order_parameter_matrix_r_space}, taking $\ket{\Phi} = \left( \prod_{\vec{R}} \hat{\gamma}^\dagger_{\vec{R},1} \right) \ket{0}$. Specifically, the correlation function is given by
\begin{equation}
	\label{app:eqn:computing_OP_for_NbSe2_OAI}
	\rho_{ij} \left( \Delta \vec{R} \right) = \mel**{\Phi}{\hat{c}^\dagger_{\vec{0},i} \hat{c}_{\Delta \vec{R},j}}{\Phi} = \sum_{\vec{R}} \bra{0} \hat{\gamma}_{\vec{R},1} \hat{c}^\dagger_{\vec{0},i} \hat{c}_{\Delta \vec{R},j} \hat{\gamma}^\dagger_{\vec{R},1} \ket{0} = \sum_{\vec{R} \in \left\lbrace \vec{0}, \vec{a}_1, \vec{a}_1 + \vec{a}_2 \right\rbrace} \bra{0} \hat{\gamma}_{-\vec{R},1} \hat{c}^\dagger_{\vec{0},i} \hat{c}_{\Delta \vec{R},j} \hat{\gamma}^\dagger_{-\vec{R},1} \ket{0}.
\end{equation}
It is straightforward to see that only two components of the correlation function are nonzero. Using the notation of \cref{app:tab:OPs-3band-model}, we find that $O_0 = O_2 = \frac{1}{3}$, while all other correlators vanish. This immediately accounts for the correlator values obtained from \textit{ab initio}{} calculations. The small, yet nonzero, values of $O_i$ for $i \neq 0,2$ in \cref{app:tab:OPs-3band-model} arise from the small hopping terms neglected in \cref{app:eqn:NbSe2_simple_hamiltonian}.

To better understand the relation between these correlator values and the OA phase, we consider a \emph{fictitious} unobstructed atomic (UA) phase of \ch{NbSe2}. Analogous to \cref{app:eqn:NbSe2_simple_hamiltonian,app:eqn:NbSe2_simple_loc_hamiltonian}, we define a Hamiltonian for this fictitious UA limit as
\begin{equation}
	\label{app:eqn:NbSe2_simple_AI_hamiltonian}
	\hat{H}' = \sum_{\vec{R}} \hat{H}'_{\text{Loc}} \left( \vec{R} \right), 
\end{equation}
where the ``local'' Hamiltonian term is given by
\begin{equation}
	\label{app:eqn:NbSe2_simple_AI_loc_hamiltonian}
	\hat{H}'_{\text{Loc}} \left( \vec{R} \right) = \begin{pmatrix}
		\hat{c}^\dagger_{\vec{R},3} &
		\hat{c}^\dagger_{\vec{R},2} &
		\hat{c}^\dagger_{\vec{R},1} \\
	\end{pmatrix} 
	\begin{pmatrix}
		E_0 & t & t \\
		t & E_0 & t \\
		t & t & E_0 \\
	\end{pmatrix}
	\begin{pmatrix}
		\hat{c}_{\vec{R},1} \\
		\hat{c}_{\vec{R},2} \\
		\hat{c}_{\vec{R},3}
	\end{pmatrix}. 
\end{equation}
This Hamiltonian is similar to the one in \cref{app:eqn:NbSe2_simple_hamiltonian,app:eqn:NbSe2_simple_loc_hamiltonian}, but instead of hybridizing $s$-like orbitals across unit cells, it only hybridizes them within each unit cell. The energy spectrum of $\hat{H}'$ is identical to that of $\hat{H}$; however, the eigenstates are now given by
\begin{align}
	\hat{\gamma'}^\dagger_{\vec{R},1} &= \frac{1}{\sqrt{3}} \left( \hat{c}^\dagger_{\vec{R},1} + \hat{c}^\dagger_{\vec{R},2} + \hat{c}^\dagger_{\vec{R},3} \right), \label{app:eqn:OAI_wave_function_simple_fake}\\
	\hat{\gamma'}^\dagger_{\vec{R},2} &= \frac{1}{\sqrt{2}} \left( - \hat{c}^\dagger_{\vec{R},1} + \hat{c}^\dagger_{\vec{R},3} \right), \\
	\hat{\gamma'}^\dagger_{\vec{R},3} &= \frac{1}{\sqrt{6}} \left( - \hat{c}^\dagger_{\vec{R},1} + 2 \hat{c}^\dagger_{\vec{R},2} - \hat{c}^\dagger_{\vec{R},3} \right). 
\end{align}
In this fictitious UA phase, the Wannier center of $\hat{\gamma'}^\dagger_{\vec{R},1}$ corresponds to the $1a$ Wyckoff position. This is evident from the symmetry transformations
\begin{equation}
	C_{3z} \hat{\gamma'}^\dagger_{\vec{R},1} C^{-1}_{3z} = \hat{\gamma'}^\dagger_{\vec{R},1}, \quad
	m_x \hat{\gamma'}^\dagger_{\vec{R},1} m^{-1}_x = \hat{\gamma'}^\dagger_{\vec{R},1}.
\end{equation}
In fact, $\hat{\gamma'}^\dagger_{\vec{R},1}$ is nothing but the $\hat{d}^\dagger_{\vec{R},1}$ fermion, which corresponds to the $d_{z^2}$ orbital of Nb. In this fictitious UA limit, we can trivially compute the correlators of the topmost band
\begin{equation}
	\label{app:eqn:computing_OP_for_NbSe2_AI}
	\rho'_{ij} \left( \Delta \vec{R} \right) = \mel**{\Phi'}{\hat{c}^\dagger_{\vec{0},i} \hat{c}_{\Delta \vec{R},j}}{\Phi'} = \sum_{\vec{R}} \bra{0} \hat{\gamma'}_{\vec{R},1} \hat{c}^\dagger_{\vec{0},i} \hat{c}_{\Delta \vec{R},j} \hat{\gamma'}^\dagger_{\vec{R},1} \ket{0} = \bra{0} \hat{\gamma'}_{\vec{0},1} \hat{c}^\dagger_{\vec{0},i} \hat{c}_{\Delta \vec{R},j} \hat{\gamma'}^\dagger_{\vec{0},1} \ket{0},
\end{equation}
where $\ket{\Phi'} = \left( \prod_{\vec{R}} \hat{\gamma'}^\dagger_{\vec{R},1} \right) \ket{0}$. Unlike \cref{app:eqn:computing_OP_for_NbSe2_OAI}, we now find $O_0 = O_1 = \frac{1}{3}$, while all other correlators from \cref{app:tab:OPs-3band-model} are zero.

\begin{figure}[t]
	\centering
	\includegraphics[width=0.7\textwidth]{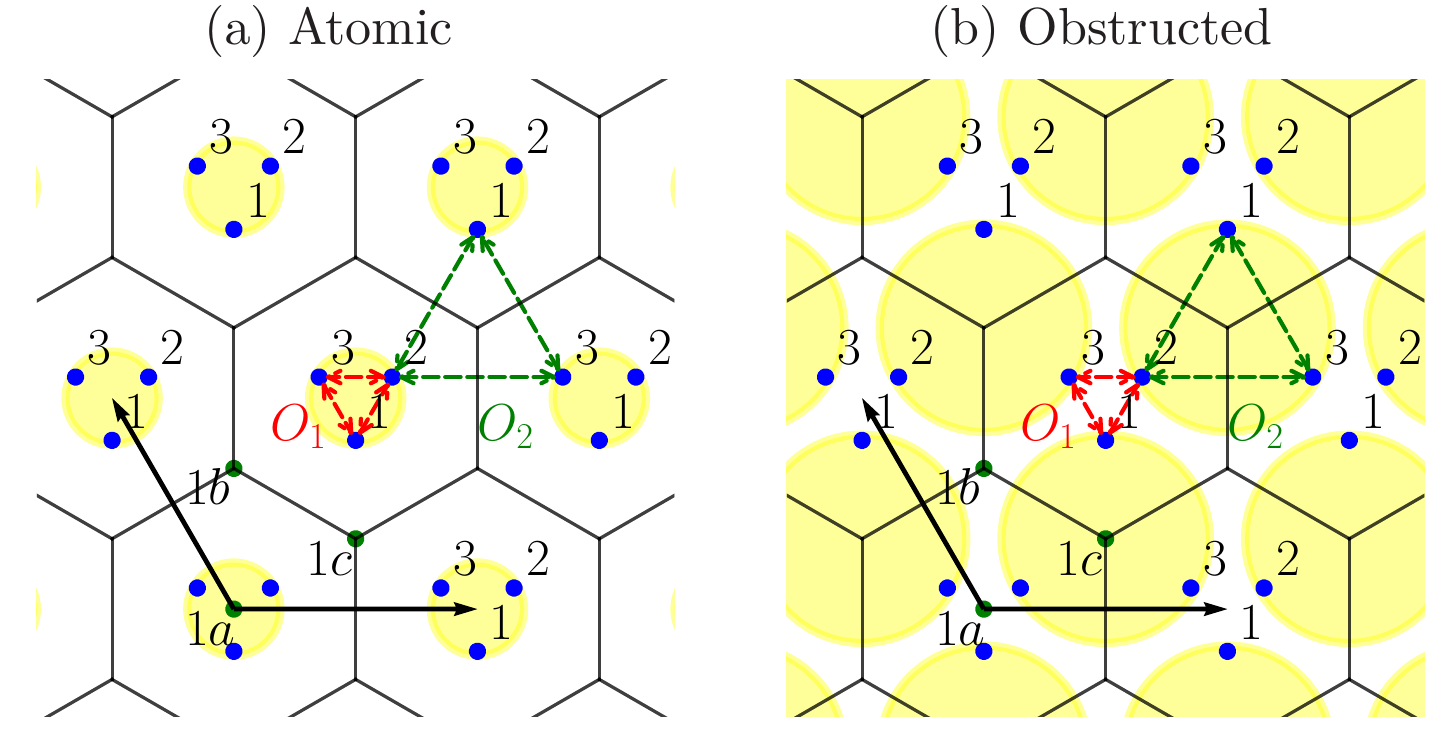}\subfloat{\label{app:fig:OAI_vs_AI:a}}\subfloat{\label{app:fig:OAI_vs_AI:b}}\caption{Illustrations of the unobstructed atomic (UA) and obstructed atomic (OA) phases and their dominant correlators. In each unit cell, the three $s$-like orbitals are represented by numbered blue circles, corresponding to the $\hat{c}^\dagger_{\vec{R},i}$ operators (for $1 \leq i \leq 3$). The $O_1$ and $O_2$ represent the correlators between neighboring Wannier orbitals, as indicated by the respective red arrows. Panel (a) depicts the fictitious UA phase of \ch{NbSe2}, where the quasi-flat band near the Fermi energy arises from the strong hybridization of the $s$-like orbitals surrounding the origin. This phase features $O_1 = \frac{1}{3}$ and $O_2 = 0$. In contrast, panel (b) illustrates the correct OA limit, where the $s$-like orbitals surrounding the $1c$ Wyckoff position strongly hybridize, resulting in $O_1 = 0$ and $O_2 = \frac{1}{3}$. In both cases, the strongly hybridized groups of orbitals are denoted by yellow circles. In the idealized OA and UA limits, the equal superposition of orbitals within the yellow circles corresponds to the $\hat{\gamma}^\dagger_{\vec{R},1}$ and $\hat{\gamma'}^\dagger_{\vec{R},1}$ Wannier orbitals of the quasi-flat bands, as defined, respectively, in \cref{app:eqn:OAI_wave_function_simple,app:eqn:OAI_wave_function_simple_fake}.}
	\label{app:fig:OAI_vs_AI}
\end{figure}
The difference between the fictitious UA and true OA phases of \ch{NbSe2} is illustrated schematically in \cref{app:fig:OAI_vs_AI}. In the atomic limit shown in \cref{app:fig:OAI_vs_AI:a}, the $s$-like orbitals around the unit cell origin strongly hybridize, resulting in a large value for the $O_1$ correlator. In contrast, in the OA phase depicted in \cref{app:fig:OAI_vs_AI:b}, the $s$-like orbitals around the $1c$ Wyckoff position strongly hybridize, leading to a large value of the $O_2$ correlator. In both cases (and more generally), the $O_0$ correlator is fixed to $O_0 = \frac{1}{3}$ by the normalization condition in \cref{app:eqn:normalization_rho_linear}.

\section{Extracting the correlation functions in \ch{NbSe2}}\label{app:sec:experimental}

In this \siSection{}, we demonstrate how the methodology outlined in \cref{app:sec:correlator_general} can be applied to extract orbital correlation functions in \ch{NbSe2}, thereby establishing the OA nature of its Fermi-level band. To achieve this, we perform constant-height tunneling current measurements at positive and negative bias voltages, $V_{\pm}$, which encompass the quasi-flat band at the Fermi level in \ch{NbSe2}. As explained in \cref{app:sec:theory_stm}, this approach enables direct measurement of the band’s CDD. We first describe the process for extracting the CDD Fourier transformation $A \left( \vec{Q}, z \right)$ from the experimental data. Next, we explain how this Fourier transformation is fitted to obtain the corresponding orbital correlation functions. Finally, we present experimental results from various regions and with different STM tips, all of which unequivocally confirm the OA nature of the quasi-flat band in \ch{NbSe2}.

\subsection{Extracting the CDD Fourier transformation $A \left( \vec{Q}, z \right)$ from experimental data}\label{app:sec:experimental:extract_CDD}

As described in the main text, we measure the tunneling current of \ch{NbSe2} at constant height. According to the Tersoff-Hamann approximation discussed in \cref{app:sec:theory_stm:TH_approx:approximation:isolated_band}, this tunneling current is proportional to the CDD of the sample arising from states with energies between the Fermi level and the bias voltage. To determine the CDD of the entire \ch{NbSe2} quasi-flat band, which disperses across the Fermi energy as shown in \cref{app:fig:DFTbands}, we conduct two measurements in close succession at bias voltages bracketing the band. Specifically, we measure the CDD $A^{\pm}_{\text{meas}} \left( \vec{r} \right)$ at bias voltages $V_{\pm}$, where $\abs{e}V_+$ and $\abs{e}V_-$ correspond to the top and bottom of the quasi-flat band, respectively. Since our interest lies in the distribution of the charge density rather than its absolute value, we disregard the proportionality constant between the tunneling current and the integrated spectral function, as described in the Tersoff-Hamann approximation in \cref{app:eqn:final_TH_expression_current}. However, because the two measurements are taken in rapid succession, we assume this proportionality constant (and the sample-tip distance) remains consistent, enabling us to sum the two measurements and obtain the CDD for the \emph{entire} quasi-flat band.

\subsubsection{Fourier transforming the experimental data}\label{app:sec:experimental:extract_CDD:fft}

For both positive and negative biases, the experiment measures the CDD of the sample over a finite square region of approximately $L = \SI{3}{\nano\meter}$. Ignoring other potential sources of error, the measured CDD can be assumed to correspond to the true CDD, subject to an in-plane affine transformation and multiplication by a window function
\begin{equation}
	\label{app:eqn:measured_ldos}
	A^{\pm}_{\text{meas}} \left( \vec{r} \right) = A^{\pm} \left( T_{\pm} \vec{r}_{\parallel} + \vec{r}^{\pm}_{\parallel,0} + z \hat{\vec{z}} \right) w \left(  \frac{\vec{r}_{\parallel}}{L} \right),
\end{equation}
where $T_{\pm}$ represents a linear transformation close to a rotation, $\vec{r}^{\pm}_{\parallel,0}$ denotes the (unspecified) in-plane origin of the experimental data, and $w \left( \frac{\vec{r}_{\parallel}}{L} \right)$ is a window function that accounts for the finite spatial extent of the measurement
\begin{equation}
	w \left( \frac{\vec{r}_{\parallel}}{L} \right) = \begin{cases}
		1 & \qq{if}   \frac{\abs{x}}{L}, \frac{\abs{y}}{L} \leq \frac{1}{2} \\ 
		0 & \qq{otherwise}
	\end{cases}.
\end{equation}
We remind the reader that the vector $\vec{r}$ is decomposed into in-plane and out-of-plane components as $\vec{r} = \vec{r}_{\parallel} + z \hat{\vec{z}} = x \hat{\vec{x}} + y \hat{\vec{y}} + z \hat{\vec{z}}$. The CDD at positive and negative biases are measured in succession, so the affine transformation associated with each measurement is slightly different (due to sample drift), hence the $\pm$ subscript of $T_{\pm}$ and $\pm$ superscript of $\vec{r}^{\pm}_{\parallel,0}$. However, we assume that height drift during a single measurement (which would make $z$ dependent on $\vec{r}_{\parallel}$) or in between the positive- and negative-bias measurements, as well as other experimental errors are negligible. Additionally, we note that the exact sample-tip distance $z$ is not determined in the experiment.

\begin{figure}[t]
	\centering
	\includegraphics[width=\textwidth]{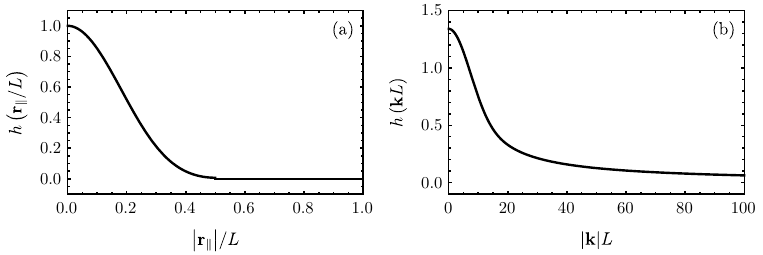}\subfloat{\label{app:fig:fft_kernel:a}}\subfloat{\label{app:fig:fft_kernel:b}}\caption{The circularly symmetric exact Blackman window function and its Fourier transform. (a) depicts the Blackman window function as defined in \cref{app:eqn:blackman_window_func}, while (b) shows its Fourier transform as given in \cref{app:eqn:blackman_to_mom}.}
	\label{app:fig:fft_kernel}
\end{figure}

Before processing, the experimental data is multiplied by a circularly symmetric exact Blackman window function~\cite{BLA59},
\begin{equation}
	\label{app:eqn:blackman_window_func}
	h \left( \frac{ \vec{r}_{\parallel}}{L} \right) = \begin{cases}
		\frac{1}{9304} \left( 4620 \cos \left( \frac{2 \pi \abs{\vec{r}_{\parallel}}}{L} \right) + 715 \cos \left( \frac{4 \pi \abs{\vec{r}_{\parallel}}}{L} \right) + 3969 \right) & \qq{for} \frac{\abs{\vec{r}_{\parallel}}}{L} \leq \frac{1}{2} \\
		0 & \qq{otherwise}
	\end{cases}.
\end{equation}
We also define the Fourier transform of the Blackman function
\begin{align}
	\tilde{h} \left( \vec{k} L \right) &= \frac{1}{L^2}\int \dd[2]{r_{\parallel}} h \left( \frac{ \vec{r}_{\parallel}}{L} \right) e^{i \vec{k} \cdot \vec{r}_{\parallel}}, \label{app:eqn:blackman_to_mom}\\
	h \left( \frac{ \vec{r}_{\parallel}}{L} \right) &= \frac{L^2}{(2 \pi)^2} \int \dd[2]{k} \tilde{h} \left( \vec{k} L \right) e^{-i \vec{k} \cdot \vec{r}_{\parallel}}. \label{app:eqn:blackman_to_real}
\end{align}
The Blackman function and its Fourier transform are illustrated in \cref{app:fig:fft_kernel}. The choice of a circular window function helps avoid introducing additional symmetry-breaking into the Fourier data due to the rectangular shape of the images. Furthermore, employing the smooth Blackman window function effectively suppresses Gibbs oscillations, which could arise from the abrupt edges of a circular step function.

After multiplying $A^{\pm}_{\text{meas}} \left( \vec{r} \right) \to A^{\pm}_{\text{meas}} \left( \vec{r} \right) h \left( \vec{r}_{\parallel} \right)$, the Blackman window function ensures that the measured CDD is zero outside the experimental region. Therefore, the measured CDD can be expressed as
\begin{equation}
	\label{app:eqn:measured_ldos_black}
	A^{\pm}_{\text{meas}} \left( \vec{r} \right) = A^{\pm} \left( T_{\pm} \vec{r}_{\parallel} + \vec{r}^{\pm}_{\parallel,0} + z \hat{\vec{z}} \right) h \left(  \frac{\vec{r}_{\parallel}}{L} \right).
\end{equation}
The filtered experimental data is then Fourier transformed using the Fast Fourier Transform method, yielding
\begin{equation}
	A^{\pm}_{\text{meas}} \left( \vec{k} \right) = \frac{1}{L^2} \int \dd[2]{r_{\parallel}} A^{\pm} \left( T_{\pm} \vec{r}_{\parallel} + \vec{r}^{\pm}_{\parallel,0} + z \hat{\vec{z}} \right) h \left(  \frac{\vec{r}_{\parallel}}{L} \right) e^{- i \vec{k} \cdot \vec{r}_{\parallel}},
\end{equation}
where the integration limit extends to infinity because the Blackman window function vanishes for $\abs{\vec{r}_{\parallel}} > \frac{L}{2}$. Using the Fourier representations of the CDD and Blackman function from \cref{app:eqn:ft_sp_func_to_real,app:eqn:blackman_to_real}, we rewrite
\begin{align}
	A^{\pm}_{\text{meas}} \left( \vec{k} \right) &= \int \dd[2]{r_{\parallel}} \sum_{\vec{Q}} \frac{1}{(2 \pi)^2} \int \dd[2]{k'} A^{\pm} \left( \vec{Q}, z \right) e^{i \vec{Q} \cdot \left( T_{\pm} \vec{r}_{\parallel} + \vec{r}^{\pm}_{\parallel,0} \right)} h \left(  \vec{k}' L \right) e^{ -i \vec{k}' \cdot \vec{r}_{\parallel}} e^{- i \vec{k} \cdot \vec{r}_{\parallel}} \nonumber \\
	&= \sum_{\vec{Q}} \int \dd[2]{k'} A^{\pm} \left( \vec{Q}, z \right) e^{i \vec{Q} \cdot \vec{r}^{\pm}_{\parallel,0}} \tilde{h} \left(  \vec{k}' L \right) \delta \left( T_{\pm}^{-1} \vec{Q} - \vec{k}' - \vec{k} \right) \nonumber \\
	&= \sum_{\vec{Q}} A^{\pm} \left( \vec{Q}, z \right) e^{i \vec{Q} \cdot \vec{r}^{\pm}_{\parallel,0}} \tilde{h} \left( L \left( T_{\pm}^{-1} \vec{Q} - \vec{k} \right) \right). \label{app:eqn:fourier_trafo_measured}
\end{align}
\Cref{app:eqn:fourier_trafo_measured} reveals that $A^{\pm}_{\text{meas}} \left( \vec{k} \right)$ represents the convolution of the exact (but skewed by $T_{\pm}$) CDD -- characterized by Dirac $\delta$-functions at the reciprocal lattice vectors $\vec{Q}$ -- and the Fourier transform of the Blackman function $\tilde{h} \left( \vec{k} L \right)$. Since $\tilde{h} \left( \vec{k} L \right)$ is sharply peaked at the origin, the measured $A^{\pm}_{\text{meas}} \left( \vec{k} \right)$ appears as an array of peaks at the skewed reciprocal lattice vectors $T^{-1}_{\pm} \vec{Q}$. Specifically, we note that because $\abs{L T_{\pm}^{-1} \vec{b}_{1}} \sim \frac{2\pi L}{\abs{\vec{a}_1}} \sim 55$ and, consequently, $\frac{\tilde{h} \left( L T_{\pm}^{-1} \vec{b}_{1} \right)}{h \left( \vec{0} \right)} \sim 0.08 \ll 1$, we can approximate the height of each peak by
\begin{equation}
	\label{app:eqn:measured_CDD_approximation}
	A^{\pm}_{\text{meas}} \left( T_{\pm}^{-1} \vec{Q}' \right) = \sum_{\vec{Q}} A^{\pm} \left( \vec{Q}, z \right) e^{i \vec{Q} \cdot \vec{r}^{\pm}_{\parallel,0}} \tilde{h} \left( L T_{\pm}^{-1} \left(  \vec{Q} - \vec{Q}' \right) \right) \approx A^{\pm} \left( \vec{Q}', z \right) e^{i \vec{Q}' \cdot \vec{r}^{\pm}_{\parallel,0}} \tilde{h} \left( \vec{0} \right).
\end{equation}
This approximation enables us to infer the \emph{true} CDD from the measured CDD as
\begin{equation}
	\label{app:eqn:infer_CDD_from_experiment}
	A^{\pm} \left( \vec{Q}, z \right) \approx \frac{1}{\tilde{h} \left( \vec{0} \right)} A^{\pm}_{\text{meas}} \left( T_{\pm}^{-1} \vec{Q} \right) e^{-i \vec{Q} \cdot \vec{r}^{\pm}_{\parallel,0}}.
\end{equation}

\subsubsection{Determining the unit cell lattice vectors in the experimental data}\label{app:sec:experimental:extract_CDD:lattice_vect}

In practice, however, the affine transformation corresponding to the experimental measurements is \emph{not} known, meaning that both $T_{\pm}$ and $\vec{r}^{\pm}_{\parallel,0}$ must also be inferred from the experimental data. To illustrate this process, let $T_{\pm}$ and $\vec{r}^{\pm}_{\parallel,0}$ represent the ``true'' affine transformation, while $T'_{\pm}$ and $\vec{r}^{\prime,\pm}_{\parallel,0}$ denote the affine transformation inferred from the STM measurements.

The linear transformation $T'_{\pm}$ can be determined by analyzing $A^{\pm}_{\text{meas}} \left( \vec{k} \right)$, which exhibits sharp peaks at $\vec{k} = T^{-1}_{\pm} \vec{Q}$ (or equivalently at $\vec{k} = \left( T^{\prime}_{\pm} \right)^{-1} \vec{Q}$), where $\vec{Q} \in \mathbb{Z} \vec{b}_1 + \mathbb{Z} \vec{b}_2$. This relationship implies that for any $m,n \in \mathbb{Z}$, there exist integers $m',n' \in \mathbb{Z}$ satisfying
\begin{equation}
	T^{-1}_{\pm} \left( m \vec{b}_1 + n \vec{b}_2 \right) = \left( T^{\prime}_{\pm} \right)^{-1} \left( m' \vec{b}_1 + n' \vec{b}_2 \right),
\end{equation}
or, alternatively,
\begin{equation}
	\label{app:eqn:equation_for_trafo_stm}
	T^{\prime}_{\pm} T^{-1}_{\pm} \left( m \vec{b}_1 + n \vec{b}_2 \right) = m' \vec{b}_1 + n' \vec{b}_2. 
\end{equation}
By assumption, both $T_{\pm}$ and $T^{\prime}_{\pm}$ are close to rotations, implying that $T^{\prime}_{\pm} T^{-1}_{\pm}$ is also approximately a rotation ({\it i.e.}{}, its determinant is close to 1). \Cref{app:eqn:equation_for_trafo_stm} admits six solutions
\begin{equation}
	T^{\prime}_{\pm} T^{-1}_{\pm} = \begin{pmatrix}
		\cos \left( \frac{2 \pi k}{6} \right) & -\sin \left( \frac{2 \pi k}{6} \right) \\
		\sin \left( \frac{2 \pi k}{6} \right) & \cos \left( \frac{2 \pi k}{6} \right) \\
	\end{pmatrix}, \qq{for} k \in \mathbb{Z}, 0 \leq k < 6.
\end{equation}
However, due to the $C_{3z}$ symmetry of \ch{NbSe2}, only two of these solutions are distinct, specifically
\begin{equation}
	T^{\prime}_{\pm} = T_{\pm} \qq{or}
	T^{\prime}_{\pm} = - T_{\pm}.
\end{equation}
Thus, by examining the peak positions in $A^{\pm}_{\text{meas}} \left( \vec{k} \right)$, the transformation $T^{\prime}_{\pm}$ can only be determined up to an overall sign. This sign ambiguity implies that the lattice vectors derived from the experimental data may either coincide with the conventional ones or differ by a minus sign. Since \ch{NbSe2} lacks $C_{2z}$ symmetry, this sign is significant and must be resolved before fitting the STM data.

\subsubsection{Determining the unit cell origin in the experimental data}\label{app:sec:experimental:extract_CDD:unit_cell_origin}

Before discussing how the sign of $T^{\prime}_{\pm}$ is determined, we consider the origin of the experimental STM data, namely $\vec{r}^{\pm}_{\parallel,0}$. To obtain this, we first note that \ch{NbSe2} is $C_{3z}$ symmetric, as discussed in \cref{app:sec:NbSe2_ab_initio:crystal_struct}. Consequently, from \cref{app:eqn:sym_constraint_FFT_CDD}, the Fourier transform of the true CDD must satisfy $A^{\pm} \left( \vec{Q}, z \right) = A^{\pm} \left( C_{3z} \vec{Q}, z \right)$. This symmetry condition can be applied to the largest nontrivial Fourier component of the measured STM data ({\it i.e.}{}, the first harmonic) to determine $\vec{r}^{\pm}_{\parallel,0}$. 

Defining 
\begin{equation}
	\vec{Q}_{n} = C^{n}_{3z} \vec{b}_{1}, \qq{for} n=0,1,2,
\end{equation}
we find from \cref{app:eqn:infer_CDD_from_experiment} that
\begin{equation}
	\label{app:eqn:system_of_eqn_for_r0_and_arg}
	\arg \left( A^{\pm} \left( \vec{b}'_1, z \right) \right) = \arg \left( A^{\pm}_{\text{meas}} \left( \left( T'_{\pm} \right)^{-1} \vec{Q}_n \right) \right) - \vec{Q}_n \cdot \vec{r}^{\pm}_{\parallel,0} \mod \left( 2 \pi \right), \qq{for} n=0,1,2,
\end{equation}
where 
\begin{equation}
	\vec{b}'_1 =\begin{cases}
		\vec{b}_1 & \qq{if} T'_{\pm} = T_{\pm} \\
		- \vec{b}_1 & \qq{if} T'_{\pm} = - T_{\pm}
	\end{cases}.
\end{equation}
These equations provide three distinct solutions for $\arg \left( A^{\pm} \left( \vec{b}'_1, z \right) \right)$ and the two components of $\vec{r}^{\pm}_{\parallel,0}$. Specifically, we deduce for $\arg \left( A^{\pm} \left( \vec{b}'_1, z \right) \right)$
\begin{equation}
	3 \arg \left( A^{\pm} \left( \vec{b}'_1, z \right) \right) = \sum_{n=0}^2 \arg \left( A^{\pm}_{\text{meas}} \left( \left( T'_{\pm} \right)^{-1} \vec{Q}_n \right) \right) - \left( \sum_{n=0}^2 \vec{Q}_n \right) \cdot \vec{r}^{\pm}_{\parallel,0},
\end{equation}
which implies three distinct solutions for $\arg \left( A^{\pm} \left( \vec{b}'_1, z \right) \right)$
\begin{equation}
	\arg \left( A^{\pm} \left( \vec{b}'_1, z \right) \right) = \frac{1}{3} \sum_{n=0}^2 \arg \left( A^{\pm}_{\text{meas}} \left( \left( T'_{\pm} \right)^{-1} \vec{Q}_n \right) \right) + \frac{2 \pi k}{3}, \qq{for} k = 0, 1, 2,
\end{equation}
which, in turn, result in three different solutions for $\vec{r}^{\pm}_{\parallel,0}$. Assuming $\vec{r}^{\pm}_{\parallel,0} = \beta^{\pm}_1 \vec{a}_1 + \beta^{\pm}_2 \vec{a}_2$, the first two equations in \cref{app:eqn:system_of_eqn_for_r0_and_arg} yield 
\begin{align}
	\arg \left( A^{\pm} \left( \vec{b}'_1, z \right) \right) &= \arg \left( A^{\pm}_{\text{meas}} \left( \left( T'_{\pm} \right)^{-1} \vec{Q}_0 \right) \right) - 2 \pi \beta^{\pm}_1, \\
	\arg \left( A^{\pm} \left( \vec{b}'_1, z \right) \right) &= \arg \left( A^{\pm}_{\text{meas}} \left( \left( T'_{\pm} \right)^{-1} \vec{Q}_1 \right) \right) - 2\pi \left( \beta^{\pm}_2 - \beta^{\pm}_1 \right).
\end{align}
These equations can be solved for the unit cell origin coordinates
{\small
\begin{align}
	\beta^{\pm}_1 &= \frac{1}{2 \pi} \left( - \frac23 \arg \left( A^{\pm}_{\text{meas}} \left( \left( T'_{\pm} \right)^{-1} \vec{Q}_0 \right) \right) + \frac13 \arg \left( A^{\pm}_{\text{meas}} \left( \left( T'_{\pm} \right)^{-1} \vec{Q}_1 \right) \right) + \frac13 \arg \left( A^{\pm}_{\text{meas}} \left( \left( T'_{\pm} \right)^{-1} \vec{Q}_2 \right) \right) \right) + \frac{2 k}{3},  \label{app:eqn:origin_fft_choice_beta_1}\\
	\beta^{\pm}_2 &= \frac{1}{2 \pi} \left( - \frac13 \arg \left( A^{\pm}_{\text{meas}} \left( \left( T'_{\pm} \right)^{-1} \vec{Q}_0 \right) \right) - \frac13 \arg \left( A^{\pm}_{\text{meas}} \left( \left( T'_{\pm} \right)^{-1} \vec{Q}_1 \right) \right) + \frac23 \arg \left( A^{\pm}_{\text{meas}} \left( \left( T'_{\pm} \right)^{-1} \vec{Q}_2 \right) \right) \right) + \frac{2 k}{3}. \label{app:eqn:origin_fft_choice_beta_2}
\end{align}}This shows that three distinct solutions exist for the unit cell origin as determined from the experiment. 

\subsubsection{Fixing the unit cell in the experimental data}\label{app:sec:experimental:extract_CDD:fix_unit_cell}

The analysis presented above can be summarized more intuitively as follows:
\begin{itemize}
	\item By analyzing the periodicity of the experimentally measured CDD, the lattice vectors of the sample can be identified, albeit only up to an overall minus sign.
	\item The origin of the unit cell corresponds to one of the three $C_{3z}$-symmetric sites that can be identified in the experimentally measured CDD. These three choices for the unit cell origin correspond to the three distinct solutions for $\vec{r}^{\pm}_{\parallel,0}$ derived in \cref{app:eqn:origin_fft_choice_beta_1,app:eqn:origin_fft_choice_beta_2}.
\end{itemize}
Given the experimentally measured CDD, there are six possible unit cell choices, out of which only one corresponds to the conventional unit cell described in \cref{app:sec:NbSe2_ab_initio:crystal_struct} and employed throughout this work. These six choices arise from the possible assignments of the three $C_{3z}$-symmetric sites of \ch{NbSe2} from \cref{app:eqn:c3z_pos_nbse2} to the three $C_{3z}$-symmetric points of the measured CDD ({\it i.e.}{}, there are $3! = 6$ permutations of the sites).

The six unit cell choices are not equivalent. Since the atomic positions are not directly observable in STM experiments, we rely on the spectral weight hierarchy discussed in \cref{app:sec:NbSe2_ab_initio:dft_simulations} to assign the $C_{3z}$-symmetric sites of the unit cell ({\it i.e.}{}, $1a$, $1b$, and $1c$) to the corresponding points in the measured CDD. The simulated CDD of the \ch{NbSe2} quasi-flat band shown in \cref{app:fig:DFT-CDD-diffH} consistently indicates that the $1b$ (Se) site is the ``brightest'', followed by the $1c$ (empty) site, with the $1a$ (Nb) site being the ``darkest''. This brightness hierarchy is observed across sample-tip distances $z/\si{\angstrom} \in \left[2, 5\right]$ and enables the precise identification of the conventional \ch{NbSe2} unit cell in the experimental CDD.

This identification is further validated in \cref{fig:validate_pos} of the main text, where we compare the local density of states (LDOS) of the three $C_{3z}$-symmetric sites at two bias voltages: one near the Fermi level and another at $V \approx \SI{-2}{\volt}$. \textit{Ab initio} simulations show that at both bias voltages, the $1b$ (Se) site is the brightest due to the proximity of the Se atoms to the STM tip. Around the Fermi level, tunneling primarily occurs into the quasi-flat band, which forms an obstructed atomic limit with a Wannier center at the $1c$ (empty) site, making the $1c$ site brighter than the $1a$ (Nb) site. At lower bias voltages $V \approx \SI{-2}{\volt}$, tunneling occurs into both the mirror-even bands with Wannier centers at the $1a$ (Nb) site and the mirror-odd bands with Wannier centers at the $1c$ (empty) site (see \cref{app:fig:DFTbands}). As a result, the LDOS at the $1a$ (Nb) and $1c$ (empty) sites become comparable. This behavior, predicted by the simulated LDOS at different energies in \cref{app:fig:DFT-CDD-diffE}, is also reproduced experimentally.

In the experiment, we measure conductance $\dv{I}{V}$ maps over the same sample region at two different bias voltages: one near \SI{0}{\volt} and another near \SI{-2}{\volt}. According to the Tersoff-Hamann approximation in \cref{app:sec:theory_stm:TH_approx:approximation}, the conductance at a bias voltage $V$ is proportional to the sample's LDOS at energy $\omega = \abs{e}V$. Using the data at $V \approx \SI{0}{\volt}$, we identify the three $C_{3z}$-symmetric sites based on the brightness hierarchy $1b \text{ (Se)}> 1c \text{ (empty)} > 1a \text{ (Nb)}$. Since the atomic positions remain fixed across the two bias voltages, this identification holds for both measurements. The experimental results, shown in \cref{fig:validate_pos:c}, agree with the simulated data in \cref{app:fig:DFT-CDD-diffE}.

In summary, we identify the atomic positions in the STM data based on the simulated CDD from DFT for the quasi-flat band, and this identification is corroborated by conductance measurements at lower bias voltages.

\subsection{Fitting the correlation function from experimental data}\label{app:sec:experimental:fit}

Once the conventional unit cell has been identified in the experimental data, we proceed to extract the correlation function $\rho_{ij} \left( \Delta \vec{R} \right)$ from the Fourier transformation of the quasi-flat band CDD, given by $A \left( \vec{Q}, z_{\text{meas}} \right) = A^+ \left( \vec{Q}, z_{\text{meas}} \right) + A^- \left( \vec{Q}, z_{\text{meas}} \right)$. Here, $z_{\text{meas}}$ denotes the \emph{unknown} tip-sample distance, distinguishing it from the \emph{assumed} tip-sample distance $z$.

As outlined in the main text, we fit only three primary matrix elements of the orbital correlation function (and their symmetry-related counterparts), denoted as $O_i$ (for $0 \leq i \leq 2$): $O_0 = \rho_{11} \left( \vec{0} \right)$, $O_1 = \rho_{12} \left( \vec{0} \right)$, and $O_2 = \rho_{23} \left( \vec{a}_1 \right)$. These are selected while approximating all other elements to zero. The inclusion of $O_0$ and $O_2$ is motivated by their dominance according to the \textit{ab initio}{} simulations presented in \cref{app:tab:OPs-3band-model}. Although $O_1$ is smaller, it is included due to its relevance to the correlator associated with the fictitious UA phase of the \ch{NbSe2} quasi-flat band, as detailed in \cref{app:sec:NbSe2_ab_initio:wanniers:limits}. Further-range correlators are excluded to mitigate risks of overfitting.

We employ the constrained minimization procedure from \cref{app:eqn:general_constraint_minimization} to fit the correlators $O_i$ (for $0 \leq i \leq 2$). However, the constraint in \cref{app:eqn:general_constraint_minimization} can be eliminated through an efficient parameterization of the rescaled correlation function $\tilde{\rho}_{ij} \left( \Delta \vec{R} \right)$. Using the normalization conditions from \cref{app:eqn:normalization_rho_linear,app:eqn:normalization_rho_square} and assuming that correlators other than $O_0$, $O_1$, and $O_2$ are negligible, we derive the following explicit constraints:
\begin{equation}
	\label{app:eqn:constraint_nbse2_explicit}
	3 O_0 = 1 \qq{and}
	3 O_0^2 + 6 O_1^2 + 6 O_2^2 = 1.
\end{equation}
The space of solutions that satisfy these constraints can be efficiently parameterized as
\begin{equation}
	O_0 = \frac{1}{3}, \quad
	O_1 = \frac{1}{3} \cos \theta, \quad
	O_2 = \frac{1}{3} \sin \theta,
\end{equation}
indicating that the nonzero elements of the \emph{rescaled} correlation function can be expressed as
\begin{equation}
	\label{app:eqn:rescaled_rho_parameterization}
	\tilde{\rho}_{11} \left( \vec{0} \right) = \frac{\gamma}{3}, \quad
	\tilde{\rho}_{12} \left( \vec{0} \right) = \frac{\gamma}{3} \cos \theta, \quad
	\tilde{\rho}_{23} \left( \vec{0} \right) = \frac{\gamma}{3} \sin \theta,
\end{equation}
with $\gamma > 0$. All other elements of $\tilde{\rho}_{ij} \left( \Delta \vec{R} \right)$ that are not symmetry-related to the ones in \cref{app:eqn:rescaled_rho_parameterization} are set to zero.

Using the efficient parameterization in \cref{app:eqn:rescaled_rho_parameterization}, the constrained minimization problem in \cref{app:eqn:general_constraint_minimization} becomes an unconstrained one
\begin{equation}
	\label{app:eqn:constraint_minimization_nbse2}
	\min_{\gamma,\theta} \sum_{\vec{Q}} \abs{ A \left( \vec{Q}, z_{\text{meas}} \right) - \sum_{\Delta \vec{R}, i, j} B_{ij} \left( \vec{Q}, z, \Delta \vec{R} \right) \tilde{\rho}_{ij} \left( \Delta \vec{R} \right)}^2 .
\end{equation}
Finally, after normalizing according to \cref{app:eqn:normalization_rho_linear}, the orbital correlation function associated with the \ch{NbSe2} quasi-flat band is obtained. In the experiment, the accurate value of the tip-sample distance, denoted by $z_{\text{meas}}$ in \cref{app:eqn:constraint_minimization_nbse2}, is not known. Therefore, we treat the \emph{assumed} tip-sample distance $z$ as an additional fitting parameter. As explained in the main text, we fit the orbital correlation function assuming different tip-sample distances $z$ and then select the result with the minimal error, as detailed below. 

For a given assumed height $z$ and the corresponding fitted rescaled correlation function $\tilde{\rho}_{ij} \left( \Delta \vec{R} \right)$, we define two metrics for the fitting error. The first metric quantifies the error between the experimentally measured $A \left( \vec{Q}, z_{\text{meas}} \right)$ and the computed $A \left( \vec{Q}, z \right)$ using the fitted $\tilde{\rho}_{ij} \left( \Delta \vec{R} \right)$
\begin{equation}
	\epsilon = \sqrt{ \frac{ \sum_{\vec{Q}} \abs{ A \left( \vec{Q}, z_{\text{meas}} \right) - \sum_{\Delta \vec{R}, i, j} B_{ij} \left( \vec{Q}, z, \Delta \vec{R} \right) \tilde{\rho}_{ij} \left( \Delta \vec{R} \right)}^2}{\sum_{\vec{Q}} \abs{ A \left( \vec{Q}, z_{\text{meas}} \right)}^2}}.
\end{equation}
Since some symmetry-breaking effects exist in the experimental data while the fitted CDD is necessarily symmetric, it is useful to introduce an error metric that isolates the fitting error \emph{without} accounting for the experimental symmetry-breaking. To achieve this, we first symmetrize the measured CDD $A \left( \vec{Q}, z_{\text{meas}} \right)$. For each group of reciprocal lattice vectors $\left\lbrace \vec{Q}_i \right\rbrace$ related by the unitary crystalline symmetries of \ch{NbSe2} (namely $C_{3z}$ and $m_x$), the \emph{symmetrized} CDD is computed as
\begin{equation}
	A_{\text{sym}} \left( \vec{Q}_i, z_{\text{meas}} \right) = \frac{\sum_{i=1}^{N_{\vec{Q}}} A \left( \vec{Q}_i, z_{\text{meas}} \right) }{N_{\vec{Q}}}, \qq{for} 1 \leq i \leq N_{\vec{Q}},
\end{equation}
where $N_{\vec{Q}}$ is the number of symmetry-related reciprocal lattice vectors in the group $\left\lbrace \vec{Q}_i \right\rbrace$. Using this symmetrized CDD, we define the second error metric
\begin{equation}
	\epsilon_{\text{sym}} = \sqrt{ \frac{ \sum_{\vec{Q}} \abs{ A_{\text{sym}} \left( \vec{Q}, z_{\text{meas}} \right) - \sum_{\Delta \vec{R}, i, j} B_{ij} \left( \vec{Q}, z, \Delta \vec{R} \right) \tilde{\rho}_{ij} \left( \Delta \vec{R} \right)}^2}{\sum_{\vec{Q}} \abs{ A_{\text{sym}} \left( \vec{Q}, z_{\text{meas}} \right)}^2}}.
\end{equation}
The error between the fitted CDD and the unsymmetrized experimental data is relatively flat as a function of $z$. Therefore, we employ the minimum value of $\epsilon_{\text{sym}}$ to determine the correct tip height, as well as the corresponding fitted correlation function. 
 
\begin{table}[t]
	\centering
	\begin{tabular}{|c|c|c|c|c|c|c|}
		\hline
		Tip & Set & $\left(V_-, V_+ \right) / \si{\volt}$ & fitted $z / \si{\angstrom}$ & $O_0$ & $O_1$ & $O_2$  \\ 
		\hline
		\multirow{8}{*}{1} & 1 & $(-0.320,  0.60)$ &  3.30 &      0.333 &      0.014 &      0.333 \\ 
		& 2 & $(-0.320,  0.60)$ &  3.30 &      0.333 &      0.018 &      0.333 \\ 
		& 3 & $(-0.320,  0.60)$ &  3.30 &      0.333 &      0.021 &      0.333 \\ 
		& 4 & $(-0.320,  0.60)$ &  3.30 &      0.333 &      0.020 &      0.333 \\ 
		& 5 & $(-0.320,  0.60)$ &  3.30 &      0.333 &      0.019 &      0.333 \\ 
		& 6 & $(-0.320,  0.60)$ &  3.30 &      0.333 &      0.018 &      0.333 \\ 
		& 7 & $(-0.200,  0.30)$ &  3.30 &      0.333 &      0.022 &      0.333 \\ 
		& 8 & $(-0.200,  0.10)$ &  3.30 &      0.333 &      0.015 &      0.333 \\ \hline
		\multirow{3}{*}{2} & 1 & $(-0.320,  0.62)$ &  3.30 &      0.333 &      0.124 &      0.309 \\ 
		& 2 & $(-0.250,  0.55)$ &  3.25 &      0.333 &      0.129 &      0.307 \\ 
		& 3 & $(-0.150,  0.25)$ &  3.25 &      0.333 &      0.132 &      0.306 \\ \hline 
		\multirow{8}{*}{3} & 1 & $(-0.320,  0.63)$ &  3.30 &      0.333 &      0.034 &      0.332 \\ 
		& 2 & $(-0.320,  0.63)$ &  3.25 &      0.333 &      0.032 &      0.332 \\ 
		& 3 & $(-0.320,  0.62)$ &  3.25 &      0.333 &      0.030 &      0.332 \\ 
		& 4 & $(-0.320,  0.62)$ &  3.30 &      0.333 &      0.030 &      0.332 \\ 
		& 5 & $(-0.320,  0.62)$ &  3.30 &      0.333 &      0.032 &      0.332 \\ 
		& 6 & $(-0.250,  0.42)$ &  3.25 &      0.333 &      0.036 &      0.331 \\ 
		& 7 & $(-0.150,  0.25)$ &  3.25 &      0.333 &      0.043 &      0.331 \\ 
		& 8 & $(-0.050,  0.05)$ &  3.70 &      0.333 &      0.080 &      0.324 \\ 
		\hline
	\end{tabular}
	\caption{Summary of the fitted tip distances $z$ and the inter-orbital correlators $O_{i}$ (for $1 \leq i \leq 3$) derived from all experimental data sets measured in this work. For each STM tip, multiple sets of measurements were performed (each at both positive and negative bias voltages) across different regions of the sample. Additionally, measurements where the bias voltages bracket only part of the quasi-flat band are included, as these also align with the OA nature of the quasi-flat band.}
	\label{app:tab:fitting_results}
\end{table}

Experimentally, we perform multiple measurements using three distinct STM tips, across various regions of the sample. The results are summarized in \cref{app:tab:fitting_results}, where we tabulate the best-fitting results ({\it i.e.}{}, those with the smallest fitting error) for each data set. All measurements consistently yield inter-orbital correlators $O_1 \approx 0$ and $O_2 \approx \frac{1}{3}$, confirming the OA nature of the \ch{NbSe2} quasi-flat band, as detailed in \cref{app:sec:NbSe2_ab_initio:wanniers:limits}. The fitted tip distances $z$ range between $\SI{3}{\angstrom}$ and $\SI{4}{\angstrom}$. These findings unambiguously demonstrate that the inter-orbital correlation functions extracted from the experimental measurements validate the obstructed nature of the quasi-flat band in \ch{NbSe2}.

\end{document}